\newtheorem{definition}{Definition}
\newtheorem{theorem}{Theorem}
\newtheorem{proposition}{Proposition}
\newtheorem{lemma}{Lemma}
\newtheorem{corollary}{Corollary}
\newtheorem{remark}{Remark}
\crefname{section}{}{\S\S}
\newcommand{\gt}[1]{{\color{blue} #1}}
\begin{document}
%
 \title{Node-Constrained Traffic Engineering:\\Theory and Applications}

%
%

\author{George~Trimponias,
	Yan~Xiao,
	Xiaorui~Wu,
	Hong~Xu,
	and~Yanhui~Geng
	 \thanks{{\color{magenta}This manuscript is the extended version of the
	 IEEE/ACM ToN paper ``Node-Constrained Traffic Engineering: Theory and Applications'' with the addition of Remarks \ref{remark-1}, \ref{remark-2}, \ref{remark-3} in \cref{sec:directed}, Remark \ref{remark:undirected} in \cref{sec:undirected}, and Remark \ref{remark-4} in \cref{sec:group-centrality}.}}
	 \thanks{The work was supported in part by contract research between City
	 University of Hong Kong and Huawei (project no. 9231208) and a CRF grant
	 from the Research Grants Council of the HKSAR, China (C7036-15G). The
	 corresponding author is Hong Xu.}
	 \thanks{G. Trimponias is with Huawei Noah's Ark Lab, Hong Kong (email: \href{mailto:g.trimponias@huawei.com}{g.trimponias@huawei.com}). }
	 \thanks{Y. Xiao,  X. Wu and H. Xu are with Department of Computer Science, City University of Hong Kong, Hong Kong (email: \href{mailto:yanxiao6-c@my.cityu.edu.hk}{yanxiao6-c@my.cityu.edu.hk}, \href{mailto:xiaorui.wu@my.cityu.edu.hk}{xiaorui.wu@my.cityu.edu.hk}, \href{mailto:henry.xu@cityu.edu.hk}{henry.xu@cityu.edu.hk}).}
	  \thanks{Y. Geng is with Huawei Montreal Research Center, Canada (email: \href{mailto:geng.yanhui@huawei.com}{geng.yanhui@huawei.com}). }
}
%
%

\markboth{IEEE/ACM Transactions on Networking, 2019}%
{Trimponias \MakeLowercase{\textit{et al.}}: Node-Constrained Traffic Engineering: Theory and Applications}
%



\maketitle


\begin{abstract}
Traffic engineering (TE) is a fundamental task in networking. 
Conventionally, traffic can take any path connecting the source and destination. 
Emerging technologies such as segment routing, however, use logical paths going through a predetermined set of middlepoints. 
Inspired by this, in this work we introduce the problem of node-constrained TE, where traffic must go through a set of middlepoints, and study its theoretical fundamentals. 
We show that the general node-constrained TE that constrains the traffic to take paths going through one or more middlepoints is NP-hard for directed graphs but strongly polynomial for undirected graphs, unveiling a profound dichotomy between the two cases. 
We additionally investigate the popular variant of node-constrained TE that uses shortest paths between middlepoints, and show that the problem can now be solved in weakly polynomial time for a fixed number of middlepoints. 
Yet if we constrain the end-to-end paths to be acyclic, the problem can become NP-hard.
This explains why existing work focuses on the computationally tractable variant. 
An important application of our work concerns the computational complexity of flow centrality, first proposed in 1991 by Freeman et al. \cite{fbw1991}: we show that it is NP-hard for directed but strongly polynomial for undirected graphs.
Finally, we investigate the middlepoint selection problem in general node-constrained TE. 
We 
introduce group flow centrality as a solution concept for multi-commodity networks, study its complexity, and show that it is monotone but not submodular for both directed and undirected graphs.
Our work provides a thorough theoretical treatment of node-constrained TE and its applications.

\end{abstract}

\begin{IEEEkeywords}
Traffic Engineering, Node-Constrained, Segment Routing, Flow Centrality, Group Maximum Flow
\end{IEEEkeywords}

\section{Introduction}
\label{sec:intro}

Traffic engineering (TE) is an important task for network operators to improve network efficiency and application performance. TE is exercised in a wide range of networks, from carrier networks \cite{HVSB15,FT00} to data center backbones \cite{JKMO13,HKMZ13}. Increasingly, TE is implemented using SDN (Software Defined Networking) given its flexibility. Notable examples include Google's B4 \cite{JKMO13} and Microsoft's SWAN \cite{HKMZ13}. 
Implementing TE in the data plane requires a large number of flow table entries on switches. This is because each switch on the path needs to have an entry per demand, i.e. ingress-egress switch pair, to forward its traffic to the next hop, and for a large-scale network there can be many demands \cite{JKMO13,HKMZ13}. 

Segment routing \cite{segment1, segment2, segment3} is a recently proposed routing architecture to tackle this challenge. 
Its key idea is to perform routing based on a sequence of logical segments formed by a set of {\em middlepoints}\footnote{This term is in accordance with the prior literature on segment routing.} between the ingress and egress nodes. A segment is the logical pipe between two middlepoints that may include multiple physical paths spanning multiple hops, and static hashing is used to load balance traffic among these paths. To simplify, usually only shortest paths are used between two middlepoints.
Now with segment routing, instead of end-to-end paths, intermediate switches only need to know how to reach middlepoints in order to forward packets. 
This can greatly reduce the overhead and cost of TE \cite{bhatia2015optimized,HVSB15}. 

TE with segment routing is different from traditional TE, where the traffic from a source to a destination can use any path. This motivates us to introduce the class of \textit{node-constrained} TE, which includes any TE variant where the traffic is constrained to go through one or more predetermined middlepoints. Segment routing corresponds to a specific variant of node-constrained TE, which only uses shortest paths between the middlepoints. 

The general node-constrained routing is important for TE.
First, like segment routing, the use of middlepoints saves precious flow table resources and reduces the overhead of implementing TE with finer granularity flow control. 
Second, the use of shortest paths in segment routing may limit the TE performance and robustness. 
Some shortest paths may involve the same link which degrades the throughput one can use effectively. 
Perhaps more importantly, operators prefer edge-disjoint paths over shortest paths for better diversity and robustness in cases of link failures \cite{LKMZ14}, where failover can be done by routing through at least the remaining paths. 
Last but not least, the broader and fundamental node-constrained TE problem has not received much attention in the networking community, despite its significant application potential.

Inspired by the above, in this work we investigate the theoretical fundamentals of node-constrained TE. 
We consider two common types of TE introduced in \cref{sec:TE} depending on the objective: $TE_{MF}$ maximizes the total throughput based on multi-commodity flow, while $TE_{LU}$ minimizes the maximum link utilization. 
The two types are closely related.

Our analysis is organized in three parts.
We start in \cref{sec:theory} with the most general node-constrained TE problem, where traffic can take any path as long as that path goes through a set of middlepoints.
For directed graphs, we prove that the decision version of $TE_{MF}$ is NP-hard, even with just a single middlepoint. 
Due to the connection between the decision version of maximum flow and $TE_{LU}$, this implies that $TE_{LU}$ is NP-hard too. 
For undirected graphs, we show that TE is strongly polynomial, since it can be equivalently written as a special linear program via a polynomial transformation. 
Therefore, we establish that node-constrained TE is NP-hard and difficult to solve optimally in general, as most TE problems use directed graphs to model bidirectional links and traffic.

Given the hardness results, we next investigate in \cref{sec:opt} a variant of node-constrained TE with shortest paths,  where the traffic uses only shortest paths between any two middlepoints. 
We wish to see if this variant,  inspired by segment routing, makes the TE problem easier to solve. 
We prove that both $TE_{MF}$ and $TE_{LU}$ can now be solved in weakly polynomial time by transforming them into linear programs, when the number of middlepoints per path is fixed. 
Our results thus provide a theoretical foundation for existing work that focuses on shortest path based segment routing \cite{bhatia2015optimized,HVSB15} and not the more general variant. 
We further note that this variant may end up with end-to-end paths that contain cycles, since the various segments may repeat the same edge. 
Cyclic paths are clearly bad for TE as the precious WAN bandwidth is wasted sending traffic back and forth. 
For this reason, we study a different variant that requires acyclic end-to-end paths, which is a specific case of node-constrained TE with shortest paths. 
We show that imposing this constraint generally renders TE NP-hard again. 

Lastly, we study another fundamental and practical question in node-constrained TE: how to select the middlepoints that yield good TE performance?
We investigate \textit{flow centrality} as a potential solution approach to this problem in \cref{sec:flowcentrality}.
Flow centrality, first introduced in 1991 \cite{fbw1991}, determines how important a node is in terms of the percentage of the maximum flow that can go through that node over all possible demands. 
It serves as a natural criterion for middlepoint selection in the general node-constrained TE, i.e. we can select as middlepoints the top-$k$ nodes with the highest flow centrality. 

Our analysis implies that the flow centrality is NP-hard to compute in directed graphs unlike other common centrality concepts (\cref{sec:related}), but strongly polynomial in undirected graphs. 
Furthermore, since flow centrality only concerns individual nodes, we propose {\em group flow centrality}, which generalizes flow centrality to a group of nodes in order to better solve the middlepoint selection problem.
We introduce the related concept of $N$-group maximum flow, which corresponds to the problem of determining a set of middlepoints that maximizes the amount of flow that can go through any node in the set.
We show it is NP-hard; furthermore, unlike other common group graph centralities it is monotone but not submodular in both directed or undirected graphs, which implies that the standard greedy algorithm \cite{nwf1978} with $(1-\frac{1}{e})$-approximation ratio is not applicable.


We make several contributions in this paper. 
\begin{itemize}
	\item 
	We provide the first systematic study of node-constrained TE, which lays down the groundwork for understanding its theoretical fundamentals. 
	Our analysis shows that node-constrained TE is generally NP-hard hard for directed graphs, except for the variant that only uses shortest paths between middlepoints. 
	Our study further touches on the middlepoint problem in node-constrained TE.
	We study flow centrality and propose group flow centrality to select the best set of $N$ nodes that maximizes the total flow, and analyze the computational complexity.

	\item
	Our theoretical results shed light on the development of the emerging node-constrained TE in practice.
	Our hardness results indicate that efficient heuristics or approximation algorithms are in urgent need in several cases, including general node-constrained TE problems with directed graphs which most TE problems use, and variants of node-constrained TE with acyclic paths. 
	Middlepoint selection is another promising area for future work.
	More study is needed to make the graph theoretical approach feasible, for both flow centrality 
for individual nodes and group flow centrality which has not been well explored.

	\item
	Finally, some of our results are interesting in their own right in the corresponding theoretical contexts.
	For example, in \cref{sec:theory} we unveil a dichotomy between the directed and undirected cases in terms of the complexity of node-constrained TE.
	A similar dichotomy is found in \cref{sec:flowcentrality}, where we prove that flow centrality, a previously introduced but little understood graph centrality concept, is NP-hard to compute in directed graphs but strongly polynomial in undirected graphs.
\end{itemize}

\section{Background on Traffic Engineering}
\label{sec:TE}

We first introduce some background on traffic engineering (TE) in this section.
In our work, we focus on two common types of TE depending on the objective criterion. The first maximizes the total throughput subject to the capacity and maximum demand constraints. Since it can be formulated as a maximum flow problem, we call it $TE_{MF}$. The second type minimizes the maximum link utilization, which acts as the system bottleneck. For this reason, we call it $TE_{LU}$. 

The rest of this section is organized as follows. 
We introduce some preliminary notions and concepts in \cref{sec:TE-prelim}. 
We then present $TE_{MF}$ in \cref{sec:TE-MCF} and $TE_{LU}$ in \cref{sec:TE-LU}. 
Lastly we show an interesting connection between the decision version of $TE_{MF}$ and the optimal solution to $TE_{LU}$ in \cref{sec:connection}.

\subsection{Preliminaries}
\label{sec:TE-prelim}

Assume a directed graph $G=(V,E)$, where $V$ is the set of nodes and $E$ the set of directed edges. Given a node $v\in V$, $v^+$ denotes the set of outgoing edges of node $v$, i.e., the subset of edges in $E$ of the form $(v,u)$, $u\in V$. Similarly, the set $v^-$ denotes the set of incoming edges of $v$ of the form $(u,v)$, $u\in N$. The out-degree of $v$ is defined as the cardinality $|v^+|$, whereas the in-degree is defined as the cardinality $|v^-|$.

A \textit{flow network} $G=(V,E,c)$ is defined as a directed graph $G=(V,E)$, together with a non-negative function $c:V\times V\rightarrow\mathbb{R}_{\geq 0}$ that assigns to each edge $e\in E$ a non-negative capacity $c(e)$. If $(u,v)\not\in E$, then we define $c(u,v)=0$.

A \textit{walk} in a directed graph is an alternating sequence of vertices and edges, $v_0$, $e_0$, $v_1$, $\dots$, $v_{k-1}$, $e_{k-1}$, $v_k$, which begins and ends with vertices and has the property that each $e_i$ is an edge from $v_i$ to $v_{i+1}$. A \textit{path} is a walk where all edges are distinct. A \textit{simple path} is a path where all vertices are distinct. The term $u-v$ path (resp., simple path) refers to any valid path (resp., simple path) from $u$ to $v$.

In flow networks, we usually distinguish between single-commodity and multi-commodity flows. For single-commodity flow problems, we consider a single commodity\footnote{When it is clear from the context, we use the terms {\em commodities}, {\em demands}, and {\em flows} interchangeably.} that consists of a source $s\in V$ and a sink $t\in V$, where $s\neq t$. 
For multi-commodity flows, we assume $L$ commodities of the form $(s_i,t_i)$, where $s_i,t_i\in V, s_i\neq t_i$. Each commodity $i$ is associated with a non-negative demand $D_i\geq0$. For convenience, we also use the notation $\bm{s}=(s_1,\dots,s_L)$ and $\bm{t}=(t_1,\dots,t_L)$, and write $(\bm{s},\bm{t})$ to denote the corresponding multi-commodity network. 

\subsection{TE Type 1: $TE_{MF}$}
\label{sec:TE-MCF}

Let $\mathcal{P}_i$ be the set of all $s_i-t_i$ paths, and $\mathcal{P}_{i,e}$ the set of all $s_i-t_i$ paths that go through edge $e$. Then the maximum multi-commodity flow program can be expressed via the following path-based formulation:
\begin{align}
\text{maximize}\qquad & \nu=\sum_{i=1}^L\sum_{p\in \mathcal{P}_i}f_i(p)\label{TE-MCF-obj}\\
\text{subject to }\qquad & \sum_{i=1}^L\sum_{p\in \mathcal{P}_{i,e}} f_i(p)\leq c(e), \forall e\in E\label{TE-MCF-first-constraint}\\
& \sum_{p\in \mathcal{P}_i}f_i(p)\leq D_i\label{TE-MCF-second-constraint}\\
& f_i(p)\geq 0, \forall i\in\{1,\dots,L\}, \forall p\in \mathcal{P}_i\label{TE-MCF-last-constraint}
\end{align}

In the above formulation, we divide the total flow into $L$ subflows, one per commodity. 
The subflow $f_i$ along path $p\in\mathcal{P}_i$ for commodity $i$ is $f_i(p)$. Constraint \eqref{TE-MCF-first-constraint} is a capacity constraint that the sum of all subflows on any edge cannot exceed the edge capacity. Constraint \eqref{TE-MCF-second-constraint} describes the maximum demand $D_i$ for commodity $i$\footnote{When the maximum demand $D_i$ is infinite, the corresponding demand constraint \eqref{TE-MCF-second-constraint} is trivially satisfied and can thus be removed.}. Finally, constraint \eqref{TE-MCF-last-constraint} imposes that each subflow should be non-negative. For any valid flow $f$, the value of a flow $\nu(f)$ is defined as the total sum of units that all subflows $f_i$ send. The value of the maximum flow is denoted as $\nu_{max}$.
$TE_{MF}$ is mostly used in data center backbone WANs \cite{JKMO13,HKMZ13}, where traffic is elastic, the operator controls not only the links but also the demands of applications, and the main objective is to fully utilize the expensive WAN links.

Note that even though the single-commodity maximum flow accepts various combinatorial algorithms \cite{amo1993}, e.g., Ford-Fulkerson or Edmonds-Karp, there is to date no combinatorial algorithm for the maximum multi-commodity flow even though the problem is known to be strongly polynomial \cite{Tardos1986}. Furthermore, even though a single-commodity network with integral capacities always accepts an integral maximum flow, this is not always the case with multi-commodity networks; in fact, the decision problem of integral multi-commodity flow is NP-complete even if the number of commodities is two, for both directed and undirected networks \cite{eis1975}.


\subsection{TE Type 2: $TE_{LU}$}
\label{sec:TE-LU}

$TE_{LU}$ is mostly used in carrier networks \cite{HVSB15,FT00}, where traffic demands are exogenous and inelastic, and the main objective thus is to control the congestion or link utilization in order to ensure the smooth operation of the network. 
The general form for this type of TE is:
\begin{align}
\text{minimize}\qquad & \theta\label{TE-LU-obj}\\
\text{subject to }\qquad & \sum_{i=1}^L\sum_{p\in \mathcal{P}_{i,e}} f_i(p)\leq \theta\cdot c(e), \forall e\in E\label{TE-LU-first-constraint}\\
& \sum_{p\in \mathcal{P}_i}f_i(p)\geq D_i, \forall i\in\{1,\dots,L\}\label{TE-LU-second-constraint}\\
& f_i(p)\geq 0, \forall i\in\{1,\dots,L\}, \forall p\in \mathcal{P}_i\label{TE-LU-last-constraint}
\end{align}

The variable $\theta$ in objective \eqref{TE-LU-obj} refers to the maximum link utilization, which must be minimized. Constraint \eqref{TE-LU-first-constraint} ensures that $\theta$ is at least as large as the maximum link utilization; constraint \eqref{TE-LU-second-constraint} ensures that each demand is satisfied; and the last constraint \eqref{TE-LU-last-constraint} is similar to $TE_{MF}$ in \cref{sec:TE-MCF}.

\subsection{Relationship between $TE_{MF}$ and $TE_{LU}$}
\label{sec:connection}
A natural question is whether the two types of TE are related. To answer this question, we first introduce the \textit{decision version} of the maximum multi-commodity flow problem.

\begin{definition}\label{def:decision}[Decision version of maximum flow (DMF)]
Given a flow network $G=(V,E,c)$ with a set of $L$ commodities $(\bm{s},\bm{t})$, each associated with a non-negative maximum demand $D_i\geq0$, decide whether the maximum multi-commodity flow has a value of at least $\sum\limits_{i=1}^L{D_i}$.
\end{definition}

Note that if the answer to the decision problem DMF is a ``yes'', then by constraint \eqref{TE-MCF-second-constraint} the maximum flow has to be exactly equal to $\sum\limits_{i=1}^L{D_i}$. If the answer is no, then the maximum flow is strictly less than $\sum\limits_{i=1}^L{D_i}$.
Our next result establishes the relationship between the two types of TE:

\begin{lemma}\label{lemma:connection}
DMF accepts a ``yes'' answer, if and only if the system \eqref{TE-LU-obj}--\eqref{TE-LU-last-constraint} for $TE_{LU}$ accepts a solution $\theta^*\leq1$.
\end{lemma}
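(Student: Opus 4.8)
The plan is to prove both directions by relating feasible flows of one program to feasible flows of the other, exploiting the fact that both formulations share the same path variables $f_i(p)$ and differ only in how the capacity and demand constraints are parametrized.

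First I would prove the ``only if'' direction. Suppose DMF accepts a ``yes'' answer. Then the maximum multi-commodity flow program \eqref{TE-MCF-obj}--\eqref{TE-MCF-last-constraint} has an optimal flow $f$ with $\nu(f) = \sum_{i=1}^L D_i$; combined with constraint \eqref{TE-MCF-second-constraint}, this forces $\sum_{p\in\mathcal{P}_i} f_i(p) = D_i$ for every commodity $i$ (otherwise the total would be strictly less than $\sum_i D_i$). I would then take this same $f$ and set $\theta^* = 1$. Constraint \eqref{TE-LU-second-constraint} holds with equality, and constraint \eqref{TE-MCF-first-constraint} gives $\sum_i\sum_{p\in\mathcal{P}_{i,e}} f_i(p) \le c(e) = 1\cdot c(e)$, so \eqref{TE-LU-first-constraint} holds with $\theta^* = 1$. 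Hence $(f,\theta^*)$ is feasible for the $TE_{LU}$ system with $\theta^* \le 1$.

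Next I would prove the ``if'' direction. Suppose the $TE_{LU}$ system admits a feasible solution $(f,\theta^*)$ with $\theta^* \le 1$. From \eqref{TE-LU-first-constraint} and $\theta^* \le 1$ we get $\sum_i\sum_{p\in\mathcal{P}_{i,e}} f_i(p) \le \theta^* c(e) \le c(e)$ for all $e$, so $f$ satisfies the capacity constraint \eqref{TE-MCF-first-constraint}. The demand constraint \eqref{TE-LU-second-constraint} gives $\sum_{p\in\mathcal{P}_i} f_i(p) \ge D_i$; if this holds with strict inequality for some $i$, I would scale down the subflow $f_i$ on that commodity (multiply all $f_i(p)$, $p\in\mathcal{P}_i$, by $D_i/\sum_{p\in\mathcal{P}_i} f_i(p) \le 1$) so that it becomes exactly $D_i$ — this only decreases edge loads, so capacity is preserved, and non-negativity is preserved. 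After this adjustment $f$ is feasible for \eqref{TE-MCF-obj}--\eqref{TE-MCF-last-constraint} with value $\nu(f) = \sum_{i=1}^L D_i$. Since the maximum flow can never exceed $\sum_i D_i$ by \eqref{TE-MCF-second-constraint}, this flow is optimal and $\nu_{max} = \sum_i D_i$, so DMF answers ``yes''.

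I do not expect a serious obstacle here; the result is essentially a change-of-variables argument between two linear programs on the same path space. The one point requiring mild care is the ``if'' direction, where the $TE_{LU}$ feasible flow may oversatisfy some demands and thus not directly be a valid flow for the DMF formulation, which caps each demand at $D_i$; the per-commodity rescaling step handles this cleanly while respecting all remaining constraints. I would also make explicit the trivial remark (already noted after Definition~\ref{def:decision}) that a ``yes'' answer forces the max flow to equal $\sum_i D_i$ exactly, since that equivalence is what lets the two directions line up.
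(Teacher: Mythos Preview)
Your proof is correct and follows essentially the same approach as the paper: use the same path-flow variables to move between the two formulations, checking that capacity and demand constraints transfer with $\theta=1$. Your reverse direction is in fact slightly more careful than the paper's, which glosses over the possibility that the $TE_{LU}$ solution oversatisfies some demand (violating \eqref{TE-MCF-second-constraint}); your per-commodity rescaling cleanly closes that gap.
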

\begin{proof}
Assume that DMF accepts a ``yes'' answer. Then there is a flow that respects constraints \eqref{TE-MCF-first-constraint}--\eqref{TE-MCF-last-constraint}. That flow will then trivially satisfy constraints \eqref{TE-MCF-first-constraint}--\eqref{TE-MCF-last-constraint} with $\theta=1$. Since the objective criterion of $TE_{LU}$ minimizes over $\theta$, the optimal solution to the TE program \eqref{TE-LU-obj}--\eqref{TE-LU-last-constraint} will accept an optimal solution $\theta^*\leq 1$.
For the reverse direction, assume that the system \eqref{TE-LU-obj}--\eqref{TE-LU-last-constraint} accepts a solution $\theta^*\leq1$. Then constraint \eqref{TE-LU-first-constraint} implies that the capacity constraints are satisfied for each edge, thus the corresponding flow is a valid flow for system \eqref{TE-MCF-obj}--\eqref{TE-MCF-last-constraint} with value $\sum\limits_{i=1}^L{D_i}$. The maximum flow has then trivially a value of at least $\sum\limits_{i=1}^L{D_i}$.
\end{proof}

Lemma~\ref{lemma:connection} shows that solving $TE_{LU}$ immediately generates a ``yes'' or ``no'' answer to the DMF. Thus, the TE naturally encompasses the general DMF problem of Definition~\ref{def:decision}. This also suggests that hardness results on the DMF (Proposition~\ref{prop:NPhardness}) immediately imply hardness for $TE_{LU}$. 

We conclude this section with two observations. First, even though we assumed a directed network throughout this section, it is possible to extend the definitions to undirected graphs as well\footnote{In the former (resp., latter) case we refer to flow networks with directed (resp., undirected) edges.}. The main difference is that an undirected edge is associated with a capacity, and flow can travel in both directions of a link, under the constraint that the sum of the flow value in the two edge directions does not exceed the capacity. 
Second, it is simple to reduce edge-constrained TE to node-constrained TE in both directed and undirected graphs. Indeed, we can replace any (directed or undirected) edge $(u,v)$ by the two consecutive edges $(u,z)$ and $(z,v)$ by introducing a new node $z$. Then the traffic constrained to go through edge $e$ can be equivalently characterized as the traffic going through node $z$. Node-constrained TE is thus at least as hard as edge-constrained TE, which is why we focus on the former. 

\section{General Node-Constrained\\ Traffic Engineering}
\label{sec:theory}

In this section, we study the general (unrestricted) node-constrained TE problem, where traffic can take any path as long as that path goes through a set of middlepoints.
We focus on the simplest setting where a path goes through a specific node $w$, in order to establish the hardness results.
For directed graphs, we show in \cref{sec:directed} that the decision version of the maximum $w$-flow is NP-hard, which implies that $TE_{LU}$ is NP-hard as well. 
On the other hand, we show in \cref{sec:undirected} that in undirected graphs the problem is strongly polynomial after equivalently rewriting it as a special linear program. 
At the end of the section, we show that our results also extend to the general case where traffic goes through \textit{at least} one node from the $k>1$ middlepoints $w_1,\dots,w_k$, where $k$ is fixed and not part of the input.
The practical significance of our results is that we rigorously establish that node-constrained TE is NP-hard and difficult to solve in general, as most TE problems use directed graphs to model bidirectional links and traffic.


\subsection{The Directed Case}
\label{sec:directed}
\subsubsection{Hardness of $TE_{MF}$}

The maximum multi-commodity flow $f_{max}$ with value $\nu_{max}$ refers to the total flow over all possible paths that each commodity accepts. Assume instead that we focus on the maximum flow that can go through a specific network node $w\neq s,t$. Let $\mathcal{P}_i^w$ be the set of all $s_i-w-t_i$ paths (i.e. $s_i-t_i$ paths that go through $w$), and $\mathcal{P}_{i,e}^w$ the set of all $s_i-w-t_i$ paths that also go through edge $e$. The path-based formulation then is:
\begin{align*}
\text{maximize}\qquad & \nu^w=\sum_{i=1}^L\sum_{p\in \mathcal{P}_i^w} f_i(p)\\
\text{subject to }\qquad & \sum_{i=1}^L\sum_{p\in \mathcal{P}_{i,e}^w} f_i(p)\leq c(e), \forall e\in E\\
& \sum_{p\in \mathcal{P}_i^w}f_i(p)\leq D_i\\
& f_i(p)\geq 0, \forall i\in\{1,\dots,L\}, \forall p\in \mathcal{P}_i^w
\end{align*}

We denote the maximum flow through any node $w$ as the \textit{maximum} $w$-\textit{flow} $f_{max}^w$ and denote its value by $\nu_{max}^w$. Alternatively, we use the notation $s-w-t$ flow for single-commodity networks (or $\bm{s}-w-\bm{t}$ for multi-commodity networks). Similarly, for single-commodity flows we also write $\nu_{max}^w(s,t)$ (or $\nu_{max}^w(\bm{s},\bm{t})$ for multi-commodity networks) for the value of the maximum $w$-flow.

We emphasize three points. First, in the single-commodity case we always assume that $w\neq s,t$, even if not explicitly stated. Indeed, if either $w=s$ or $w=t$ then $\nu_{max}=\nu_{max}^w$. In this case, the problem is strongly polynomial and accepts combinatorial algorithms such as the Ford-Fulkerson algorithm. Second, to define the maximum $w$-flow we use paths, not simple paths. For traditional flow networks, this makes no difference as the maximum flow can be equivalently defined in terms of simple paths, paths or even walks, since we can always remove any cycles in the paths or walks that transmit flow to make them simple, without affecting the maximum flow. However, this is not the case with the maximum $s-w-t$ flow. As an example, consider the directed flow network in Figure \ref{figadd} where all edges have unit capacity. The maximum $s-w-t$ flow uses the directed path $s\to w\to s\to t$ for a value of 1, and no simple path exists for a $w$-flow. On the other hand, the traditional maximum $s-t$ flow can use the trivial simple path $s-t$, with a value that also happens to be 1 in this example; the cycle $s\to w\to s$ is redundant. Hence, even though the maximum $s-t$ flow can be equivalently defined either in terms of paths or simple paths, the $w$-maximum flow is different under the two definitions. In this work, we choose to use paths rather than simple paths. This point is discussed in more detail at the end of \cref{sec:directed}. Third, for undirected networks we allow the flow to be sent along any directed path from the source to the destination, as long as no directed edge is repeated. For example, consider the undirected network $w-s-t$ with capacities 1 for the two edges $(w,s)$ and $(s,t)$. In that case, the maximum $s-w-t$ flow is 0.5, and uses the directed path $s\to w\to s\to t$. Thus, the path carrying the flow can contain the same undirected edge twice, but it has to pass this edge in different directions when going from the source to the destination. An alternative definition that only permits end-to-end paths that can go though any edge at most once in any direction is briefly discussed at the end of \cref{sec:undirected}.
\begin{figure}[htbp]
	\begin{centering}
		\textsf{\includegraphics[width=0.55\linewidth]{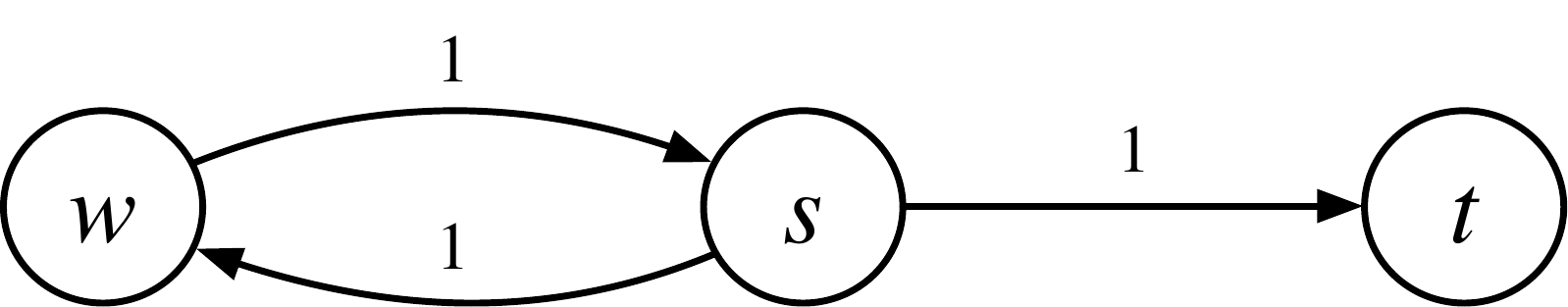}}
		\par\end{centering}
	
	\caption{The maximum $w$-flow with simple paths can differ from the maximum $w$-flow with paths.}
	\label{figadd}
\end{figure}

A central result in graph theory that we will be using throughout the paper is the two node-disjoint path (2DP) problem due to Fortune, Hopcroft and Wyllie \cite{fhw1980}.
\begin{theorem}[NP-hardness of 2DP \cite{fhw1980}]
\label{theorem:2node-disjoint}
Assume a directed graph $G=(V,E)$ and four distinct vertices $u_1,u_2,$ $v_1,v_2\in V$. It is NP-hard to decide whether there are two node-disjoint paths in $G$ from $u_1$ to $u_2$ and from $v_1$ to $v_2$.
\end{theorem}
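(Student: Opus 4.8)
The statement to be proved is Theorem~\ref{theorem:2node-disjoint}, the NP-hardness of the two node-disjoint path (2DP) problem. Since the excerpt attributes this to Fortune, Hopcroft and Wyllie~\cite{fhw1980}, I expect the authors simply to cite it, but here is how I would reconstruct the argument.

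\medskip

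\textbf{Overall approach.} The plan is to reduce from a suitable NP-hard problem and encode its yes/no answer into the existence of two node-disjoint paths between prescribed terminal pairs. The classical choice is a reduction from a logic problem such as 3SAT, or more slickly from the directed problem on a small number of paths via a gadget construction; I would aim for a direct reduction from 3SAT (or from the problem of whether a directed graph has a path that visits a specified set of vertices, which is itself a repackaging of Hamiltonian-type hardness). First I would set up, for a given 3SAT formula $\phi$ with variables $x_1,\dots,x_n$ and clauses $C_1,\dots,C_m$, a directed graph $G_\phi$ together with terminals $u_1,u_2,v_1,v_2$, such that the two required node-disjoint paths exist in $G_\phi$ if and only if $\phi$ is satisfiable.

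\medskip

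\textbf{Key steps.} (i) Build a \emph{variable-selection path} from $u_1$ to $u_2$ that threads sequentially through $n$ diamond gadgets, one per variable; in the $i$-th diamond the path must commit to either the ``true'' branch or the ``false'' branch, and this choice is the truth assignment. (ii) Build a \emph{clause-verification path} from $v_1$ to $v_2$ that threads sequentially through $m$ clause gadgets; the clause gadget for $C_j$ offers three parallel routes, one for each literal of $C_j$, and each such route passes through an internal vertex shared with exactly the branch of the corresponding variable diamond that makes that literal \emph{false}. (iii) Argue the equivalence: if the $u_1$--$u_2$ path picks the branches of a satisfying assignment, then for every clause at least one literal is true, so its gadget has a free route whose shared vertex is not used by the variable path, and the $v_1$--$v_2$ path can be completed disjointly; conversely, node-disjointness forces, in each clause gadget, a route avoiding the variable path, which certifies a true literal per clause, hence a satisfying assignment. (iv) Check that $G_\phi$ has size polynomial in $n+m$ and that edge directions are consistent so that no unintended path exists — in particular that the $u$-path cannot ``cheat'' by wandering into clause gadgets and vice versa, which is enforced by making all cross-edges one-directional and by the layered/sequential structure.

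\medskip

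\textbf{Main obstacle.} The delicate part is the gadget design that simultaneously (a) forces the variable path to make a clean binary choice at each diamond (no backtracking, no skipping), (b) makes the ``shared vertex'' mechanism couple the two paths in exactly the right polarity (literal false $\iff$ vertex blocked), and (c) prevents spurious paths that would make the reduction unsound — e.g.\ a path that reuses a vertex is forbidden by node-disjointness, but one must also ensure each terminal pair cannot be connected through the ``wrong'' part of the graph. Getting directions and the acyclic layering right so that the only $u_1$--$u_2$ routes are the $2^n$ assignment routes, and the only $v_1$--$v_2$ routes are the $\prod_j 3$ literal-selection routes, is where the real care goes; everything else (polynomial size, the two directions of the iff) is then routine. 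Since this is a cited external result, in the paper I would simply invoke~\cite{fhw1980} rather than reproduce this construction.
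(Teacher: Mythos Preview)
Your expectation is correct: the paper does not prove Theorem~\ref{theorem:2node-disjoint} at all but simply cites it as a known result from~\cite{fhw1980}, so there is no ``paper's own proof'' to compare against. Your reconstruction via a 3SAT reduction with variable diamonds and clause gadgets sharing vertices with the falsifying branches is a faithful high-level outline of the standard argument (and indeed the original Fortune--Hopcroft--Wyllie proof proceeds by a 3SAT reduction using carefully designed switch gadgets); you correctly flag the crux, namely arranging directions and layering so that neither path can shortcut through the other's territory, and that each variable's branch must host one shared vertex per occurrence of that literal across clauses.
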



We now provide two lemmas. Some of the transformations involved are standard in the disjoint-path literature (e.g., \cite{Lapaugh1980,schrijver-book}). Nevertheless, given the central role of the two lemmas in the remainder of this paper and in order to make our work self-contained, we provide our own full proofs here.

\begin{lemma}
\label{lemma:exists-simple-path-NPhard}
Deciding whether there exists a simple $s-w-t$ path in a directed graph $G=(V,E)$, where $w,s,t$ are three distinct nodes in $V$, is NP-hard.
\end{lemma}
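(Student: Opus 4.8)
The plan is to reduce from the 2DP problem of Theorem~\ref{theorem:2node-disjoint}. Given an instance of 2DP consisting of a directed graph $G=(V,E)$ and four distinct vertices $u_1,u_2,v_1,v_2$, I would build an auxiliary directed graph $G'$ together with three designated nodes $s,w,t$ such that $G'$ has a simple $s$--$w$--$t$ path if and only if $G$ contains two node-disjoint paths, one from $u_1$ to $u_2$ and one from $v_1$ to $v_2$. Since the construction will clearly be polynomial, NP-hardness of 2DP then transfers to the simple $s$--$w$--$t$ path problem.

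The construction I have in mind is the standard ``series chaining'' trick. Take two disjoint copies of the vertex set of $G$, say $G_1$ and $G_2$ (with the edges copied accordingly), and introduce fresh nodes $s$, $w$, $t$. Add the edges $s\to u_1^{(1)}$ (into the first copy), $u_2^{(1)}\to w$, $w\to v_1^{(2)}$ (into the second copy), and $v_2^{(2)}\to t$. Call this graph $G'$. Then any simple $s$--$w$--$t$ path in $G'$ must first traverse a simple path from $u_1^{(1)}$ to $u_2^{(1)}$ inside $G_1$, then go through $w$, then traverse a simple path from $v_1^{(2)}$ to $v_2^{(2)}$ inside $G_2$. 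Because the two copies are vertex-disjoint in $G'$, these two sub-paths are automatically node-disjoint, and projecting them back to $G$ gives the required pair of node-disjoint paths in the original graph; a simple path in $G$ can be extracted from each by removing cycles, which does not affect node-disjointness. Conversely, two node-disjoint paths $P_1$ from $u_1$ to $u_2$ and $P_2$ from $v_1$ to $v_2$ in $G$ lift to vertex-disjoint paths in $G_1$ and $G_2$ respectively, and concatenating $s\to u_1^{(1)}$, the lift of $P_1$, the edge to $w$, the edge to $v_1^{(2)}$, the lift of $P_2$, and $v_2^{(2)}\to t$ yields a simple $s$--$w$--$t$ path in $G'$ (simplicity follows since $P_1,P_2$ are simple and disjoint, and the new nodes $s,w,t$ appear only once).

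The one subtlety worth checking carefully — and the step I expect to be the main obstacle — is the correspondence between ``node-disjoint paths'' and ``simple sub-paths of a simple $s$--$w$--$t$ path'' when $u_1,u_2,v_1,v_2$ are not pairwise distinct or when they coincide with endpoints. Here the hypothesis of Theorem~\ref{theorem:2node-disjoint} already guarantees the four terminals are distinct, so this is not an issue, but one still has to argue that in $G'$ a simple $s$--$w$--$t$ path cannot ``cheat'' by using the edge $u_2^{(1)}\to w$ or $w\to v_1^{(2)}$ in some unintended way — this is immediate because $w$ has exactly one in-edge and one out-edge in $G'$, so every $s$--$w$--$t$ path is forced to split as described. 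I would also remark that the statement is about mere existence of a simple $s$--$w$--$t$ path (not about flows), so no capacities are involved; the connection of this lemma to the flow-based $TE_{MF}$ hardness comes in the subsequent lemma, which this excerpt only foreshadows.
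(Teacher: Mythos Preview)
Your reduction has a genuine gap in the forward direction. By placing the two sub-paths in \emph{separate copies} $G_1$ and $G_2$, you guarantee that they are vertex-disjoint \emph{in $G'$}, but this tells you nothing about their projections back to $G$: if the $u_1^{(1)}$--$u_2^{(1)}$ sub-path uses some vertex $x^{(1)}$ and the $v_1^{(2)}$--$v_2^{(2)}$ sub-path uses $x^{(2)}$, then both projected paths pass through the same vertex $x$ in $G$. So what your construction actually shows is that $G'$ has a simple $s$--$w$--$t$ path if and only if $G$ has \emph{some} $u_1\to u_2$ path and \emph{some} $v_1\to v_2$ path, with no disjointness constraint at all. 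That problem is two reachability queries and is in P, so no hardness is transferred.

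The fix is to work inside a \emph{single} copy of $G$: add only the fresh node $w$ with edges $(u_2,w)$ and $(w,v_1)$, and set $s=u_1$, $t=v_2$. Now any simple $s$--$w$--$t$ path must enter $w$ via $u_2$ and leave via $v_1$ (those are $w$'s only incident edges), and simplicity of the whole path forces the $u_1\to u_2$ portion and the $v_1\to v_2$ portion to be node-disjoint in $G$ itself. This is exactly the paper's construction; your two-copy idea is the step that breaks the argument.
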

\begin{proof}
Finding whether there is a simple $s-t$ path going through a node $w$ is equivalent to determining whether there exist two node-disjoint paths from $s$ to $w$ and from $w$ to $t$ (excluding of course node $w$). We prove that the latter problem is NP-hard by a reduction from the NP-hard 2DP problem.

Consider a directed graph $G=(V,E)$ and 4 distinct nodes $u_1,u_2,v_1,v_2\in V$. We introduce a new node $w$ and create the directed edges $e_1=(u_2,w)$ and $e_2=(w,v_1)$. We now argue that there are two node-disjoint paths, path $P_1$ from $u_1$ to $u_2$ and path $P_2$ from $v_1$ to $v_2$, if and only if there is a simple $u_1-w-v_2$ path. 
First, assume the former condition is true. Then $P_1$ cannot go through node $w$ via edge $e_1$; otherwise, that path would also have to use node $v_1$ after $w$ given $u_2$ is the end node. Similarly, we argue that $P_2$ cannot go through node $w$ via edge $e_2$. But then we can form a new path $P'$ from $u_1$ to $v_2$ by concatenating path $P_1$, edge $e_1$, edge $e_2$, and path $P_2$. $P'$ does not repeat any node since the node disjoint paths $P_1$ and $P_2$ do not contain $w$, hence it is a simple path. For the reverse direction, we note that if there exists a simple $u_1-w-v_2$ path $P$, then $P$ will necessarily contain edges $e_1$ and $e_2$. By removing these two edges, we get two node-disjoint paths, one from $u_1$ to $u_2$ and another from $v_1$ to $v_2$, given that $P$ is simple.
\end{proof}

\begin{lemma}
\label{lemma:exists-path-NPhard}
Deciding whether there exists a $s-w-t$ path in a directed graph $G=(V,E)$, where $w,s,t$ are three distinct nodes in $V$, is NP-hard.
\end{lemma}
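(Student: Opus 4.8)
The plan is to reduce from the NP-hard 2DP problem (Theorem~\ref{theorem:2node-disjoint}), now exploiting that a \emph{path} (unlike a simple path, cf.\ Lemma~\ref{lemma:exists-simple-path-NPhard}) only forbids repeated edges, not repeated vertices. The key first observation is a combinatorial reformulation: for distinct nodes $s,w,t$, a directed graph contains an $s-w-t$ path if and only if it contains two \emph{edge-disjoint} directed paths, one from $s$ to $w$ and one from $w$ to $t$. For ``if'', concatenate the two edge-disjoint (simple) paths: since their edge sets are disjoint and each is simple, the concatenation is a walk from $s$ to $t$ with no repeated edge, i.e.\ a path, and it visits $w$. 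For ``only if'', split an $s-w-t$ path at its first visit to $w$ into an $s\rightsquigarrow w$ trail and a $w\rightsquigarrow t$ trail; these are edge-disjoint, and deleting the cycles of each trail turns them into edge-disjoint simple paths. It therefore suffices to prove that deciding the existence of two edge-disjoint directed paths, from $s$ to $w$ and from $w$ to $t$, is NP-hard.

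To establish this, I would first pass from node-disjointness to edge-disjointness via the standard vertex-splitting transformation (this is the ``standard transformation'' alluded to before Lemma~\ref{lemma:exists-simple-path-NPhard}; cf.\ \cite{schrijver-book}). Given a 2DP instance $(G,u_1,u_2,v_1,v_2)$: replace each vertex $x$ of $G$ by two vertices $x^{-},x^{+}$ joined by a ``traversal'' edge $(x^{-},x^{+})$; reroute every edge $(x,y)$ of $G$ to $(x^{+},y^{-})$; and finally delete $u_1^{-}$ and $v_1^{-}$ (so the two sources can only be started from, never entered) and $u_2^{+}$ and $v_2^{+}$ (so the two sinks can only be ended at, never left), together with all edges incident to them. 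In the resulting graph $H$, any directed path passing through an internal copy of some vertex $x$ is forced to use the traversal edge $(x^{-},x^{+})$; a routine check then shows that $G$ has node-disjoint $u_1-u_2$ and $v_1-v_2$ paths if and only if $H$ has edge-disjoint directed paths from $u_1^{+}$ to $u_2^{-}$ and from $v_1^{+}$ to $v_2^{-}$.

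Finally, I would attach a gadget to $H$ in the spirit of Lemma~\ref{lemma:exists-simple-path-NPhard}: add a fresh node $w$ together with the two edges $e_1=(u_2^{-},w)$ and $e_2=(w,v_1^{+})$, and put $s:=u_1^{+}$ and $t:=v_2^{-}$. Because $e_1$ and $e_2$ are the only edges incident to $w$, any $s-w-t$ path in the resulting graph must enter $w$ through $e_1$ and leave it through $e_2$, so deleting $e_1$ and $e_2$ from it leaves two edge-disjoint trails $u_1^{+}\rightsquigarrow u_2^{-}$ and $v_1^{+}\rightsquigarrow v_2^{-}$ lying entirely in $H$, which reduce (by the reformulation above) to edge-disjoint $u_1^{+}\to u_2^{-}$ and $v_1^{+}\to v_2^{-}$ paths. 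Conversely, any two such edge-disjoint paths concatenate through $e_1$ and $e_2$ into an $s-w-t$ path. Composing the two reductions gives the claim.

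The step I expect to be the main obstacle is the vertex-splitting reduction, specifically making it encode node-disjointness \emph{faithfully at the four terminals} --- e.g.\ ruling out a $v_1-v_2$ path that secretly runs through $u_1$ or $u_2$. Deleting the in-copies of the sources and the out-copies of the sinks is precisely what prevents this. After that, the $w$-gadget step is essentially a rerun of Lemma~\ref{lemma:exists-simple-path-NPhard} and introduces nothing new.
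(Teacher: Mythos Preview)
Your proof is correct and uses the same two ingredients as the paper---the standard vertex-splitting transformation to pass from node-disjointness to edge-disjointness, and the $w$-gadget of Lemma~\ref{lemma:exists-simple-path-NPhard}---just applied in the opposite order. The paper vertex-splits the graph \emph{after} the $w$-gadget has been attached (so it reduces from the node-disjoint $s$--$w$/$w$--$t$ problem of Lemma~\ref{lemma:exists-simple-path-NPhard} rather than directly from 2DP), whereas you vertex-split the raw 2DP instance first and then attach $w$; the terminal-copy deletions you introduce are a clean way to handle the endpoints in your order but become unnecessary in the paper's.
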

\begin{proof}
Deciding whether there is a $s-t$ path going through  $w$ is equivalent to deciding whether there are two edge-disjoint paths from $s$ to $w$ and from $w$ to $t$. We argue that the latter problem is NP-hard by a reduction from the 2DP problem.

Indeed, consider a graph $G=(V,E)$, and three distinct nodes $s,t,w\in V$. We construct a new graph $G'=(V',E')$ from $G=(V,E)$ as follows. For each node $v\in V$ we introduce two nodes $v_{in},v_{out}\in V'$ as well as an edge $e'=(v_{in},v_{out})\in E'$ connecting them. For each edge $e=(u,v)\in E$, we introduce an edge $e'=(u_{out},v_{in})\in E'$. 
 The construction is illustrated in Figure \ref{fig:1}. 

\begin{figure}[htbp]
	\begin{centering}
	\subfloat[$G$]{\includegraphics[width=1.05\linewidth]{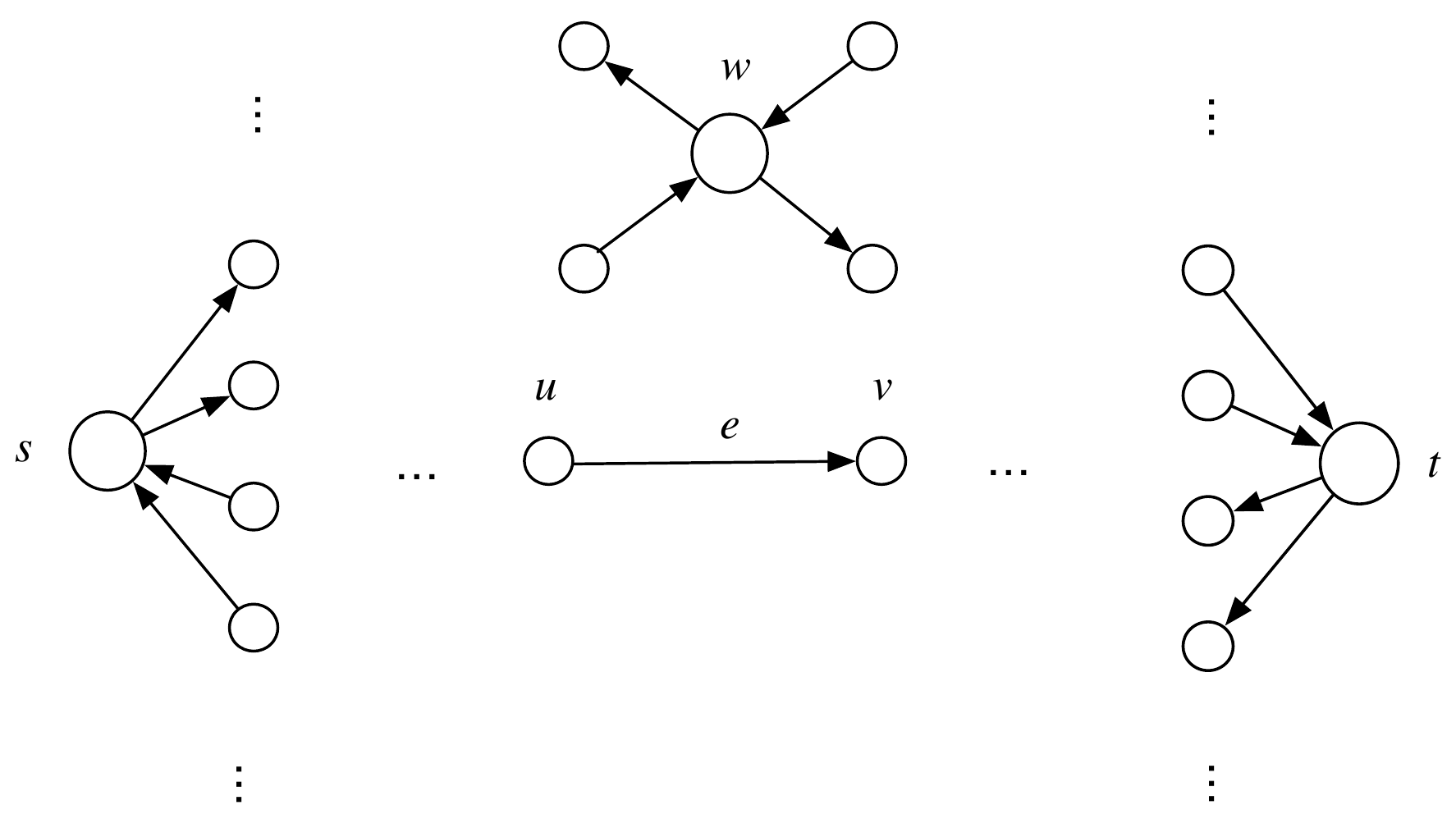}}\\
	\subfloat[$G'$]{\includegraphics[width=1.1\linewidth]{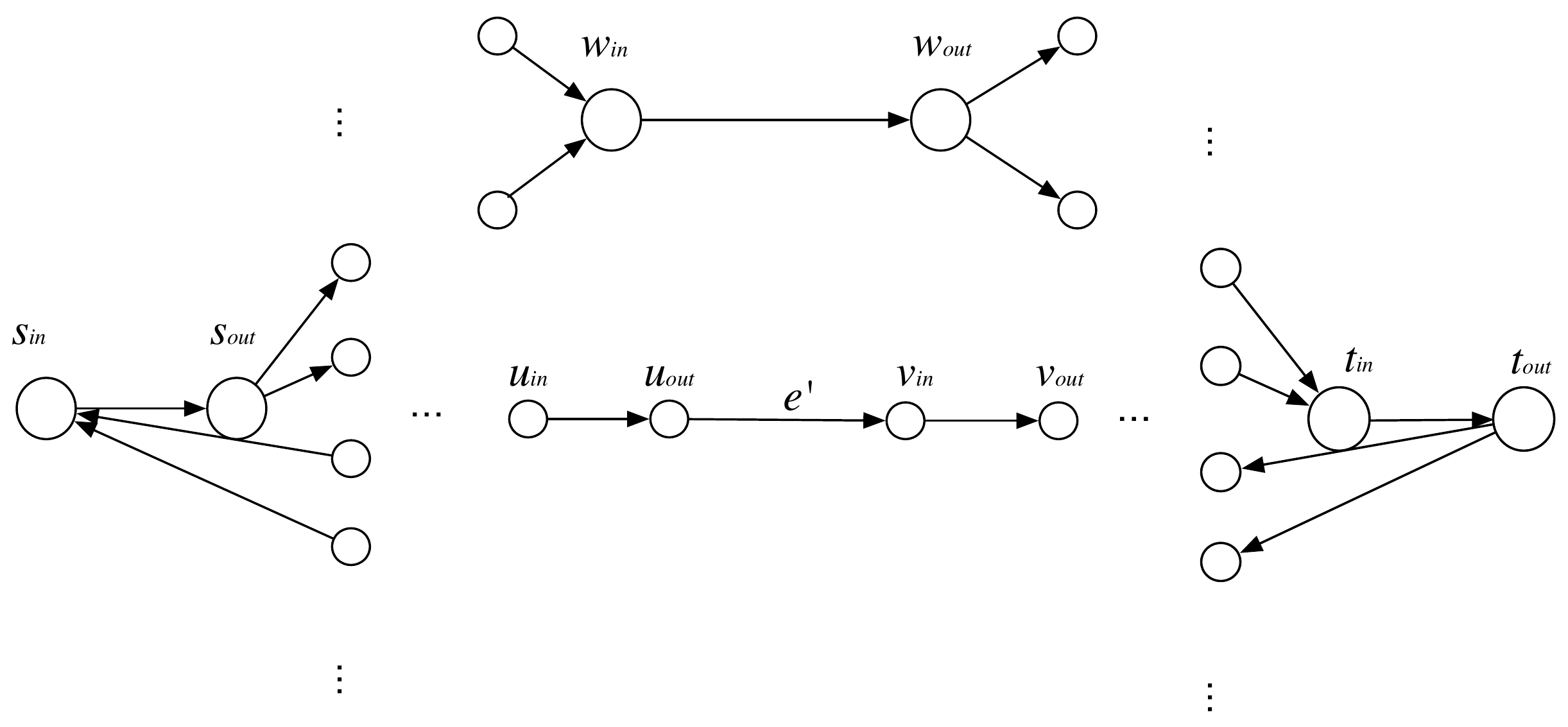}}
	
	\caption{Illustration of Lemma~\ref{lemma:exists-path-NPhard}.}	
	\label{fig:1}
	\end{centering}
\end{figure}

We now claim that there exist two edge-disjoint paths in graph $G'$ from $s_{out}$ to $w_{in}$ and from $w_{out}$ to $t_{in}$ (equivalently, from $s_{in}$ to $w_{in}$ and from $w_{out}$ to $t_{out}$), if and only if there exist two node-disjoint paths in $G$ from $s$ to $w$ and from $w$ to $t$.
First, consider two node-disjoint paths in $G$, namely, $s, u_1,\dots,u_l,w$ and $w,v_1,\dots,v_m,t$, where all intermediate nodes $u_i$ and $v_j$ are distinct. It is easy to see that the paths $s_{out},$ $u_{1,in},$ $u_{1,out},$ $\dots,$ $u_{l,in},$ $u_{l,out},$ $w_{in}$ and $w_{out},$ $v_{1,in},$ $v_{1,out},$ $\dots,$ $v_{m,in},$ $v_{m,out},$ $t_{in}$ in $G'$ are : (1) valid since they use existing edges in $G'$, and (2) edge-disjoint since the set of nodes on the first path and the second path are disjoint.

For the reverse direction, consider two edge-disjoint paths in $G'$ from $s_{out}$ to $w_{in}$ and from $w_{out}$ to $t_{in}$. We then argue that these paths must have the previous form $s_{out},$ $u_{1,in},$ $u_{1,out},$ $\dots,$ $u_{l,in},$ $u_{l,out},$ $w_{in}$ and $w_{out},$ $v_{1,in},$ $v_{1,out},$ $\dots,$ $v_{m,in},$ $v_{m,out},$ $t_{in}$. The reason is that any pair of nodes $(v_{in},v_{out})$ can only be reached from other nodes in $V'$ via $v_{in}$ and can only reach other nodes in $V'$ via $v_{out}$. So, a path will necessarily consist of consecutive pairs of nodes of the form $(v_{in},v_{out})$ (with the exception of the two endpoints). Furthermore, any such pair $(v_{in},v_{out})$ can (1) appear at most once on either path, and (2) cannot appear on both paths. The reason is that going from $v_{in}$ to $v_{out}$ requires edge $v_{in},v_{out}$, but the two paths are edge-disjoint. We thus conclude that  $s, u_1,\dots,u_l,w$ and $w,v_1,\dots,v_m,t$ in $G$ are node-disjoint paths. 
\end{proof}

We next provide definitions and results for the maximum $w$-flow that are reminiscent of results in traditional single-commodity maximum flow. For this purpose, we adapt some standard concepts from maximum flow theory \cite{amo1993}. In particular, we extend the concept of an $s-t$ \textit{augmenting path} to an $s-w-t$ \textit{augmenting path}, which corresponds to a directed path from $s$ to $t$ through middlepoint $w$ in the residual network. One significant difference is that the \textit{cut} is now defined as a collection of edges rather than a collection of nodes. The reason for this will become apparent shortly. We focus first on the $s-w-t$ flow in single-commodity networks.

\begin{definition}
A $s-w-t$ edge-cut is a subset of edges $\mathcal{C}^w\subseteq E$ such that removing the edges in $\mathcal{C}^w$ from the graph results in no $s-w-t$ paths, i.e., there are no $s-w-t$ paths in the graph $G'=(V,E-\mathcal{C}^w)$. The value $c(\mathcal{C}^w)$ of the edge-cut is defined as the sum of the capacities of all edges in $\mathcal{C}^w$.
\end{definition}

\begin{lemma}
\label{anyf_anyc}
Let $f^w$ be any $s-w-t$ flow, and $\mathcal{C}^w$ any $s-w-t$ cut. Then $\nu^w(f^w)\leq c(\mathcal{C}^w)$.
\end{lemma}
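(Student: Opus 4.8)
The plan is to mimic the classical weak-duality argument between flows and cuts in the single-commodity maximum flow theory, adapted to the $w$-constrained setting. The central observation is that every $s-w-t$ path that carries positive flow in $f^w$ must use at least one edge of the cut $\mathcal{C}^w$: indeed, if some $s-w-t$ path $p$ used no edge of $\mathcal{C}^w$, then $p$ would survive in the graph $G'=(V,E-\mathcal{C}^w)$, contradicting the definition of an $s-w-t$ edge-cut. So in the path-based formulation of the maximum $w$-flow, the support of $f^w$ consists only of paths each of which is ``hit'' by $\mathcal{C}^w$.

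First I would write the flow value as $\nu^w(f^w)=\sum_{i}\sum_{p\in\mathcal{P}_i^w}f_i(p)$ and restrict attention to paths $p$ with $f_i(p)>0$. For each such $p$, pick (say) the first edge $e(p)\in\mathcal{C}^w$ that $p$ traverses; this defines a map from the flow-carrying paths into $\mathcal{C}^w$. Then I would bound
\[
\nu^w(f^w)=\sum_{i}\sum_{p\in\mathcal{P}_i^w}f_i(p)\ \le\ \sum_{e\in\mathcal{C}^w}\ \sum_{i}\ \sum_{\substack{p\in\mathcal{P}_{i,e}^w\\ f_i(p)>0}}f_i(p)\ \le\ \sum_{e\in\mathcal{C}^w}\sum_{i}\sum_{p\in\mathcal{P}_{i,e}^w}f_i(p)\ \le\ \sum_{e\in\mathcal{C}^w}c(e)=c(\mathcal{C}^w),
\]
where the first inequality collects each flow-carrying path under (at least) one cut edge it uses — the ``first edge'' map shows the total is at most the sum over $e\in\mathcal{C}^w$ of the flow on paths through $e$ — the second inequality adds back the possibly-omitted flow-carrying-path terms and the zero terms, and the third is exactly the capacity constraint on each edge $e$ in the $w$-flow LP.

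The one point that needs a little care — and is the only real obstacle — is justifying the very first inequality without double counting or under counting. The clean way is precisely the ``first cut edge'' selection: since $f_i(p)>0$ forces $p$ to meet $\mathcal{C}^w$, the first edge of $p$ lying in $\mathcal{C}^w$ is well defined, so $\sum_{i,p:f_i(p)>0}f_i(p)=\sum_{e\in\mathcal{C}^w}\sum_{i}\sum_{p:\,e(p)=e}f_i(p)\le\sum_{e\in\mathcal{C}^w}\sum_{i}\sum_{p\in\mathcal{P}_{i,e}^w}f_i(p)$, the last step because $\{p:e(p)=e\}\subseteq\mathcal{P}_{i,e}^w$. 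Everything after that is the capacity constraint of the LP and is routine. I do not expect to need anything about simple versus non-simple paths here: the argument uses only that flow-carrying $w$-paths are hit by the cut and that the capacity constraints of the $w$-flow formulation hold, both of which are immediate from the definitions given earlier in the excerpt.
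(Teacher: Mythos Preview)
Your proposal is correct and follows essentially the same approach as the paper: both argue that every flow-carrying $s-w-t$ path must meet the cut, then bound the total flow by the sum of capacities of cut edges via the capacity constraints. Your ``first cut edge'' selection is in fact slightly more careful than the paper's version, which asserts $\sum_{e\in\mathcal{C}^w}\nu^w(\mathcal{F}_e)=\nu^w(f^w)$ without explicitly handling paths that traverse multiple cut edges; your map makes the inequality rigorous without that ambiguity.
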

\begin{proof}
First, note that the flow $f^w$ is the sum of individual subflows, each going through a distinct $s-w-t$ path $p$. Each of these individual subflows must go through at least one edge $e\in\mathcal{C}^w$, otherwise there would be a $s-w-t$ path in the graph $G'=(V,E-\mathcal{C}^w)$, which would be a contradiction. So, let $\mathcal{F}_e$ be the set of subflows that go through $e$. Then we have:
\begin{align}
\sum_{e\in\mathcal{C}^w}\nu^w(\mathcal{F}_e) &\leq \sum_{e\in\mathcal{C}^w}c(e) \Leftrightarrow \notag \\
\sum_{e\in\mathcal{C}^w}\nu^w(\mathcal{F}_e) &\leq c(\mathcal{C}^w) \Leftrightarrow \notag \\
\nu^w(f^w) &\leq c(\mathcal{C}^w)
\end{align}
Note that in the last inequality we use that $\sum_{e\in\mathcal{C}^w}\nu^w(\mathcal{F}_e)$ $=$ $\nu^w(f)$, due to the fact that the path for each individual subflow must go through at least one edge in $\mathcal{C}^w$.
\end{proof}

\begin{lemma}
\label{maxc_minc}
Given a directed graph $G=(V,E,c)$ with integral capacities and three distinct nodes $s,w,t$, we can construct an integral $w$-flow. Furthermore, the constructed flow is positive if and only if the minimum edge-cut in $G$ is non-empty.
\end{lemma}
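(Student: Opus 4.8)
The plan is to imitate the Ford--Fulkerson augmenting-path method, with the twist that every augmenting path must be routed through $w$, and then to read the claimed equivalence off the construction. Before that I would settle the correspondence between $s-w-t$ paths and $s-w-t$ edge-cuts and dispose of the trivial direction. If $G$ has no $s-w-t$ path at all, then the empty set is already an $s-w-t$ edge-cut, so the minimum edge-cut is empty, and the only $w$-flow is the all-zero flow, which is integral of value $0$ --- matching the claim. If instead $G$ has an $s-w-t$ path $P$, then any $s-w-t$ edge-cut $\mathcal C^w$ must delete at least one edge of $P$ (otherwise $P$ survives in $(V,E\setminus\mathcal C^w)$), so every edge-cut, in particular a minimum one, is non-empty; it then remains only to exhibit a positive integral $w$-flow. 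After deleting the capacity-zero edges --- which no positive $w$-flow can use and which contribute $0$ to the value of every cut, so we may assume all capacities are positive integers --- it would already suffice to send $\delta:=\min_{e\in P}c(e)\ge 1$ units along $P$.

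To obtain the stronger form (an integral $w$-flow from which no further $s-w-t$ augmenting path can be extracted), I would build the residual network $G_f$ in the usual way, call a walk $s\to\cdots\to w\to\cdots\to t$ with distinct edges and strictly positive residual capacities an \emph{$s-w-t$ augmenting path}, and repeatedly augment the current $w$-flow along such a path by its bottleneck residual capacity. Integrality is maintained at every step: the bottleneck is a positive integer, and since the edges of the path are distinct each edge value changes by exactly $\pm\delta$. Termination follows from weak duality: Lemma~\ref{anyf_anyc} applied to the trivial cut $\mathcal C^w=E$ gives $\nu^w(f^w)\le c(E)<\infty$, while each augmentation raises $\nu^w$ by at least $1$. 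Finally, the constructed flow is positive exactly when at least one augmentation occurs, i.e. exactly when $G$ has an $s-w-t$ path, i.e. (by the previous paragraph) exactly when the minimum edge-cut is non-empty.

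The step I expect to be the real obstacle is checking that augmenting along an $s-w-t$ augmenting path again yields a \emph{valid} $w$-flow --- one that decomposes into subflows each following a single $s-w-t$ path --- and that the value strictly increases. The difficulty is that ``every unit passes through $w$'' is a property of a path decomposition, not of the underlying edge function (a path in our sense may even revisit $w$), so ordinary flow conservation does not preserve it for free. I would handle this by carrying an explicit path decomposition alongside the flow and re-splicing it after each augmentation: cancel the augmenting path against the existing subflows edge by edge, stitch the surviving fragments together at the vertices where they meet, and use that the augmenting path itself passes through $w$ to guarantee each resulting subflow still does. If only the lemma exactly as stated is needed, this obstacle is avoided altogether by the single-path construction of the first paragraph, which already produces an integral $w$-flow that is positive iff an $s-w-t$ path exists, equivalently iff the minimum edge-cut is non-empty.
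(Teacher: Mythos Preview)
Your proposal is correct and follows the same Ford--Fulkerson augmenting-path strategy as the paper, with two differences worth noting. First, you observe that the single-path construction of your first paragraph (push the bottleneck along one $s$--$w$--$t$ path) already proves the lemma as stated; the paper instead runs the full iterative procedure, which additionally yields a flow admitting a saturated $s$--$w$--$t$ edge-cut, but this extra property is never used downstream (Proposition~\ref{prop:NPhardness} only needs existence of a positive integral $w$-flow, and Remark~\ref{remark-1} confirms the iterated flow is not maximum anyway). Second, you are more explicit than the paper about the real technical issue in the iterative version: augmenting along an $s$--$w$--$t$ residual path that uses reverse edges need not obviously produce an edge-function that still decomposes into $s$--$w$--$t$ paths. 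The paper handles this only by the side condition ``provided that the selected augmenting path increases the flow through $w$,'' without spelling out why the result remains a valid $w$-flow in the path-based sense; your re-splicing sketch is the right way to close that gap if the iterative version is wanted, and your observation that the single-path argument sidesteps it entirely is well taken.
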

\begin{proof}
Consider the variant of the well-known Ford-Fulkerson algorithm for (single-commodity) maximum flow \cite{ps1982}, where at each round the algorithm picks an augmenting $s-w-t$ path rather than a $s-t$ path in the residual graph, provided that the selected augmenting path increases the flow through $w$. 
This is necessary, as it is possible to pick an augmenting $s-w-t$ path that uses reverse edges through $w$ and thus reduces the $w$-flow or leaves it unchanged, even though it increases the total $s-t$ flow.
The augmenting $s-w-t$ path algorithm eventually terminates, since (i) the $w$-flow increases by at least one unit at each iteration, and (ii) the maximum possible $w$-flow is upper-bounded (e.g., by the sum of capacities of the outgoing edges from $s$). Note that the augmenting $s-w-t$ path algorithm terminates, if and only if there is an $s-w-t$ edge-cut $\mathcal{C}^w$ in the graph where each edge $e\in\mathcal{C}^w$ is saturated. Indeed, if that were not true then there would be a $s-w-t$ path not using saturated edges. But then we could route more $w$-flow along the forward edges of that path in the residual graph, and the algorithm would not have terminated.
The constructed flow is integral, since at each step the flow on any edge is integral. Furthermore, the flow will be zero, if and only if the minimum edge-cut is the empty set, since in that case there are no $s-w-t$ paths in the original graph $G$.
\end{proof}

\begin{remark}
\label{remark-1}
Note that the variant of Ford-Fulkerson of Lemma \ref{maxc_minc} may not find the maximum $w$-flow. For example, consider the directed graph of Figure \ref{example-remark}, where all edges have infinite capacity except for edges $s\to w$, $w\to t$, and $u\to v$ with capacity 2. The algorithm could pick the augmenting path $s\to w\to t$ with bottleneck capacity 2. In the second iteration, there is no $s-w-t$ augmenting path in the residual graph, and the algorithm terminates returning a $w$-flow of value 2. Interestingly, the maximum $w$-flow in this example is equal to 3: we send one unit of flow along path $s\to w\to t$, one unit of flow along path $s\to w\to v\to t$, and finally one unit of flow along $s\to u\to v\to w\to t$.
\end{remark}

\begin{remark}
\label{remark-2}
For node-constrained flow, it is not true that the value of the minimum $s-w-t$ edge-cut is equal to the value of the maximum $w$-flow. For example, in the graph of Figure \ref{example-remark}, the minimum $s-w-t$ edge-cut has capacity 4 (e.g., edges $s\to w$ and $w\to t$), but the maximum $w$-flow has a value of 3. Nevertheless, Lemma \ref{anyf_anyc} guarantees the minimum $s-w-t$  cut upper-bounds the value of the maximum $w$-flow.
\end{remark}

\begin{remark}
\label{remark-3}
Even if all capacities are integral, the maximum $w$-flow may be fractional. For example, consider the graph of Figure \ref{example-remark} but assume the edges with capacity 2 now have unit capacity. The maximum $w$-flow has value $\frac{3}{2}$, which is fractional.
\end{remark}

The properties in Remarks \ref{remark-1}, \ref{remark-2}, and \ref{remark-3} imply that $s-w-t$ flow is fundamentally different from the traditional $s-t$ flow.
\begin{figure}[t]
	\begin{centering}
		\textsf{\includegraphics[width=0.55\linewidth]{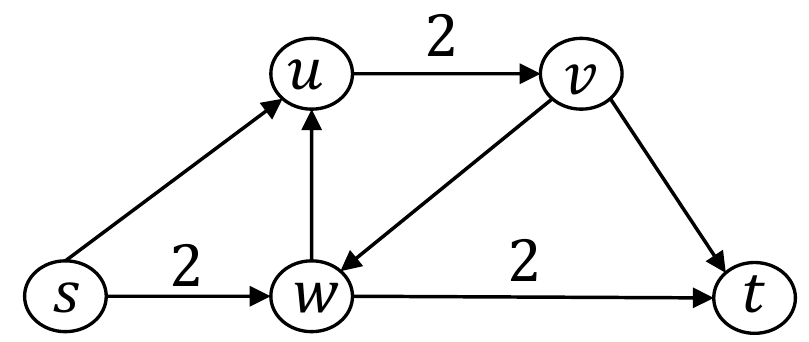}}
		\par\end{centering}
	
	\caption{Example Graph for Remarks \ref{remark-1}, \ref{remark-2}, and \ref{remark-3}.}
	\label{example-remark}
\end{figure}
We are now ready to prove that the decision version of the maximum $w$-flow is NP-hard.

\begin{proposition}
\label{prop:NPhardness}
Given a multi-commodity flow network $G=(V,E,c)$ with directed edges, the decision version of maximum $w$-flow is NP-hard.
\end{proposition}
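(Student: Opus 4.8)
The plan is to give a polynomial many-one reduction from the problem of deciding whether a directed graph contains an $s-w-t$ path, which is NP-hard by Lemma~\ref{lemma:exists-path-NPhard}. Given an instance $(G=(V,E),s,w,t)$ of that problem, I build a single-commodity flow network on the same graph $G$ by assigning capacity $c(e)=1$ to every edge $e\in E$, taking the single commodity $(s,t)$, and setting its maximum demand to $D=1$. Since a single-commodity network is a special case of a multi-commodity one, this is a legitimate instance of the decision version of maximum $w$-flow, which asks whether $\nu_{max}^w(s,t)\geq D=1$.

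Next I would prove the two directions of the equivalence. For the ``only if'' direction, if $G$ has an $s-w-t$ path $p$, then because $p$ is a path its edges are pairwise distinct, so setting $f(p)=1$ and $f(p')=0$ for every other $s-w-t$ path $p'$ respects all unit capacity constraints and meets the demand exactly; hence $\nu_{max}^w(s,t)\geq 1$ and the DMF instance is a ``yes''. For the ``if'' direction, suppose $\nu_{max}^w(s,t)\geq 1>0$. By the path-based formulation, any $w$-flow is a sum of non-negative subflows, one per $s-w-t$ path; a strictly positive value therefore forces at least one $s-w-t$ path to carry positive flow, so such a path exists in $G$. (Alternatively, one may invoke Lemma~\ref{maxc_minc}: with integral, here unit, capacities the constructed integral $w$-flow is positive if and only if the minimum $s-w-t$ edge-cut is non-empty, i.e.\ if and only if an $s-w-t$ path exists.) Consequently the DMF instance is a ``yes'' if and only if $G$ contains an $s-w-t$ path.

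Finally I would observe that the reduction is clearly computable in polynomial time, so the NP-hardness of $s-w-t$ path existence carries over to the decision version of maximum $w$-flow, which proves the proposition. I would also remark that, combined with the node-constrained analogue of Lemma~\ref{lemma:connection}, this immediately yields NP-hardness of $TE_{LU}$ under the middlepoint constraint.

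I expect the main point requiring care — rather than a genuine obstacle — is the interplay between the definition of $w$-flow via \emph{paths} (not simple paths) and the reduction: one must check that a single $s-w-t$ path suffices to certify value $1$ under unit capacities, and conversely that a positive optimal value forces the existence of such a path. Both facts are immediate from the path-based LP, but they are precisely the places where the node-constrained setting departs from ordinary $s-t$ flow (cf.\ Remarks~\ref{remark-1}--\ref{remark-3}), so they deserve an explicit word in the write-up; everything else is routine.
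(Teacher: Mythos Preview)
Your proposal is correct and follows essentially the same approach as the paper: the identical reduction from the $s{-}w{-}t$ path problem of Lemma~\ref{lemma:exists-path-NPhard}, with unit capacities and a single unit-demand commodity. Your reverse direction via the path-based LP (positive objective forces some $f(p)>0$, hence an $s{-}w{-}t$ path exists) is in fact more direct than the paper's route through Lemma~\ref{maxc_minc} and an explicit integral-flow construction, and you already list the paper's argument as your alternative.
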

\begin{proof}
We show that even the single-commodity version is NP-hard. Our strategy is to reduce the $s-w-t$ path problem in Lemma \ref{lemma:exists-path-NPhard} to the maximum $w$-flow problem.
In this direction, we start with a directed graph $G=(V,E)$ and three distinct nodes $s,t,w\in V$. We subsequently construct in polynomial time a flow network $G'$ from $G$ by considering a single commodity from $s$ to $t$ of unit demand $D=1$, and by associating each edge $e\in E$ with a unit capacity. Our claim is that there is a path from $s$ to $t$ through $w$ in graph $G$, if and only if $\nu_{max}^w(s,d)\geq 1$ in flow network $G'$.

First, consider a path $P=(s,e_1,\dots,e_m,t)$, so that every edge $e_i$ in the path appears only once and node $w$ appears in the path. It is then possible to send one unit of flow from $s$ to $t$, given the unit capacities. Thus, the maximum flow is at least 1. For the reverse direction, assume there is a maximum flow no less than 1 in $G'$. 
Since we only have one commodity and integral capacities, Lemma \ref{maxc_minc} implies that there exists an integral $w$-flow (for infinite demands). This flow is positive (and thus has value at least 1) because the maximum flow is no less than 1 by assumption, and thus the minimum edge-cut cannot be the empty set (recall we route flow along $s-w-t$ paths). 
For $D=1$, there is thus an integral maximum $w$-flow of value 1.
Moreover, each edge can be used at most once in that flow due to its unit capacity. Now, consider any path $P_{(s,w)}$ that carries the integral flow from $s$ to $w$, and any path $P_{(w,d)}$ that carries the integral flow flow from $w$ to $d$. Since each edge is used at most once, this means that the union $P_{(s,w)}\cup P_{(w,d)}$ (1) goes from $s$ to $d$ through $w$, and (2) visits any edge at most once; hence, it is a $s-w-t$ path.
\end{proof}

\subsubsection{Hardness of $TE_{LU}$}
\label{sec:hardness_TE}

For the flow network $G$ with the $L$ commodities, we define the $TE_{LU}$ in the following manner:
\begin{align}
\min\qquad & \theta\label{TE2-LU-obj}\\
\text{subject to }\qquad & \sum_{i=1}^L\sum_{p\in \mathcal{P}_{i,e}^w} f_i(p)\leq \theta\cdot c(e), \forall e\in E\label{TE2-LU-first-constraint}\\
& \sum_{p\in \mathcal{P}_i^w}f_i(p)\geq D_i, \forall i\in\{1,\dots,L\}\label{TE2-LU-second-constraint}\\
& f_i(p)\geq 0, \forall i\in\{1,\dots,L\}, \forall p\in \mathcal{P}_i^w\label{TE2-LU-last-constraint}
\end{align}
We get the following corollary:
\begin{corollary}
\label{corollary:NP-hard-TE-LU}
In directed graphs, it is NP-hard to solve the $TE_{LU}$ \eqref{TE2-LU-obj}--\eqref{TE2-LU-last-constraint}.
\end{corollary}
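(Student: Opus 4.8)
The plan is to reduce the decision version of the maximum $w$-flow (shown NP-hard in Proposition~\ref{prop:NPhardness}) to the $TE_{LU}$ problem \eqref{TE2-LU-obj}--\eqref{TE2-LU-last-constraint}, mirroring exactly the argument of Lemma~\ref{lemma:connection} but now in the node-constrained setting. Concretely, given an instance of the node-constrained DMF --- a directed flow network $G=(V,E,c)$, a middlepoint $w$, and $L$ commodities $(\bm{s},\bm{t})$ with demands $D_i$ --- I would feed the same network, middlepoint, and demands into the $TE_{LU}$ program and ask whether the optimal value $\theta^*$ satisfies $\theta^*\leq 1$.

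The correctness argument is the node-constrained analogue of Lemma~\ref{lemma:connection}. First I would observe that the constraints \eqref{TE2-LU-first-constraint}--\eqref{TE2-LU-last-constraint} with $\theta=1$ are precisely the feasibility constraints of the node-constrained maximum $w$-flow program (capacity constraints over $\mathcal{P}^w_{i,e}$, together with $\sum_{p\in\mathcal{P}_i^w} f_i(p)\geq D_i$, which combined with the $w$-flow program's $\sum_{p\in\mathcal{P}_i^w} f_i(p)\leq D_i$ forces equality). Hence if DMF has a ``yes'' answer, there is a $w$-flow saturating every $D_i$ within the capacities, which is feasible for $TE_{LU}$ with $\theta=1$; since $TE_{LU}$ minimizes $\theta$, we get $\theta^*\leq 1$. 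Conversely, if $\theta^*\leq 1$, constraint \eqref{TE2-LU-first-constraint} gives $\sum_i\sum_{p\in\mathcal{P}_{i,e}^w} f_i(p)\leq\theta^* c(e)\leq c(e)$, so the flow $f$ is a valid node-constrained $w$-flow meeting each demand, whence $\nu_{max}^w(\bm{s},\bm{t})\geq\sum_{i=1}^L D_i$ and DMF answers ``yes''. Thus an algorithm solving $TE_{LU}$ (or even merely deciding whether $\theta^*\leq 1$) would decide the node-constrained DMF, which is NP-hard by Proposition~\ref{prop:NPhardness}; therefore $TE_{LU}$ is NP-hard.

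I do not expect a serious obstacle here: the corollary is essentially immediate once Proposition~\ref{prop:NPhardness} is in hand and Lemma~\ref{lemma:connection} has been established for the unconstrained case, since the only change is replacing the path sets $\mathcal{P}_i,\mathcal{P}_{i,e}$ by their node-constrained counterparts $\mathcal{P}_i^w,\mathcal{P}_{i,e}^w$ throughout. The one point deserving a careful sentence is the direction of the demand inequality: DMF (Definition~\ref{def:decision}) together with constraint \eqref{TE-MCF-second-constraint} pins the $w$-flow value at exactly $\sum_i D_i$, whereas $TE_{LU}$ uses the lower-bound form $\sum_{p\in\mathcal{P}_i^w} f_i(p)\geq D_i$; I would note that this discrepancy is harmless because any flow exceeding a demand can be scaled down on the offending commodity without increasing any edge load, so feasibility for one formulation transfers to the other. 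With that remark in place, the proof is a two-paragraph restatement of Lemma~\ref{lemma:connection} and invocation of Proposition~\ref{prop:NPhardness}.
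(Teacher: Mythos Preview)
Your proposal is correct and matches the paper's own proof, which simply notes that Lemma~\ref{lemma:connection} extends verbatim to the node-constrained setting (replacing $\mathcal{P}_i,\mathcal{P}_{i,e}$ by $\mathcal{P}_i^w,\mathcal{P}_{i,e}^w$) and then invokes Proposition~\ref{prop:NPhardness}. Your write-up is more detailed than the paper's two-sentence version, but the logical content is identical.
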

\begin{proof}
Proposition \ref{prop:NPhardness} shows NP-hardness for the decision version of the maximum $w$-flow.
We can then easily extend Lemma \ref{lemma:connection} to the case of a maximum $w$-flow to prove that the $TE_{LU}$ \eqref{TE2-LU-obj}--\eqref{TE2-LU-last-constraint} is NP-hard, since it is at least as hard as the decision version of the maximum $w$-flow.
\end{proof}

\textbf{Simple paths vs. paths vs. walks.}
Finally, we discuss in more detail our definition of $w$-flow. 
In principle, we can define the $w$-flow in terms of simple paths, paths or general paths with edge repetitions (walks).
For walks, the maximum $w$-flow is in fact polynomial.
To see why, consider the single-commodity $s-t$ maximum flow problem with middlepoint $w$ (the multi-commodity case is similar). We first solve the multi-commodity flow problem with source-destination pairs $(s,w)$ and $(w,t)$, maximizing the minimum of  the amount of two flows. We can subsequently get $s-w-t$ paths after decomposing the two flows into paths. 

Interestingly, our definition excludes walks.
The reason for this choice is twofold. 
First, in computer networks routing loops are considered bad and are avoided by most routing algorithms, since they can lead to redundant use of the precious bandwidth resources and, more seriously, to endless routing loops and failure of packet delivery \cite{KR16}. Thus, in practical networking settings TE with simple paths is the primary option. 
Second, our work is inspired to a significant degree by prior work on flow centrality \cite{fbw1991} which explicitly defines flow in terms of simple paths. 
Based on Lemma \ref{lemma:exists-simple-path-NPhard}, we can show similar to the case of paths that the decision problem of $w$-flow with simple paths is also NP-hard in directed graphs. 
This suggests there is no difference in terms of hardness if we use simple paths or paths; we select the latter because paths are more general than simple paths. 



\subsection{The Undirected Case}
\label{sec:undirected}

We demonstrate an interesting dichotomy between directed and undirected graphs. In detail, the maximum multi-commodity $\bm{s}-w-\bm{t}$ flow in an undirected graph can be computed in strongly polynomial time. Note that the main difference between the directed and the undirected flow is that the former assumes separate capacities for each direction $(u,v)$ and $(v,u)$ whereas the latter assumes a single capacity for the undirected edge $e$, which upper bounds the \textit{total} flow that we can send in both directions (but not in any individual direction).

\begin{proposition}
\label{prop:undirected-poly}
The maximum multi-commodity flow $\nu_{max}^w$ in any flow network $G=(V,E)$ with undirected edges, where $w\in V$, can be computed exactly in strongly polynomial time.
\end{proposition}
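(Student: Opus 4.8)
The plan is to recast the maximum multi-commodity $\bm{s}-w-\bm{t}$ flow as a compact linear program that has the same structure as an ordinary multi-commodity flow LP, and then invoke Tardos's strongly polynomial algorithm \cite{Tardos1986} for such programs. The LP I would write splits each commodity $i$ at the middlepoint $w$ into two ``half-commodities'' $(s_i,w)$ and $(w,t_i)$ sharing a single throughput variable $v_i$ with $v_i\le D_i$: for every directed edge $e$ (each undirected edge contributing both orientations) introduce nonnegative variables $g_i(e)$ and $h_i(e)$; impose flow conservation for $g_i$ at every node except $s_i$ and $w$ (net outflow $v_i$ at $s_i$, net inflow $v_i$ at $w$) and for $h_i$ at every node except $w$ and $t_i$; impose for each undirected edge $\{a,b\}$ the single capacity constraint $\sum_i\bigl(g_i(a,b)+g_i(b,a)+h_i(a,b)+h_i(b,a)\bigr)\le c(\{a,b\})$; and maximize $\sum_i v_i$. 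This LP has $O(L\cdot|E|)$ variables and constraints and, apart from the demand rows, a $\{0,\pm1\}$ constraint matrix, so it is an instance of the multi-commodity flow LP and is solvable in strongly polynomial time by \cite{Tardos1986}. It then remains to show that its optimum $\Theta$ equals $\nu_{max}^w$.

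For $\Theta\ge\nu_{max}^w$ I would take an optimal $\bm{s}-w-\bm{t}$ flow, decompose it per commodity into $s_i-w-t_i$ paths, cut each such path at its first visit to $w$ into an $s_i-w$ portion and a $w-t_i$ portion, and load $g_i$ (resp.\ $h_i$) by these portions weighted by the path's flow, setting $v_i$ to the total. Since a $w$-flow path uses each directed edge at most once, the two portions partition its directed edges, so on every undirected edge the resulting $g_i,h_i$-load equals the original flow's load; hence the LP point is feasible, with value $\nu_{max}^w$.

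The reverse inequality $\Theta\le\nu_{max}^w$ is the substantive part. Given an optimal LP solution I would decompose each $g_i$ into simple $s_i-w$ paths (discarding the cycles of the decomposition, which only lowers edge loads and does not affect the throughput $v_i$) and each $h_i$ into simple $w-t_i$ paths, and then pair an $s_i-w$ path $P$ with a $w-t_i$ path $Q$ via a transportation problem so that the pairing weights recover the individual path weights. The key lemma I would prove is that any such pair can be \emph{stitched} into a single $s_i-w-t_i$ ``path'' (a directed walk with no repeated directed edge) that, on every undirected edge, uses at most as much as $P$ and $Q$ together: take the union multigraph $\mathcal{H}$ obtained by placing one copy of every edge of $P$ and one copy of every edge of $Q$; its only odd-degree vertices are $s_i$ and $t_i$, and $\deg_{\mathcal{H}}(w)=2$, so (on the component containing them, which is connected since $P$ joins $s_i$ to $w$ and $Q$ joins $w$ to $t_i$) $\mathcal{H}$ has an Eulerian trail from $s_i$ to $t_i$ that necessarily traverses both edges at $w$ and hence visits $w$; orienting this trail so that each pair of parallel edges (an edge lying in both $P$ and $Q$) is traversed in the two opposite directions prevents any directed edge from repeating, while the edgewise load of the trail equals that of $P$ and $Q$ combined. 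Routing each pairing weight along its stitched path then yields an $\bm{s}-w-\bm{t}$ flow of value $\sum_i v_i=\Theta$ whose load on every undirected edge is bounded edgewise by the LP load, hence at most the capacity.

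The main obstacle is this stitching lemma, specifically the claim that the Eulerian trail of $\mathcal{H}$ can always be oriented with every parallel pair reversed relative to itself without losing the passage through $w$ or the $s_i$-to-$t_i$ routing. This is exactly where undirectedness is used: in a directed graph the two copies of a common edge would be forced to point the same way, which is ultimately why the directed version is NP-hard while the undirected one is tractable. I would establish the orientation claim by induction on the number of same-oriented parallel pairs, each step either canceling a closed sub-walk enclosed by such a pair when that sub-walk avoids $w$, or, when deleting the pair would disconnect its endpoints (so $s_i,t_i$ lie on one side), splicing an Eulerian circuit of the severed side in between the two copies so that they are used in opposite directions.
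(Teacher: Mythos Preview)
Your overall plan coincides with the paper's: recast the problem as a compact arc-based LP that couples, for each commodity $i$, an $s_i$--$w$ flow $g_i$ and a $w$--$t_i$ flow $h_i$ of equal value under a shared undirected capacity constraint, and invoke Tardos. (The paper routes everything from $w$ to a super-sink via auxiliary nodes $z_i,z$ and enforces the coupling by $f_i(s_i,z_i)=f_i(t_i,z_i)$, obtaining $\mathcal V^*=2\nu_{max}^w$, but the content is the same.) Your direction $\Theta\ge\nu_{max}^w$ is also identical in spirit.

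The stitching lemma for $\Theta\le\nu_{max}^w$ is where the two diverge, and where your argument has a gap. The paper does not use Eulerian trails: given $P_1\colon s_i\to w$ and $P_2\colon w\to t_i$ sharing a directed edge $(u,v)$, it replaces them by $P_1[s_i,u]\cdot P_2[w,u]^{-1}$ and $P_1[v,w]^{-1}\cdot P_2[v,t_i]$ and iterates, terminating because the total length strictly drops. Your Eulerian framing is a legitimate alternative, but the two-case induction you sketch does not close. Take $P=s_i\,a\,b\,w$ and $Q=w\,a\,b\,t_i$: the trail $s_i\to a\to b\to w\to a\to b\to t_i$ has the parallel pair $\{a,b\}$ same-oriented, the enclosed sub-walk $b\to w\to a$ contains $w$, yet deleting both copies of $\{a,b\}$ leaves the path $a$--$w$--$b$ and does \emph{not} disconnect $a$ from $b$, so neither your ``cancel'' case nor your ``splice'' case fires. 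The missing move is: when $w$ lies in the enclosed sub-walk $B$, drop both parallel copies and reverse $B$, obtaining $A\cdot B^{-1}\cdot C$, which is still an $s_i$--$w$--$t_i$ trail and strictly shorter. With this patch (and inducting on total length rather than on the number of same-oriented pairs, which a reversal can increase), your argument goes through and is equivalent to the paper's.
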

\begin{proof}
Assume a multi-commodity undirected graph $G=(V,E)$ with $L$ commodities of the form $(s_i,t_i)$. For simplicity, we assume infinite maximum demands $D_i$ (so that the maximum demand constraints are trivially satisfied). 
To prove the claim, we construct a directed graph $G'=(V',E')$ from $G$ as follows. We first replace each undirected edge $(u,v)\in E$ by two directed edges $(u,v)$ and $(v,u)$ edges with infinite capacities. We next introduce $L$ new nodes $z_1,\dots,z_L$ (one for each commodity), and for each $z_i$ we add the two directed edges $(s_i,z_i)$ and $(t_i,z_i)$. Finally, we introduce a node $z$ and $L$ directed edges $(z_i,z)$ from each $z_i$ to $z$. Thus, we have that $V'=V\cup\{z_1,\dots,z_L,z\}$ and $E'=E_G\cup(\cup_{i}\{(s_i,z_i),(t_i,z_i),(z_i,z)\})$, where $E_G$ are the edges that we got by replacing each undirected edge in $E$ by two directed edges.  The capacities of the newly constructed edges of the form $(s_i,z_i),(t_i,z_i),(z_i,z)$ are infinite. The construction is illustrated in Figure \ref{fig:2}.

Next, we claim that the maximum flow in $G$ can be computed by  the following arc-based linear program for $G'$:
\begin{align}
\text{maximize } & \mathcal{V}=\sum_{i=1}^L\sum_{e\in {w}^{+}}f_i(e)\notag\\
\text{subject to } & \sum_{i=1}^L [f_i(u,v) + f_i(v,u)]\leq c(e), \forall e=(u,v)\in E\label{c1}\\
& \sum_{e\in u^+}f_i(e)=\sum_{e\in u^-}f_i(e), \forall i, \forall u\in V', w\neq u\neq z\label{c2}\\
& f_i(e)\geq 0, \forall e\in E', \forall i\in\{1,\dots,L\}\label{c3}\\
& f_j(s_i,z_i) = 0, \forall i,j\in\{1,\dots,L\}\text{ with } i\neq j\label{before-last}\\
& f_i(s_i,z_i) = f_i(t_i,z_i), \forall i\in\{1,\dots,L\}\label{last}
\end{align}

\begin{figure}[th]
	\begin{centering}
	\subfloat[$G$]{\includegraphics[width=0.77\linewidth]{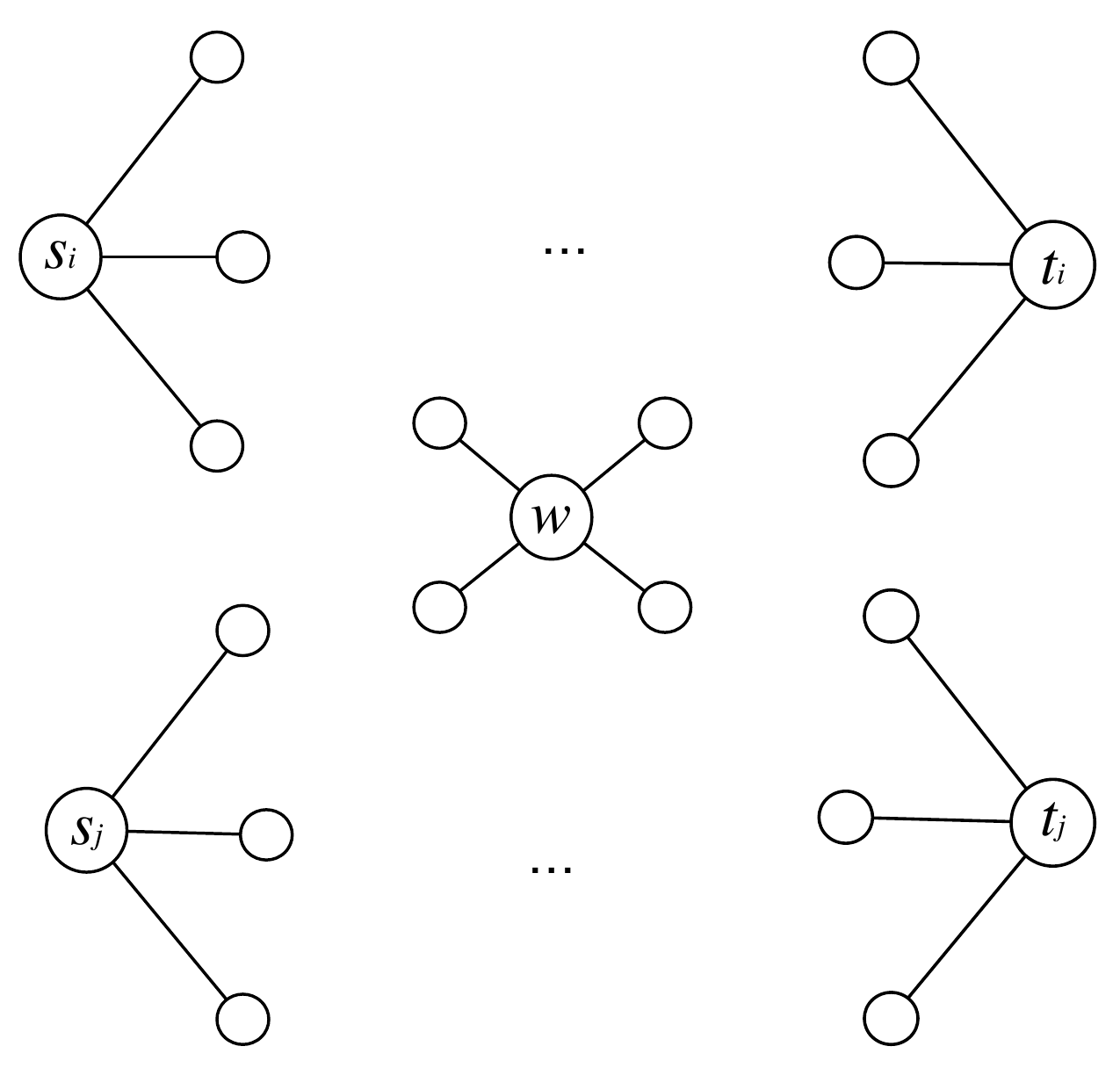}}
	\vspace{-3mm}
	\hspace{0.01in}
	\subfloat[$G'$]{\includegraphics[width=0.77\linewidth]{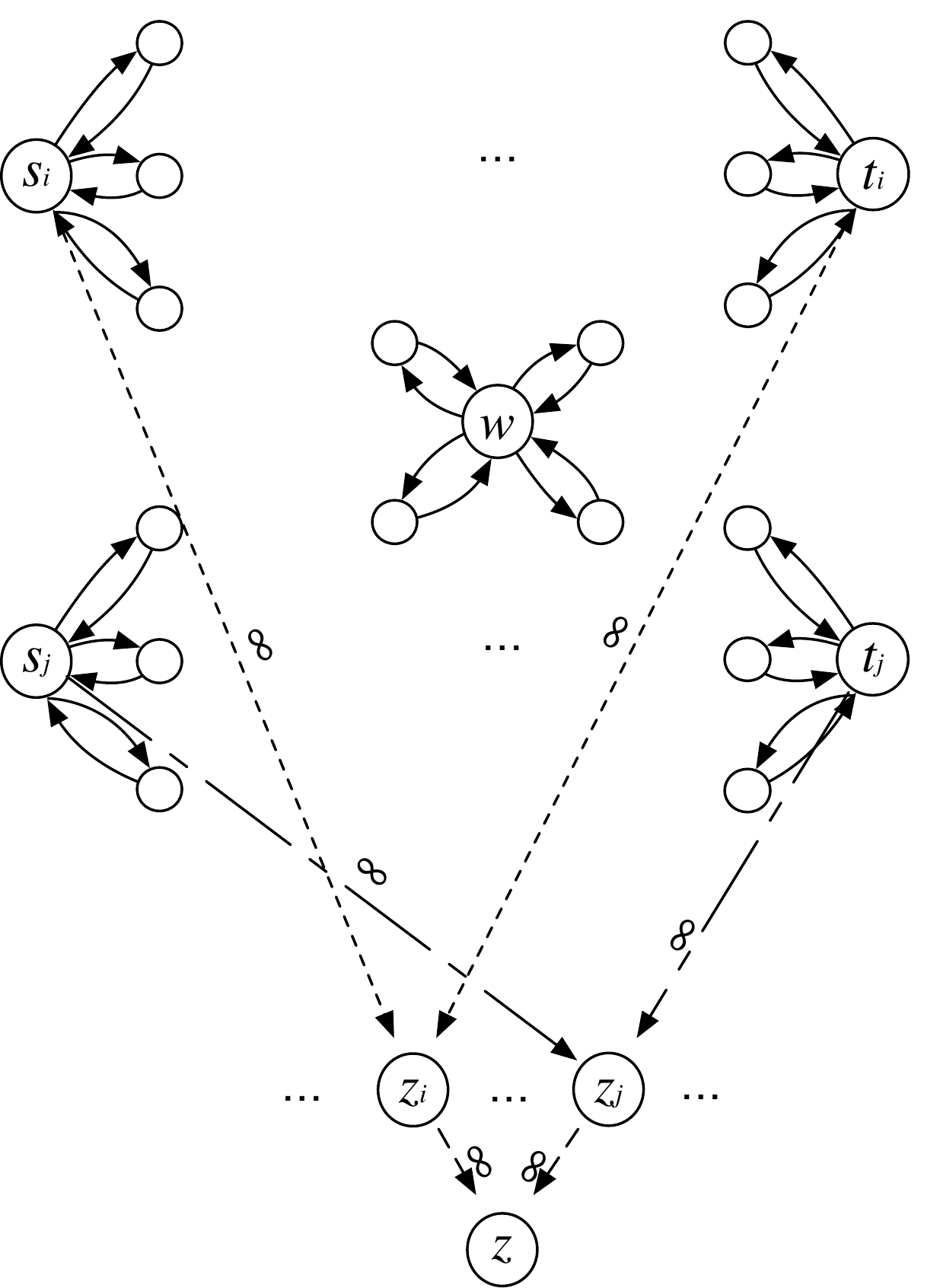}}
	\caption{Illustration of Proposition~\ref{prop:undirected-poly}.}	
	\label{fig:2}
	\end{centering}
\end{figure}

The above LP computes the maximum flow $\mathcal{V}^*$ from $w$ to $z$, where the total flow is composed of $L$ separate subflows, one for each commodity. The subflow for commodity $i$ can be sent from $w$ to $z_i$ through either $s_i$ or $t_i$. Constraints \eqref{c1}, \eqref{c2}, \eqref{c3} are the link capacity, node conservation and positive flow constraints, respectively. The link capacity constraint reflects the fact that the sum of flow units in each of the two directed edges does not exceed the capacity of the original undirected edge. Constraint \eqref{before-last} implies that node $z_i$ can only receive flow from commodity $i$. Constraint \eqref{last} is especially important because it ensures that the subflow for commodity $i$ sent through $s_i$ is the same as the one sent through $t_i$.

Now, we establish the equivalence between the original problem and the above LP by showing that (i) $2\cdot \nu_{max}^w\leq \mathcal{V}^*$, and (ii) $\mathcal{V}^*\leq 2\cdot \nu_{max}^w$. We start with (i). Assume any maximum flow through $w$ with value $\nu_{max}^w$ in $G$. We first construct the corresponding flow in $G'$. We note that the flow in $G$ consists of $L$ subflows, one for each commodity $i$, that send flow from $s_i$ to $t_i$ through $s_i-w-t_i$ paths. The idea is then to reverse the direction of each subflow in the part from $s_i$ to $w_i$, so that it now sends to the opposite direction. We then send $\nu(f_i)$ units of flow from $s_i$ to $z_i$, $\nu(f_i)$ units of flow from $t_i$ to $z_i$, and $2\cdot \nu(f_i)$ units of flow from $z_i$ to $z$. Note that that is a valid flow since it respects all constraints in the LP, and it has a value $2\cdot(\nu(f_1)+\cdots+\nu(f_L))=2\cdot\nu_{max}^w$. But then the maximum flow will be at least as large, hence $2\cdot \nu_{max}^w\leq \mathcal{V}^*$.

For the reverse direction (ii), assume a maximum flow in $G'$. Then for each commodity $i$ half units are sent from $w$ to $s_i$ and half from $w$ to $t_i$ (and subsequently $z_i$ and $z$) due to constraint \eqref{last}. 
As before, we next reverse the flow $f_i$ in all paths from $w$ to $s_i$. For each edge of $G$ we then send on each direction as many units of flow as we send in $G'$ (after reversing the direction from $w$ to $s_i$). The key is that in graph $G'$ the same amount of flow is sent from $w$ to $s_i$ and $w$ to $t_i$, which implies that the constructed flow in $G$ will respect all capacity and conservation constraints, including for node $w$. 
Note however a caveat: it is not obvious whether we can always combine directed paths from $s_i$ to $w$ with directed paths from $w$ to $t_i$ 
so that \textit{all} resulting $s_i-w-t_i$ ``paths'' are valid paths without edge repetitions. If that is possible, then the constructed flow in $G$ is a valid flow, since it consists of $s_i-w-t_i$ paths without edge repetitions. But we need to address the case where combining directed paths from $s_i$ to $w$ with directed paths from $w$ to $t_i$ will result in at least one non-valid $s_i-w-t_i$ path with edge repetitions. 

To show why this does not negatively affect our argument, let $P_1$ be \textit{any} $s_i-w$ and $P_2$ \textit{any} $w-t_i$ path in $G'$ whose concatenation is not a valid $s_i-w-t_i$ path. In this case, let $e=(u,v)$ be a common edge of $P_1$ and $P_2$ (such an edge must occur by assumption). Path $P_1$ can then be represented as $s_i\leadsto_{P_1} u \to v\leadsto_{P_1} w$, where the notation $x\leadsto_{P} y$ refers to the path segment from node $x$ to node $y$ along path $P$. Note that we may have that $x$ coincides with $y$, in which case $x\leadsto_{P} y= x$. Similarly, path $P_2$ can be represented as $w\leadsto_{P_2} u \to  v\leadsto_{P_2} t_i$. The trick is to consider the $s_i-w-t_i$ path $P'=s_i\leadsto_{P_1}u\leadsto_{P_2^{-1}}w\leadsto_{P_1^{-1}}v\leadsto_{P_2}t_i$
, where $x\leadsto_{P^{-1}} y$ refers to the path from $x$ to $y$ along the reversed edges of $P$.
We first discuss the case where $P'$ is a valid path without edge repetitions. We then send in graph $G$ as many units of flow along $P'$ as we did along $P_1$ or $P_2$ in $G'$. The critical observation is that an undirected edge in $G$ will receive \textit{at most} as much flow as the sum of its two directed edges in $G'$, since all we do is change the direction along which we send the flow and possibly remove certain path segments. The trick is illustrated in Figure \ref{P4}. 
In the case where the resulting path $P'$ is not valid and contains edge repetitions, we repeat the step in Figure \ref{P4} on the new $s_i-w$ and $w-t_i$ paths until they have no common edge. The process is guaranteed to stop after a finite number of iterations since after each step the total length of the new $s_i-w$ and $w-t_i$ paths decreases.
\begin{figure}[t]
	\begin{centering}
		\subfloat[$G'$]{\includegraphics[width=0.62\linewidth]{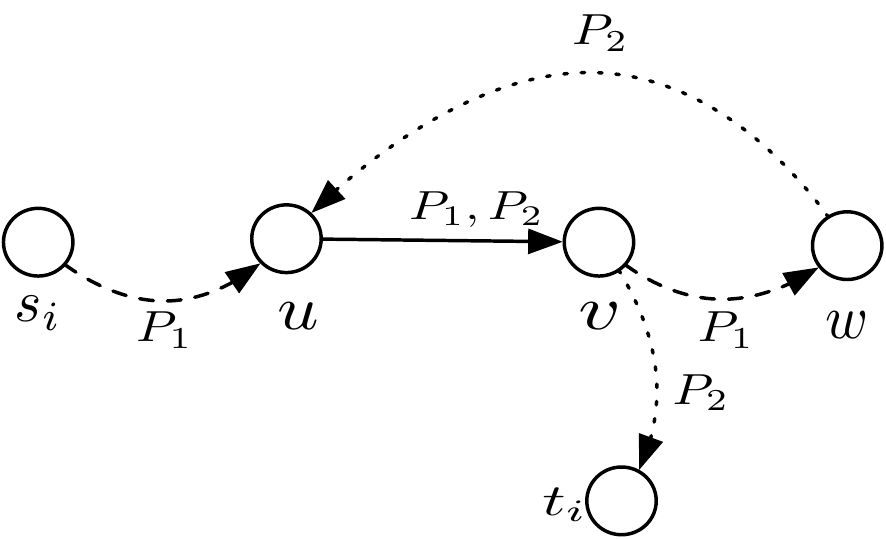}}
		\\
		\subfloat[$G$]{\includegraphics[width=0.62\linewidth]{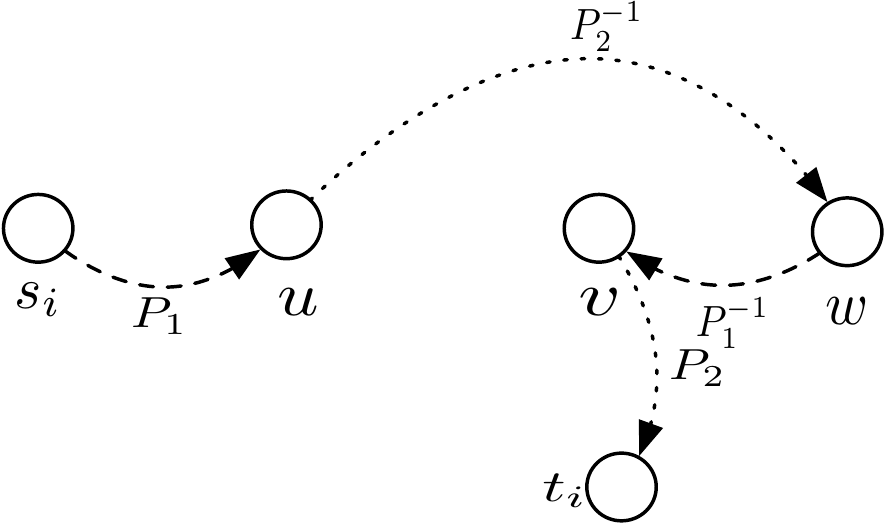}}
		\caption{Getting a valid $s_i-w-t_i$ path for undirected graph $G$ from invalid path in directed graph $G'$ in Proposition \ref{prop:undirected-poly}.}	
		\label{P4}
	\end{centering}
\end{figure}
In principle, we can do this for all non-valid $s_i-w-t_i$ paths to get $s_i-w-t_i$ paths which are valid, without violating any capacity constraint. 
Our argument shows that there must exist a valid multi-commodity $w$-flow in $G$, which carries the same amount of flow from $s_i$ to $t_i$ as the flow from $w$ to $s_i$ in $G'$ (note we do not actually need to explicitly construct that flow in $G$ as we just care about $\nu_{max}^w$).
Concretely, commodity $i$ sends in $G$ the same number of flow units as $w$ sends to $s_i$ (or $t_i$) in $G'$. 
Obviously, the value of the constructed $w$-flow in $G$ is half that in $G'$, thus for the maximum $w$-flow in $G$ we must have that $\mathcal{V}^*\leq 2\cdot \nu_{max}^w$.

Finally, given that (i) the constraints of the LP have a size that is polynomial in $|V|$ and $|E|$, and (ii) the constraint matrix of the LP only contains entries $-1,0,1$, we deduce that computing the maximum $\bm{s}-w-\bm{t}$ flow in an undirected graph can be solved in strongly polynomial time \cite{Tardos1986}.
\end{proof}

\textbf{An alternative type of undirected flow.}
We previously considered that the flow in undirected networks can pass an edge twice, albeit in different directions. What if we only permit end-to-end paths that can pass an edge at most once in any direction?

To this end, we first review a major result for the $k$-disjoint path ($k$DP) problem from the theory of graph minors by Robertson and Seymour \cite{Robertson1995}.
\begin{theorem}[$k$DP Problem]
\label{theorem:kdp}
Given a graph $G=(V,E)$ and $k$ pairs $(s_1,t_1),\dots,(s_k,t_k)$ of vertices of G, the $k$DP problem of deciding whether there exist pairwise vertex-disjoint or edge-disjoint paths $P_1,\dots,P_k$ such that $P_i$ connects $s_i$ and $t_i$ ($1\leq i\leq k$) is in P, when $k$ is fixed and not part of the input \cite{Robertson1995}. When $k$ is part of the input, both the vertex-disjoint and edge-disjoint decision problems are NP-complete \cite{Karp1975,Even1975}.
\end{theorem}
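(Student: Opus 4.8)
The plan is to treat Theorem~\ref{theorem:kdp} as the classical result it is and recall the structure of its proof rather than reprove it from scratch, since the two halves are cited: membership in P for every fixed $k$ is the Robertson--Seymour disjoint paths algorithm \cite{Robertson1995}, while the NP-completeness for $k$ part of the input is due to \cite{Karp1975} (vertex-disjoint) and \cite{Even1975} (edge-disjoint). I would present the two directions separately, and note up front the contrast with the directed case, where the problem is already NP-hard at $k=2$ by Theorem~\ref{theorem:2node-disjoint}.

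For the polynomial-time direction with $k$ fixed, the argument is a win--win dichotomy on the treewidth of $G$. If the treewidth of $G$ is bounded by some function $g(k)$, the problem is solved by dynamic programming over a tree decomposition: the existence of $k$ pairwise vertex- (or edge-) disjoint paths with prescribed endpoints is expressible by a monadic second-order formula whose length depends only on $k$, so Courcelle's theorem — or, equivalently, a direct DP that records for each bag the partial ``linkage pattern'' of the paths crossing it — gives a linear-time algorithm once a bounded-width decomposition is available. If instead the treewidth of $G$ exceeds $g(k)$, then by the Excluded Grid Theorem $G$ contains a large grid minor, hence a large (flat) wall, and inside a sufficiently large flat wall one can locate an \emph{irrelevant vertex} $v$: a vertex such that $G$ has the desired linkage iff $G - v$ does, because any solution can be rerouted through the ``interior'' of the wall so as to avoid $v$. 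Deleting irrelevant vertices repeatedly, the number of deletions stays polynomial, and once the treewidth drops below $g(k)$ we invoke the bounded-width subroutine.

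The genuinely hard part is the irrelevant-vertex step — proving that a huge nearly-planar wall always contains a vertex that no minimal solution needs. This is precisely where the full graph-minors machinery (the structure theorem with its handling of vortices and apex vertices when the wall is only approximately flat, together with the homotopy/rerouting arguments) is required; for a self-contained treatment it is cited rather than reproduced. The bounded-treewidth dynamic program and the reduction to it are, by comparison, routine.

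For the NP-completeness direction, with $k$ part of the input, I would give a polynomial reduction from a standard NP-complete problem: for the vertex-disjoint version, from $3$-SAT (or directly from exact cover / $3$-dimensional matching as in \cite{Karp1975}), using variable and clause gadgets so that a consistent family of terminal-to-terminal paths exists iff the formula is satisfiable; for the edge-disjoint version, NP-completeness follows from the hardness of the integral multi-commodity / edge-disjoint connecting paths problem \cite{Even1975}, or alternatively by subdividing every edge to pass from the edge-disjoint to the vertex-disjoint formulation. These reductions are direct and present no real obstacle. Membership in NP for both versions is immediate, since a list of $k$ paths is a polynomial-size certificate verifiable in polynomial time.
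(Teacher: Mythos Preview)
The paper does not prove Theorem~\ref{theorem:kdp}; it is stated as a background result with citations to \cite{Robertson1995}, \cite{Karp1975}, and \cite{Even1975}, and no argument is given. Your sketch is therefore not comparable to a ``paper's proof'' --- there is none --- but as an outline of the classical proofs it is accurate: the treewidth win--win with the irrelevant-vertex technique for the fixed-$k$ polynomial case, and the reductions from 3SAT/exact cover and multi-commodity flow for the NP-completeness when $k$ is part of the input, are exactly the standard arguments behind the cited references.
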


Note that even though the $k$DP problem is in P, the algorithm is not practically feasible, since it involves the manipulation of enormous constants \cite{Robertson1995}. Given that the computation of a $s-w-t$ path in undirected graphs is polynomially computable (despite the enormous constants involved), it is worthwhile to explore whether we can use this result in conjunction with the augmenting path algorithm.

\begin{remark}\label{remark:undirected}
We argue that the augmenting path algorithm cannot be directly used to compute the maximum $s-w-t$ flow when using end-to-end paths that cannot traverse the same edge twice (in different directions). Consider for example the undirected graph of Figure \ref{augmenting-undirected} where all undirected edges have infinite capacity except for edges $(v,w)$ and $(w,x)$ with capacity 2. We can initially construct a flow of 2 units using the end-to-end augmenting path $s\to v\to w\to x \to t$. But once we do that it is not possible to find other augmenting $s-w-t$ paths that can strictly increase the $w$-flow, so the algorithm returns a flow through $w$ of 2 units. Nevertheless, there is a higher $w$-flow of 3 units, which sends one unit of flow along each of the following end-to-end paths: $s\to v\to w\to x\to t$, $s\to u\to v\to w \to t$, and $s\to x\to w \to t$. 

It is an open question whether a polynomial algorithm exists for this second type of undirected $w$-flow or not.
\end{remark}
\begin{figure}[h]
	\begin{centering}	\textsf{\includegraphics[width=0.85\linewidth]{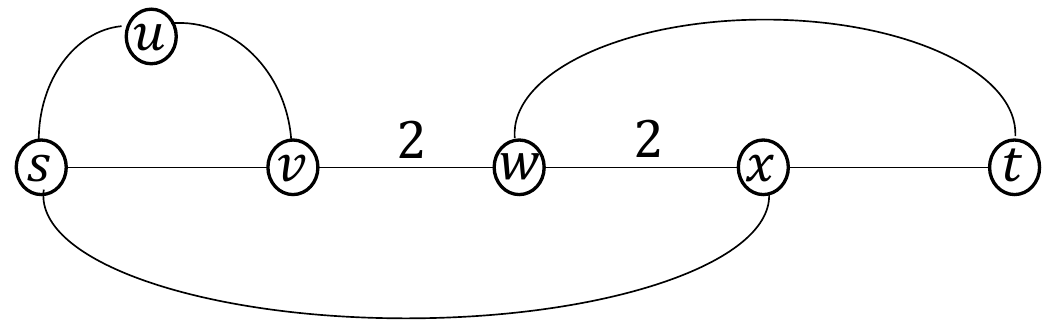}}
		\par\end{centering}
	\caption{Undirected Graph for Remark \ref{remark:undirected}.}
	\label{augmenting-undirected}
\end{figure}

\subsection{Extending to Many Middlepoints}

We conclude this section with a last remark. So far we have discussed node-constrained TE where the traffic has to go through a specific node $w$. 
This is mostly for analytical convenience.
What about the general case where the traffic can go through \textit{at least} one node from a set of more than one middlepoints? 
Obviously, for directed graphs the problem remains NP-hard. For undirected graphs, it is possible to show that when the number of nodes is fixed and not part of the input, then TE remains strongly polynomial. Indeed, we can generalize the proof of Proposition \ref{prop:undirected-poly} by assuming $k$ nodes $w_1,\dots,w_k$ and constructing the same directed graph $G'$. We can then show similar to Proposition \ref{prop:undirected-poly} that the maximum multi-commodity flow that goes through at least one node in the set $\{w_1,\dots,w_k\}$ is equal to half the maximum multi-commodity flow from the sources $w_i, 1\leq i\leq k,$ to their common destination $z$ in the constructed directed graph $G'$.
We thus rigorously establish that node-constrained TE is NP-hard and difficult to solve in general, as most TE problems use directed graphs to model bidirectional links and traffic.
\section{Node-Constrained Traffic Engineering \\with Shortest Paths}
\label{sec:opt} 

The previous section shows that general node-constrained TE is NP-hard in directed graphs. 
We now consider variants of node-constrained TE where only shortest paths between two middlepoints are used in TE, and investigate if these variants are  easier to solve.
In \cref{sec:model} we formulate node-constrained TE with shortest paths using a fixed number of middlepoints, and show that it is weakly polynomial.
In this sense, our results provide for the first time a theoretical foundation for existing work that focuses on shortest path based segment routing \cite{bhatia2015optimized,HVSB15}.\footnote{We emphasize that the problem of determining an \textit{optimal} set of middlepoints of a given size in segment routing is hard, and a large part of the prior literature on segment routing has tried to address that question. However, this problem is out of the scope of this work.}
Given that TE with shortest paths may result in end-to-end paths that contain cycles, in \cref{sec:SRhardness} we study a more specific variant of acyclic node-constrained TE with shortest paths, and show that it is generally NP-hard. Note that in this section we focus on directed graphs, since segment routing typically considers directed graphs to model network traffic.

\subsection{Variant with Shortest Paths}
\label{sec:model}



Assume $K$ middlepoints in total with a specific ordering, where $1\leq K\leq |V|$. We assume that each end-to-end path can use up to $M\leq K$ of these middlepoints respecting the ordering as the $K$ input middlepoints. Note that prior works typically assume small values of $M$; for instance $M$ can be as small as 1, in which case each end to end path consists of 2 segments \cite{bhatia2015optimized}. For a segment $s\in S$ between an ingress node and a middlepoint, two middlepoints, or a middlepoint and an egress node, there are multiple paths in general. We assume, for simplicity, that routing is done by ECMP over all shortest paths of a segment. ECMP routes a flow based on static hashing of the five tuples in the packet header, and in general can distribute traffic evenly when the number of flows is large. This is consistent with prior work \cite{bhatia2015optimized}. We use  $T_i$ to denote the complete set of logical tunnels formed by segments in $S$ that can be used for commodity $i$, with up to $M$ middlepoints. A tunnel involves only ingress/egress switch, and the intermediate middlepoints. This can be constructed offline efficiently. 

Let $G_{t,s}$ denote if a tunnel $t$ uses segment $s$ or not, 
and $I_{p,e}$ denote if path $p$ uses link $e$ or not. Furthermore, let $\hat{P}_s$ be the set of all shortest paths for segment $s$,
and $f_{i}(t)$ represent the flow in tunnel $t$ for commodity $i$. The split ratio  $x_{i,t}$ for $i$ on tunnel $t$ is defined as the ratio $x_{i,t}=\frac{f_i(t)}{\sum_{t \in T_i}f_{i}(t)}$.
The node-constrained $TE_{LU}$ problem with shortest paths can be formulated similar to \cref{sec:TE-LU}, where the set of paths $\mathcal{P}_i$ for commodity $i$ is now replaced by the set of logical tunnels $T_i$: 
\begin{align}
\min      & \quad\quad \theta \label{obj} \\  
\text{s.t. }& \sum_{i=1}^{L} \sum_{t \in T_i}\sum_{s\in S_t}\sum_{p\in \hat{P}_s} f_{i}(t)\frac{I_{p,e}}{\lvert \hat{P}_s \rvert}    \leq \theta\cdot c(e) ,\forall e \in E,  \label{con:capacity} \\
&  0 \leq f_{i}(t) ,\forall  i  \in \{1,\dots,L\},  t \in T_i, \label{con:nonnegative} \\
& \sum_{t \in T_i}f_{i}(t) \geq D_i, \forall  i  \in \{1,\dots,L\}. \label{con:flow}
\end{align}

The capacity constraint \eqref{con:capacity} indicates that the total traffic routed to link $e$ from across all flows, tunnels, segments, and shortest paths, cannot exceed $\theta$ times the link capacity. Since ECMP is used for routing within any segment $s$, each shortest path $p$ of segment $s$ receives flow equal to $\frac{f_{i}(t)}{\lvert \hat{P}_s \rvert}$. 
Regarding the TE asymptotic complexity, we have the following result when $M$ is fixed and not part of the input:

\begin{proposition}
\label{prop:te-sp-lu}
For fixed $M$ with respect to the input graph $G$, the $TE_{LU}$ problem described by \eqref{obj}-\eqref{con:flow} can be solved in weakly polynomial time.
\end{proposition}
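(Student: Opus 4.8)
The plan is to observe that, for fixed $M$, the program \eqref{obj}--\eqref{con:flow} is (equivalent to) an explicit linear program whose number of variables, number of constraints, and bit-lengths of the coefficients are all polynomial in the size of $G$ together with the bit-lengths of the capacities $c(e)$ and the demands $D_i$. Weak polynomiality then follows immediately from the existence of weakly polynomial LP algorithms (ellipsoid method, interior-point methods).

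First I would bound the number of decision variables. These are $\theta$ together with one variable $f_i(t)$ per commodity $i$ and per logical tunnel $t\in T_i$. A tunnel for commodity $i$ is determined by an ordered sub-selection of at most $M$ of the $K\le|V|$ input middlepoints (the ingress/egress of $i$ being fixed), so $|T_i|=\sum_{m=0}^{M}\binom{K}{m}=O(|V|^M)$, which is polynomial because $M$ is a constant; summing over the $L$ commodities gives $O(L|V|^M)$ variables. The constraints are the $|E|$ capacity inequalities \eqref{con:capacity}, the non-negativity constraints \eqref{con:nonnegative}, and the $L$ demand inequalities \eqref{con:flow}, so there are polynomially many constraints as well. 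Note also $|S_t|\le M+1$ for every tunnel, a constant.

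The one delicate point is the quantity $\sum_{p\in\hat P_s} I_{p,e}/|\hat P_s|$ appearing in \eqref{con:capacity}: the set $\hat P_s$ of shortest paths of a segment $s$ may be exponentially large, so it cannot be enumerated. However this quantity is just the fraction of shortest $s$-paths that traverse edge $e$, and it can be computed implicitly. For a segment $s$ with endpoints $(a_s,b_s)$, one first computes shortest-path distances by Dijkstra / Bellman--Ford, then a forward counting DP giving $N^{\to}_s(v)$, the number of shortest $a_s$--$v$ paths, and a backward counting DP giving $N^{\leftarrow}_s(v)$, the number of shortest $v$--$b_s$ paths; then $|\hat P_s|=N^{\to}_s(b_s)$, and for an edge $e=(x,y)$ lying on some shortest $s$-path the number of shortest $s$-paths through $e$ equals $N^{\to}_s(x)\cdot N^{\leftarrow}_s(y)$ (and is $0$ otherwise). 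Since these path counts are bounded by the total number of paths in $G$, they have $O(|V|\log|V|)$ bits, so they and the resulting coefficients $\alpha_{s,e}:=\sum_{p\in\hat P_s} I_{p,e}/|\hat P_s|$ are computable exactly in polynomial time and have polynomial bit-length. There are only $O(|V|^2+L|V|)$ distinct segments, so all the $\alpha_{s,e}$ are obtained in polynomial time.

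Substituting, \eqref{con:capacity} becomes the linear inequality $\sum_{i=1}^{L}\sum_{t\in T_i} f_i(t)\big(\sum_{s\in S_t}\alpha_{s,e}\big)\le\theta\,c(e)$, with polynomially many terms and polynomial-bit-length coefficients; the remaining constraints and the objective are already of that form. Hence \eqref{obj}--\eqref{con:flow} is equivalent to an LP of size polynomial in the input and is solvable in weakly polynomial time. I expect the shortest-path counting step --- turning the implicit, possibly exponential family $\hat P_s$ into explicitly computed polynomial-bit-length coefficients --- to be the only part requiring care; bounding the number of tunnels and assembling the LP is routine bookkeeping.
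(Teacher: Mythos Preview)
Your proposal is correct and follows essentially the same approach as the paper: both argue that for fixed $M$ the LP has polynomially many variables (via the binomial bound on $|T_i|$) and constraints, and both identify the fraction $\sum_{p\in\hat P_s}I_{p,e}/|\hat P_s|$ as the only nontrivial coefficient, observing that it can be computed in polynomial time despite $\hat P_s$ being potentially exponential. The only difference is cosmetic: the paper simply cites betweenness-centrality techniques (Brandes) for this computation, whereas you spell out the forward/backward shortest-path counting DP and explicitly bound the bit-length of the counts, which is a welcome level of detail.
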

\begin{proof}
The number of commodities $L$ cannot exceed $|V|\cdot(|V|-1)$, and the number $|T_i|$ of tunnels per commodity $i$ is upper bounded by 
${{K}\choose{0}}+\cdots+{{K}\choose{M}}$, where $K\leq|V|$. For fixed $M$ w.r.t. the input graph $G$, $|T_i|$ has polynomial size w.r.t the graph. Finally, the number $S_t$ of segments per tunnel cannot exceed $K+1\leq|V|+1$, since a tunnel can use at most all $K$ middlepoints. For the inner sum $\sum_{p\in \hat{P}_s} \frac{I_{p,e}}{\lvert \hat{P}_s \rvert}$, note that it basically denotes the percentage of shortest paths for segment $s$ that use link $e$. This percentage naturally appears in the definition of betweenness centrality \cite{B2001}, and we can compute it in polynomial time by using the techniques therein.\footnote{Even though the number of shortest paths between two nodes can be exponential, the percentage of shortest paths that go through a specific node is polynomially computable. This is why betweenness centrality is polynomially computable \cite{B2001}.}


Thus, we have proved that for fixed $M$, the LP has a polynomial number of variables, and a polynomial number of constraints whose coefficients can be computed in polynomial time. The proposition then immediately follows by standard results in linear programming \cite{K1984,K1980}.
\end{proof}

Given the connection between $TE_{MF}$ and $TE_{LU}$, we can similarly show that $TE_{MF}$ can be solved in weakly polynomial time.
As we observed, the TE problem is naturally related to the betweenness centrality that we discuss in \cref{rel:centrality}. 
Thus, constraint \eqref{con:capacity} reveals interesting connections between the popular centrality metric and node-constrained TE with shortest paths.

Note that the TE is not polynomial just for the specific setting where there are $K$ middlepoints and each logical path can use up to $M$ of them. In particular, it remains polynomial if we constrain each logical path to contain all $K$ middlepoints in the same order (instead of up to $M$), where $K$ can be as large as $|V|$. Interestingly, even this simple variant can become NP-hard if we impose simple constraints, as we show next.

\subsection{Acyclic Variant with Shortest Paths}
\label{sec:SRhardness}
One challenge with node-constrained TE with shortest paths is that it may produce source-destination paths with edge repetitions, i.e., walks. Even in the case of just one middlepoint per path, it is possible that a (simple) shortest path from the source $s$ to a middlepoint $M$ shares an edge $e$ with a shortest path from $M$ to the destination $d$. So, even though the paths for any given segment are simple, the resulting $s-d$ path may not even be a path. This may reduce the performance of TE, because it increases the link load on the reused edges and may lead to higher link utilization. Figure \ref{cycle} depicts a directed graph with $n>2$, where the shortest path from $w$ to $t$ is $w-u_1-u_2-t$. The shortest paths from $s$ to $w$ and $w$ to $t$ trivially overlap, since they share the edge $u_1-u_2$. 

\begin{figure}[htbp]
	\begin{centering}
		\textsf{\includegraphics[width=0.8\linewidth]{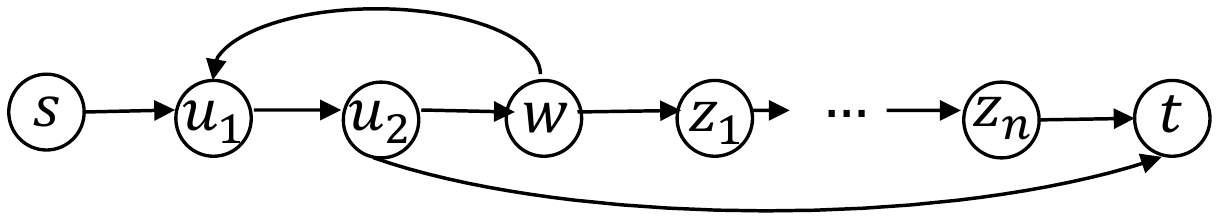}}
		\par\end{centering}
	
	\caption{Cycle in shortest-path based segment routing.}
	\label{cycle}
\end{figure}

In that case a natural question arises:
what if we consider node-constrained TE with shortest paths, under the condition that the resulting walk from the source to the destination is a path or even a simple path? As our subsequent analysis shows, traffic engineering generally becomes NP-hard, even for just one commodity, and even in the special case where the traffic must use all middlepoints in the input order. To prove this fact, we first introduce the following fundamental result due to Eilam-Tzoreff \cite{t1998}.

\begin{theorem}[NP-hardness of $k$DSP \cite{t1998}]
\label{theorem:kDSP}
Given a graph $G=(V,E)$ and $k$ pairs of distinct vertices $(u_i,v_i)$, $1\leq i\leq k$, the $k$DSP problem of computing $k$ pairwise disjoint shortest paths $P_i$ between $u_i$ and $v_i$ is NP-complete, when $k$ is part of the input. This result holds for all four versions of the $k$DSP problem, namely, node or edge-disjoint paths for directed or undirected graphs.
\end{theorem}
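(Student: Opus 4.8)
The plan is to prove both directions of NP-completeness for each of the four variants, with the membership part routine and the hardness part obtained by a gadget reduction from Boolean satisfiability. For membership in NP, a certificate for the decision version is the list of $k$ paths $P_1,\dots,P_k$; in polynomial time one checks that $P_i$ runs from $u_i$ to $v_i$, that $|E(P_i)|$ equals the BFS distance $\mathrm{dist}_G(u_i,v_i)$ (so $P_i$ is genuinely a shortest path, since $G$ is unweighted), and that the $P_i$ are pairwise node- or edge-disjoint as required. So the real work is the hardness reduction.

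For hardness I would reduce from $3$-SAT (or a monotone variant such as Not-All-Equal SAT, if that streamlines the gadgets), producing from a formula $\varphi$ with variables $x_1,\dots,x_n$ and clauses $C_1,\dots,C_m$ a single unweighted graph $H$ together with $k=n+m$ terminal pairs, which is exactly the regime where $k$ is part of the input. Since $H$ is unweighted, ``shortest'' means ``fewest edges'', so the construction hinges on padding every intended route with chains of degree-$2$ vertices until all intended routes sharing a terminal pair have one common length, strictly shorter than any unintended detour. For each variable $x_j$ I would build a pair $(a_j,b_j)$ admitting exactly two shortest paths from $a_j$ to $b_j$ of equal length, a ``true'' branch $T_j$ and a ``false'' branch $F_j$; using one of them encodes the truth value of $x_j$ and frees the other. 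For each clause $C_i$ I would build a pair $(c_i,d_i)$ with one candidate shortest route per literal of $C_i$, where the route for literal $\ell$ passes through a vertex (resp.\ edge) on the branch of the corresponding variable gadget whose selection falsifies $\ell$; that route therefore survives disjointness precisely when $\ell$ is satisfied. Then $k$ pairwise disjoint shortest paths exist in $H$ if and only if every clause can commit to a satisfied literal, which happens if and only if $\varphi$ is satisfiable, and both directions of this equivalence fall out once the length bookkeeping is fixed. Finally I would transfer the result across variants: node-disjoint $\Rightarrow$ edge-disjoint by the uniform vertex split $v\mapsto(v_{\mathrm{in}}\to v_{\mathrm{out}})$ (which merely doubles all distances), and undirected $\leftrightarrow$ directed by replacing each edge with a short acyclic directed gadget, or by orienting the edges of a suitably layered version of the construction, chosen so that no new shortest path is created.

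The main obstacle will be the distance bookkeeping rather than the combinatorial wiring. Every variable branch and every clause route must be padded so that (i) all intended routes for a given terminal pair have \emph{exactly} the same number of edges, (ii) this number is strictly smaller than the length of any ``cheating'' route — for instance a clause path that shortcuts through a variable gadget, through another clause gadget, or through a different variable's branch — and (iii) the vertex-splitting and edge-orientation gadgets used for the cross-variant transfers preserve these strict inequalities. Making all these length comparisons simultaneously tight is precisely what guarantees the reduction is sound (a disjoint-shortest-paths solution yields a satisfying assignment, with no spurious solutions) as well as complete; once the lengths are pinned down, the rest is mechanical. An alternative would be to reduce directly from the $k$-disjoint \emph{paths} problem with $k$ part of the input (Theorem~\ref{theorem:kdp}), but forcing arbitrary solution paths of such an instance to become shortest paths is itself delicate, so the SAT route with purpose-built gadgets looks cleaner.
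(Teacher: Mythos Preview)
The paper does not prove this theorem at all: it is quoted verbatim as a result of Eilam-Tzoreff \cite{t1998} and used only as a black box in the reduction of Proposition~\ref{prop:SRshortestpathNP-hard}. There is therefore no ``paper's own proof'' to compare your proposal against.

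That said, your sketch is a reasonable outline of how such a hardness proof is built, and is in the same spirit as Eilam-Tzoreff's original argument (which also reduces from a satisfiability-type problem using variable gadgets with two equal-length branches and clause gadgets that interact with them). Two points worth flagging. First, your cross-variant transfer from undirected to directed by ``orienting the edges of a suitably layered version'' is glib: in an undirected graph a shortest $u_i$--$v_i$ path need not respect any fixed layering, so you would have to argue that your specific construction already is layered in the required sense, not appeal to a generic orientation trick. Second, the node-disjoint $\Rightarrow$ edge-disjoint transfer via vertex splitting is fine for directed graphs but does not obviously work for undirected graphs (the split $v_{\mathrm{in}}\to v_{\mathrm{out}}$ is inherently directed); Eilam-Tzoreff handles the four cases with more care than a one-line gadget. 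These are exactly the places where your ``distance bookkeeping'' caveat bites hardest, and where a full proof would need real work rather than a pointer.
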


\begin{proposition}
\label{prop:SRshortestpathNP-hard}
The $TE_{MF}$ and $TE_{LU}$ problems in a directed graph with $K$ middlepoints $(1)$ using only shortest paths and $(2)$ only allowing paths or simple paths from a source to a destination that use all middlepoints are NP-hard, even for just one commodity, when $K$ is part of the input.
\end{proposition}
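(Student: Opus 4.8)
The plan is to reduce from the $k$DSP problem (Theorem~\ref{theorem:kDSP}), which is NP-complete when $k$ is part of the input, in its \emph{edge-disjoint, directed} version. Given an instance of $k$DSP consisting of a directed graph $G=(V,E)$ and $k$ pairs of distinct vertices $(u_i,v_i)$, $1\leq i\leq k$, I would build a new directed graph $G'$ together with a single commodity and a list of $K$ ordered middlepoints, such that the node-constrained TE objective (either $TE_{MF}$ or $TE_{LU}$) hits a prescribed target value if and only if the $k$DSP instance is a ``yes'' instance. The middlepoints will be $w_1,\dots,w_{2k-1}$ (so $K=2k-1$ is part of the input, as required), chosen so that forcing an $s-d$ path to visit all of them in order forces it to traverse, in sequence, a shortest $u_1$--$v_1$ path, then a shortest $u_2$--$v_2$ path, and so on; the acyclicity/simplicity requirement on the end-to-end path is exactly what will encode edge-disjointness of the $k$ shortest paths.

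Concretely, first I would normalize the $k$DSP instance: by adding a long ``padding chain'' of fresh vertices and unit-weight edges appended after each $v_i$ and before each $u_{i+1}$, I can arrange that, for every $i$, the shortest distance from (a designated entry node standing in for) $u_i$ to (a designated exit node standing in for) $v_i$ is some common controlled value, and more importantly that any shortest path in $G'$ from $s$ through $w_1,\dots,w_{2k-1}$ to $d$ must, on its $i$-th segment, use only edges of the original graph $G$ that lie on a shortest $u_i$--$v_i$ path. The odd-indexed middlepoints $w_{2i-1}$ will be placed at the ``gateways'' corresponding to $u_i$ and the even-indexed $w_{2i}$ at gateways corresponding to $v_i$ (or a single interleaved sequence identifying $v_i$ with the gateway before $u_{i+1}$ via the padding chain), so that the segment between consecutive middlepoints is exactly the ECMP bundle of shortest paths realizing the corresponding $u_i$--$v_i$ connection. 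Give all edges of $G$ (or the relevant copies) capacity $1$ and set the demand $D_1=1$; then a feasible unit flow that routes on a single acyclic (simple) end-to-end path using all middlepoints exists iff one can pick one shortest path per pair with no edge reused, i.e.\ iff the $k$DSP instance has a solution. For $TE_{MF}$ this reads as ``the maximum throughput equals $1$,'' and for $TE_{LU}$, by Lemma~\ref{lemma:connection}, as ``$\theta^*\leq 1$.''

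The ECMP subtlety needs care and is where I would spend the most attention. Because routing within a segment is by ECMP over \emph{all} shortest paths of that segment, a single tunnel does not let us ``choose'' one particular shortest $u_i$--$v_i$ path — the flow is split across all of them. To keep the reduction clean I would engineer $G'$ so that each segment has a \emph{unique} shortest path unless the $k$DSP solution we are encoding is used; alternatively (and more robustly) I would make the middlepoints so fine-grained that each segment corresponds to a single edge or a single forced subpath, pushing all the combinatorial freedom into the choice of which chain of middlepoints is ``activated'' by the supporting tunnel, and then argue that the acyclic end-to-end constraint on the resulting simple path is violated precisely when two of the chosen shortest paths share an edge. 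The crux of the argument — and the main obstacle — is exactly this: showing that the acyclicity (no edge repetition) constraint on the concatenated source-to-destination path is equivalent to edge-disjointness of the $k$ chosen shortest subpaths, while simultaneously ruling out ``cheating'' feasible flows that split over multiple tunnels or exploit ECMP to average away a forbidden overlap. I would handle this by the capacity-$1$ bookkeeping (an edge used by two segments at full rate would violate its capacity, forcing genuine disjointness) together with the observation that, under the stated constraint, \emph{every} path carrying flow must individually be a simple $s$--$d$ path through all middlepoints, so the ECMP split cannot rescue an infeasible configuration. Once that equivalence is nailed down, the reduction is polynomial in the size of the $k$DSP instance, and NP-hardness of $TE_{MF}$ and $TE_{LU}$ follows; the single-commodity claim is automatic since the construction uses only one commodity.
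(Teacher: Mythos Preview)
Your high-level strategy---reduce from $k$DSP so that each inter-middlepoint segment is forced to realize a shortest $u_i$--$v_i$ path and the acyclicity (path/simple-path) requirement on the concatenation becomes precisely the disjointness condition---is the same as the paper's. The paper takes a cleaner route on the gadget: it introduces $K-1$ \emph{fresh} nodes $M_1,\dots,M_{K-1}$ as the middlepoints, adding only the edges $(v_i,M_i)$ and $(M_i,u_{i+1})$, and sets $s=u_1$, $t=v_K$. Because each $M_i$ has in-degree and out-degree one, every segment is automatically forced through $u_i\leadsto v_i$, and no padding chains or gateway nodes are needed. The paper also phrases the ``positive flow $\Leftrightarrow$ feasible path exists'' direction via the edge-cut machinery of \cref{sec:directed} (an analogue of Lemma~\ref{maxc_minc}) rather than a direct capacity-bookkeeping argument, and it instantiates with the node-disjoint version of $k$DSP (for the simple-path variant) where you use the edge-disjoint version (for the path variant); both are fine by Theorem~\ref{theorem:kDSP}.

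Your ECMP discussion, however, rests on a misreading of the variant being proved hard. The acyclic variant of \cref{sec:SRhardness} is \emph{not} the tunnel/ECMP model of \cref{sec:model}; it is a path-based formulation in which each admissible $s$--$t$ route individually selects one shortest path per segment and the concatenation must itself be a (simple) path. There is no splitting across the shortest paths of a segment, so the ``choose one shortest $u_i$--$v_i$ path'' step is already part of the model, and your proposed fixes are unnecessary---indeed the second fix (refining middlepoints until each segment is a single forced edge) would collapse the instance to a single predetermined $s$--$t$ walk and destroy the reduction, since the proposition requires using \emph{all} middlepoints in the given order. Once you drop the ECMP detour, your reduction goes through; you can then simplify to the paper's degree-two middlepoint nodes and dispense with the padding chains entirely.
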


\begin{proof}[Proof sketch]
We can show the statement by making similar arguments as for general node-constrained TE in \cref{sec:theory}. For $TE_{MF}$, the idea is to first show that we can solve $k$DSP if and only if we can solve the corresponding $TE_{MF}$ formulation. 

Indeed, assume a directed graph $G=(V,E)$, two distinct nodes $s,t$ in $V$, and $K$ distinct nodes $s\neq M_i\neq t$ in $V$, $1\leq i\leq K$. Consider we do node-constrained routing from $s$ to $t$ using nodes $M_i$ as our $K$ middlepoints. We will show that the $TE_{MF}$ problem is NP-hard by a reduction from the $k$DSP problem.

Concretely, assume for instance the $k$DSP node-disjoint problem in Theorem \ref{theorem:kDSP}. We construct a new graph $G'$ as follows. For each $i$, $1\leq i\leq K-1$, we introduce a new node $M_i$ along with the two directed edges $e^i_{in}=(v_i,M_i)$ and $e^i_{out}=(M_i, u_{i+1})$. Moreover, we associate each edge in $G'$ with a unit capacity, and we assume the single commodity $(s,t)$ with source $s=u_1$ and destination $t=v_K$. We now argue that there are $K$ node-disjoint shortest paths between $u_i$ and $v_i$, if and only if $TE_{MF}$ with the single commodity $(s,t)$ and the $K-1$ middlepoints $M_1,\dots,M_{K-1}$ accepts a positive solution.

This can be proven using similar techniques as in \cref{sec:SRhardness}. The only difference is that an edge-cut now corresponds to a set of edges whose removal results in no path (or simple path) using shortest paths from the source to the destination through the middlepoints.
NP-hardness for $TE_{LU}$ follows immediately by Lemma \ref{lemma:connection}, in a similar spirit as Corollary \ref{corollary:NP-hard-TE-LU}.
\end{proof}





Proposition \ref{prop:SRshortestpathNP-hard}  assumes that the number of middlepoints $K$ is part of the input since in general we can have up to $|V|$ middlepoints. What about the case when $k$ is fixed? When $k=2$, \cite{t1998} provides a polynomial algorithm for the undirected case of $k$DSP, whereas the complexity for the directed case when $k=2$ remains open\footnote{The problem appears to have been answered in the affirmative in \cite{berczi}, when the length of each
edge is positive.}. On the other hand, only few partial results are known when $k$ is fixed and greater than 2 \cite{berczi}.

%
\section{Application to Flow Centrality}
\label{sec:flowcentrality}
Having studied various node-constrained TE formulations, in this section we move to another fundamental and practical question: how to actually select the middlepoints that lead to good TE performance in the first place?
We draw upon a concept called flow centrality from graph theory as an intuitive solution approach. 
Flow centrality, first introduced by Freeman et al. \cite{fbw1991}, characterizes a node's significance in terms of the maximum flow that can go through that node in the underlying flow network. Flow centrality is closely related to node-constrained TE for two reasons. First, it involves by definition node-constrained traffic. Second, flow centrality and its group extension answer the question of how to select the middlepoints in the general node-constrained TE (\cref{sec:theory}), in order to maximize the traffic through these points. Note that flow centrality is naturally related to the first TE type $TE_{MF}$, since it cares about the maximum flow objective.



The rest of this section is organized as follows. We provide two flow centrality definitions for single- and multi-commodity flow networks and discuss their computational complexity in \cref{sec:flow-centrality}. 
Furthermore, we introduce and analyze the group flow centrality and $N$-group maximum flow in \cref{sec:group-centrality}, and show that unlike other common graph centralities it does not fall under the framework of submodular function maximization.

\subsection{Flow Centrality}
\label{sec:flow-centrality}
The original flow centrality \cite{fbw1991} of a node $w\in V$ in a flow network $G=(V,E,c)$ is:

\begin{align}
\label{eq:trad-flow-centr}
\gamma(w)=\frac{\mathlarger{\sum}\limits_{s',t'\in V-\{w\}|s'\neq t'}\nu_{max}^w(s',t')}{\mathlarger{\sum}\limits_{s',t'\in V-\{w\}|s'\neq t'}\nu_{max}(s',t')}, 
\end{align}
where $\nu_{max}(s',t')$ is the maximum flow in the single-commodity flow network with commodity $(s',t')$, and $\nu_{max}^w(s',t')$ is the maximum flow through node $w$ in the single-commodity flow network with commodity $(s',t')$. Thus, the flow centrality of node $w$ represents the percentage of the maximum flow that can go through $w$ for a demand chosen uniformly at random.

For multi-commodity networks with given commodities, we introduce an alternative definition: 
\begin{align}
\label{eq:new-flow-centr}
\widetilde{\gamma}(w)=\frac{\nu_{max}^w(\bm{s},\bm{t})}{\nu_{max}(\bm{s},\bm{t})},
\end{align}
which denotes the ratio of the maximum multi-commodity flow $\nu_{max}^w(\bm{s},\bm{t})$ that can go through node $w$ to the maximum multi-commodity flow, assuming we are given $(\bm{s},\bm{t})$.

The basic difference in the two definitions is that the former considers equiprobably all possible source-destination pairs, while the latter focuses on the actual commodities in the flow network. Thus, the former is based on the single-commodity formulation, and the latter on the multi-commodity one.


For directed graphs, we show:

\begin{proposition}
\label{prop:directed-flow}
Given a flow network $G=(V,E,c)$ with directed edges and a node $w\in V$, it is NP-hard to compute $\gamma(w)$ or $\widetilde{\gamma}(w)$.
\end{proposition}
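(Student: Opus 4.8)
The plan is to reduce from the NP-hardness of the decision version of the maximum $w$-flow, established in Proposition~\ref{prop:NPhardness} (equivalently, from the $s-w-t$ path problem of Lemma~\ref{lemma:exists-path-NPhard}). The key observation is that computing either $\gamma(w)$ or $\widetilde{\gamma}(w)$ requires knowing the numerator, which is exactly a maximum $w$-flow value (a sum of such values in the single-commodity case, a single such value in the multi-commodity case). So if we could compute the centrality, and we could also compute the denominator in polynomial time, we would recover the NP-hard $w$-flow quantity.

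For $\widetilde{\gamma}(w)$ the reduction is cleanest. First I would take an instance of the $s-w-t$ path problem on a directed graph $G$ with distinct nodes $s,t,w$: assign unit capacity to every edge and consider the single commodity $(s,t)$ with infinite demand, exactly as in the proof of Proposition~\ref{prop:NPhardness}. The denominator $\nu_{max}(\bm{s},\bm{t})$ is just an ordinary single-commodity (equivalently one-commodity multi-commodity) maximum flow, which is computable in strongly polynomial time by any combinatorial algorithm, so it is a known positive integer. If an oracle gives us $\widetilde{\gamma}(w) = \nu_{max}^w(\bm{s},\bm{t})/\nu_{max}(\bm{s},\bm{t})$, we multiply by the known denominator to obtain $\nu_{max}^w(\bm{s},\bm{t})$, and by the argument in Proposition~\ref{prop:NPhardness} this value is at least $1$ if and only if there is an $s-w-t$ path in $G$. (One should note that the denominator is nonzero precisely because $G$ is assumed to contain an $s-t$ path; if $s$ cannot even reach $t$ we decide ``no'' trivially, and otherwise $\nu_{max}\geq 1$.) Hence a polynomial algorithm for $\widetilde{\gamma}(w)$ yields one for an NP-hard problem.

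For the classical $\gamma(w)$ of \eqref{eq:trad-flow-centr} the same idea works but the bookkeeping is a little more delicate, since the numerator is a sum $\sum_{s',t'\neq w}\nu_{max}^w(s',t')$ over all ordered pairs, not a single term. The denominator $\sum_{s',t'\neq w}\nu_{max}(s',t')$ is a sum of ordinary maximum flows, hence polynomially computable; multiplying the oracle's answer by it recovers the numerator $\Sigma := \sum_{s',t'\neq w}\nu_{max}^w(s',t')$. To single out one term I would engineer the gadget so that every pair other than the ``interesting'' one contributes $0$: starting from the $s-w-t$ path instance $G$, build $G'$ by taking $G$ (with unit capacities on its edges), but first delete every edge \emph{out of} $s$ other than those already present in $G$ — actually the cleaner route is to keep $G$ as is and observe that for a source $s'\neq s$ or a sink $t'\neq t$ we cannot guarantee the term vanishes, so instead I would isolate the graph: remove all outgoing edges of $w$ except those we want, or more simply take a fresh copy of $G$ and add no edges incident to nodes outside $\{$the relevant ones$\}$. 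A robust fix: add two new nodes $a,b$ with edges $a\to s$ and $t\to b$ of unit capacity and \emph{also} require nothing else, then in $G'$ every ordered pair $(s',t')$ with $\{s',t'\}\neq\{a,b\}$ has $\nu_{max}^w(s',t')$ equal to something we can compute or bound; but this still does not obviously zero out, for instance, the pair $(s',t')=(s,t)$ in the original graph, which is exactly the term we care about. So the genuinely clean statement is: in $G'$ arrange that $w$ is a cut vertex so that $\nu_{max}^w(s',t') = \nu_{max}(s',t')$ whenever $s',t'$ lie on opposite sides, and $\nu_{max}^w(s',t')=0$ otherwise except for the designated pair, making $\Sigma$ equal to that one designated $w$-flow value plus a polynomially computable correction. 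Then read off the answer as before.

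The main obstacle, and the place where I expect the real work to be, is precisely this last point: for $\gamma(w)$ one must design the reduction gadget so that the global sum in the numerator collapses to (a computable function of) a single NP-hard quantity — either by forcing all but one pair to have zero $w$-flow, or by forcing all but one pair to have $w$-flow equal to the already-known ordinary max-flow. Getting $\nu_{max}^w(s',t')=\nu_{max}(s',t')$ for the ``easy'' pairs is the natural handle: this holds automatically whenever $w$ lies on \emph{every} $s'-t'$ path (e.g.\ when $w$ is a cut vertex separating $s'$ from $t'$), and also trivially when $s'=w$ or $t'=w$ (excluded here) — so the construction should route every other commodity unavoidably through $w$ while leaving the single ``test'' commodity $(s,t)$ genuinely constrained by whether an $s-w-t$ path exists. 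I would expect a short explicit gadget (attaching $G$ to a star or path centered at $w$) to do this, after which both the $\gamma$ and $\widetilde\gamma$ cases follow by the multiply-back argument above.
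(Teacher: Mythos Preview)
Your argument for $\widetilde{\gamma}(w)$ is correct and is exactly what the paper does: the denominator is a standard maximum multi-commodity flow, strongly polynomial by \cite{Tardos1986}, so an oracle for $\widetilde{\gamma}(w)$ multiplied by the denominator yields $\nu_{max}^w$, contradicting Proposition~\ref{prop:NPhardness}.

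For $\gamma(w)$, however, there is a genuine gap. You correctly identify the obstacle --- the numerator is a sum over \emph{all} ordered pairs, not a single term --- but you do not actually resolve it. Your proposed fix is to engineer a gadget in which every ``uninteresting'' pair has $\nu_{max}^w(s',t')$ either zero or equal to the ordinary max-flow (hence computable). You yourself notice the tension: forcing $w$ to be a cut vertex between $s'$ and $t'$ makes $\nu_{max}^w(s',t')=\nu_{max}(s',t')$, but if you do this for the test pair $(s,t)$ as well, the $w$-flow becomes trivially computable and the hardness disappears. You leave the construction at ``I would expect a short explicit gadget\dots to do this,'' which is precisely the step that needs to be supplied. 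It is not clear that any single gadget simultaneously trivializes all $\Theta(|V|^2)$ extraneous terms while keeping the one test term hard.

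The paper sidesteps the gadget problem entirely with a finite-difference trick. Write $\mathcal{S}^w(G)$ for the numerator of $\gamma(w)$ in $G$. Given the target pair $(s,t)$, construct three auxiliary graphs: $G_{\hat s}$ adds a fresh node $\hat s$ with a single edge $\hat s\to s$ of capacity $\sum_{e\in s^+}c(e)$; $G^{\hat t}$ adds a fresh $\hat t$ with $t\to\hat t$ of capacity $\sum_{e\in t^-}c(e)$; and $G_{\hat s}^{\hat t}$ adds both. Because $\hat s$ has no incoming edges and $\hat t$ no outgoing edges, one gets telescoping identities for $\mathcal{S}^w$ on these four graphs, and combining them yields
\[
\nu_{max}^w(s,t)=\mathcal{S}^w(G_{\hat s}^{\hat t})-\mathcal{S}^w(G_{\hat s})-\mathcal{S}^w(G^{\hat t})+\mathcal{S}^w(G).
\]
Thus four calls to a $\gamma(w)$ oracle (together with the polynomially computable denominators) recover $\nu_{max}^w(s,t)$ exactly. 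This inclusion--exclusion device is the missing idea in your sketch: rather than forcing all extraneous terms to be individually computable, it cancels them wholesale across four instances.
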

\begin{proof}
For the first statement, our strategy will be to show that
computing $\gamma(w)$ cannot be less hard than computing $\nu_{max}^w(s,t)$, for any source-destination pair $(s,t)$ with $s\neq t$ and $s,t\neq w$. This, in turn, establishes NP-hardness for $\gamma(w)$.
For convenience, we introduce the shorthand notation $\mathcal{S}^w(G)=\sum\limits_{s',t'\in V-\{w\}|s'\neq t'}\nu_{max}^w(s',t')$ to represent the numerator for $G$. 

In this direction, we consider the new flow network $G_{\hat{s}}$, which we construct from $G$ by introducing a new node $\hat{s}$ and a directed edge from $\hat{s}$ to $s$ with capacity equal to the sum of capacities of outgoing edges from $s$, i.e. $c((\hat{s},s))=\sum\limits_{e=(s,v)\in E}c(e)$. We now consider the quantity $\mathcal{S}^w(G_{\hat{s}})$ for the new flow network $G_{\hat{s}}$. $G_{\hat{s}}$ contains the same source-destination pairs as $G$, plus pairs of the form $(\hat{s},t'),\forall t'\in V$. The reason is that node $\hat{s}$ can only act as a source but not a destination for nodes in $G$. As a result, we get:
\begin{equation}\label{eq:flow-centr-1}
\mathcal{S}^w(G_{\hat{s}})=\mathcal{S}^w(G)+\sum\limits_{t'\in V}\nu_{max}^w(\hat{s},t').
\end{equation}
Next, we consider in a similar fashion the flow network $G^{\hat{t}}$, which we construct from $G$ by introducing a new node $\hat{t}$ and a directed edge from  $t$ to $\hat{t}$ with capacity equal to to the sum of capacities of incoming edges to $t$, i.e. $c((t,\hat{t}))=\sum\limits_{e=(v,t)\in E}c(e)$. We then proceed as before to show:
\begin{equation}\label{eq:flow-centr-2}
\mathcal{S}^w(G^{\hat{t}})=\mathcal{S}^w(G)+\sum\limits_{s'\in V}\nu_{max}^w(s',\hat{t}).
\end{equation}

Finally, we consider the flow network $G_{\hat{s}}^{\hat{t}}$, which we construct from $G$ by applying both steps described above for $G_{\hat{s}}$ and $G^{\hat{t}}$. The various graph constructions are illustrated in Figure \ref{figcentrality}. Let's now focus on $\mathcal{S}^w(G_{\hat{s}}^{\hat{t}})$. This quantity will contain the source destination pairs (i) within $G$, (ii) from $\hat{s}$ to nodes in $G$, (iii) from $\hat{t}$ to nodes in $G$, and (iv) from $\hat{s}$ to $\hat{t}$. But (iv) is precisely the same as from $s$ to $t$ due to the value of capacities of the new edges we introduced. Thus, we can write:
\begin{align}\label{eq:flow-centr-3}
\mathcal{S}^w(G_{\hat{s}}^{\hat{t}})&=\mathcal{S}^w(G)+\sum\limits_{t'\in V}\nu_{max}^w(\hat{s},t')\notag\\
&+\sum\limits_{s'\in V}\nu_{max}^w(s',\hat{t})+\nu_{max}^w(s,t).
\end{align}
Combining \eqref{eq:flow-centr-1}, \eqref{eq:flow-centr-2}, \eqref{eq:flow-centr-3}, we get:
\begin{equation}\label{eq:flow-centr-4}
\nu_{max}^w(s,t)=\mathcal{S}^w(G_{\hat{s}}^{\hat{t}})-\mathcal{S}^w(G_{\hat{s}})-\mathcal{S}^w(G^{\hat{t}})+\mathcal{S}^w(G).
\end{equation}
\eqref{eq:flow-centr-4} suggests that computing $\gamma(w)$ (and in turn $\mathcal{S}^w$) cannot be less hard than computing $\nu_{max}^w(s,t)$, since being able to compute the former implies that we can easily compute the latter via \eqref{eq:flow-centr-4}, which only involves polynomial-time constructions. Given it is NP-hard to compute $\nu_{max}^w$ by Proposition \ref{prop:NPhardness}, computing $\gamma(w)$ is also NP-hard.
\begin{figure}[t]
	\begin{centering}
		\textsf{\includegraphics[width=0.85\linewidth]{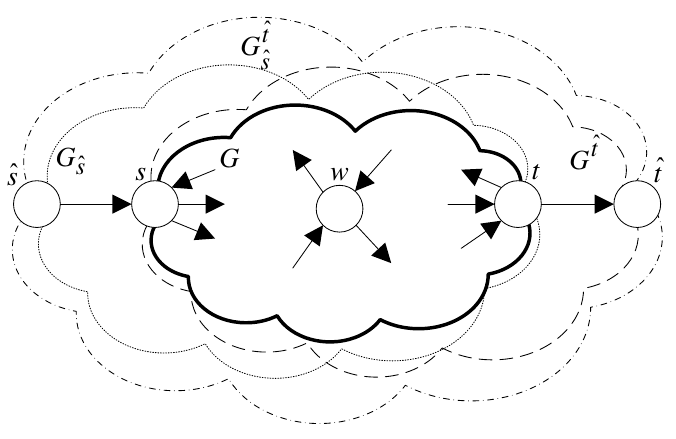}}
		\par\end{centering}
	
	\caption{Illustration of Proposition \ref{prop:directed-flow}.}
	\label{figcentrality}
\end{figure}

For the second statement, note again that $\widetilde{\gamma}(w)$ is a fraction of two terms. The denominator can be computed in strongly polynomial time \cite{Tardos1986}. If $\widetilde{\gamma}(w)$ were polynomially computable, then we could compute $\nu_{max}^w$ in strongly polynomial time as the product of $\widetilde{\gamma}(w)$ and $\nu_{max}$, which is a contradiction since by Proposition \ref{prop:NPhardness} it is NP-hard to compute $\nu_{max}^w$.
\end{proof}
For undirected graphs, we can similarly prove that:
\begin{corollary}
\label{cor:undirected-flow}
Given a flow network $G=(V,E,c)$ with undirected edges and a node $w\in V$, we can compute the flow centrality $\gamma(w)$ in Equation \eqref{eq:trad-flow-centr} or the flow centrality $\widetilde{\gamma}(w)$ in Equation \eqref{eq:new-flow-centr} in strongly polynomial time.
\end{corollary}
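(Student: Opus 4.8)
The plan is to mirror the structure of the proof of Proposition \ref{prop:directed-flow}, but exploit the fact that in undirected flow networks the node-constrained maximum flow $\nu_{max}^w(s',t')$ can already be computed in strongly polynomial time by Proposition \ref{prop:undirected-poly}. The key observation is that both centrality definitions \eqref{eq:trad-flow-centr} and \eqref{eq:new-flow-centr} are ratios whose numerators and denominators are sums of quantities of the form $\nu_{max}^w$ or $\nu_{max}$, all of which are polynomially computable in the undirected setting.

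First I would handle $\widetilde{\gamma}(w)$, which is the easier case: by Proposition \ref{prop:undirected-poly} the numerator $\nu_{max}^w(\bm{s},\bm{t})$ is computable in strongly polynomial time, and the denominator $\nu_{max}(\bm{s},\bm{t})$ is the ordinary maximum multi-commodity flow, which is strongly polynomial by \cite{Tardos1986}. Their ratio is then computable in strongly polynomial time (one must note the denominator is nonzero whenever the problem is non-degenerate, i.e. at least one commodity admits positive flow; otherwise $\widetilde{\gamma}(w)$ is undefined, just as in the directed case). Next I would handle $\gamma(w)$: the denominator $\sum_{s',t'\in V-\{w\}|s'\neq t'}\nu_{max}(s',t')$ is a sum over $O(|V|^2)$ single-commodity maximum-flow computations, each strongly polynomial by classical combinatorial algorithms \cite{amo1993}; the numerator $\mathcal{S}^w(G)=\sum_{s',t'\in V-\{w\}|s'\neq t'}\nu_{max}^w(s',t')$ is a sum over $O(|V|^2)$ instances of the undirected node-constrained maximum flow, each strongly polynomial by Proposition \ref{prop:undirected-poly}. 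Summing polynomially many strongly-polynomial quantities and dividing is again strongly polynomial, so $\gamma(w)$ is computable in strongly polynomial time.

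The main point to be careful about — and really the only non-routine step — is invoking Proposition \ref{prop:undirected-poly} correctly: that proposition is stated for the multi-commodity undirected $\bm{s}-w-\bm{t}$ flow, so each single-commodity term $\nu_{max}^w(s',t')$ appearing in \eqref{eq:trad-flow-centr} is just its $L=1$ special case, and the same construction (adding the auxiliary nodes $z_i$ and $z$, reducing to a linear program with a $\{-1,0,1\}$ constraint matrix) applies verbatim. Everything else is bookkeeping: the number of terms in each sum is $O(|V|^2)$, each term costs strongly-polynomial time, and arithmetic on the results preserves strong polynomiality. Hence, in contrast to the directed case of Proposition \ref{prop:directed-flow}, the undirected flow centrality — in either formulation — is efficiently computable.
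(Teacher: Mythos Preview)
Your proposal is correct and takes essentially the same approach as the paper: the paper's proof is a one-line observation that the result follows immediately from the strong polynomiality of $\nu_{max}^w(s,t)$ in undirected graphs (Proposition~\ref{prop:undirected-poly}), and your argument simply spells out this observation in more detail for both $\gamma(w)$ and $\widetilde{\gamma}(w)$. Your opening remark about ``mirroring the structure of the proof of Proposition~\ref{prop:directed-flow}'' is slightly misleading, since you do not actually use the auxiliary-graph construction from that proof---but the substance of what you write is exactly the intended direct computation.
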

\begin{proof}
It follows immediately from the fact that $\nu_{max}^w(s,t)$ is strongly polynomially computable in undirected graphs.
\end{proof}

\subsection{Group Flow Centrality and $N$-Group Maximum Flow}
\label{sec:group-centrality}

In this section, we introduce the concept of multi-commodity group flow centrality, which is a generalization of the multi-commodity flow centrality $\widetilde{\gamma}$ through node $w$ in Equation~\eqref{eq:new-flow-centr} to a group of nodes $C$. In this work, we focus on the group extension of $\widetilde{\gamma}$ since in many practical applications we are given the commodities and must decide on a good set of nodes to select as middlepoints. 
Similar definitions exist in prior work on graph centrality; for instance, group betweenness centrality \cite{eb1999} naturally generalizes betweenness centrality \cite{F1977} to a group of nodes. 
Group flow centrality is based on the concept of group maximum flow.

\begin{definition}
The group multi-commodity maximum flow $\mathcal{GF}:2^V\rightarrow\mathbb{R}_{\geq 0}$ in a 
multi-commodity 
flow network with  commodities $(\bm{s},\bm{t})$ is a function which, for any group of nodes $C\subseteq V$, returns the maximum multi-commodity flow  $\nu_{max}^{C}(\bm{s},\bm{t})$ that can go through \textit{any} node in $C$, i.e. $\mathcal{GF}(C)=\nu_{max}^{C}(\bm{s},\bm{t})$.
\end{definition}

\begin{definition}
The group flow centrality of a group $C\subseteq V$ in a multi-commodity flow network is defined as:
\begin{equation}\label{equation:group-flow-centrality}
\tilde{\gamma}(C)=\frac{\nu_{max}^{C}(\bm{s},\bm{t})}{\nu_{max}(\bm{s},\bm{t})}.
\end{equation}
\end{definition}
The group flow centrality represents the percentage of the maximum multi-commodity flow that goes through any node in group $C$ for a given set of demands. It is obviously NP-hard for directed graphs as a generalization of the $\widetilde{\gamma}$ flow centrality which is NP-hard by Corollary~\ref{cor:undirected-flow}. But then the group flow centrality is also NP-hard for directed graphs by Equation \eqref{equation:group-flow-centrality}.

An important question concerns the selection of a group of nodes of at most a given size which achieves the largest possible maximum flow, or, equivalently, maximizes the group flow centrality.  For this purpose, we introduce the $N$-\textit{group maximum multi-commodity flow} with $N\in\mathbb{N}^{+}$:
\begin{equation}
\mathcal{GF}^N=\max_{C\subseteq V:|C|\leq N}\nu_{max}^{C}(\bm{s},\bm{t}).
\end{equation}

For single-commodity networks, the question is trivial since we can always put the source into the set $C$, but the problem turns out to be NP-hard in directed multi-commodity networks.
\begin{proposition}
\label{group-NPhard}
The $N$-group maximum multi-commodity flow is NP-hard for directed graphs.
\end{proposition}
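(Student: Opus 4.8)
\textbf{Proof proposal for Proposition~\ref{group-NPhard}.}

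The plan is to reduce from the maximum coverage problem (MCP), which is NP-hard, in the spirit of the commented-out single-commodity construction but with a key twist: since a single-commodity $N$-group maximum flow is trivial (one can always put the source in $C$), I need multiple commodities so that no single ``cheap'' node dominates the flow. First I would recall the MCP instance: $m$ items $I=\{i_1,\dots,i_m\}$, a collection $S=\{S_1,\dots,S_n\}$ of subsets of $I$, and a budget $N\le n$; the goal is to pick $S'\subseteq S$ with $|S'|\le N$ maximizing $|\bigcup_{S_k\in S'}S_k|$.

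The construction I would use: create one commodity per item. For each item $i_j$ introduce a source $s_j$ and sink $t_j$, with demand $D_j=1$ (or $\infty$). For each set $S_k$ introduce a node $v_k$. Add an edge $(s_j,v_k)$ whenever $i_j\in S_k$, and an edge $(v_k,t_j)$ whenever $i_j\in S_k$; give the edges $(s_j,v_k)$ unit capacity and everything else infinite capacity (so commodity $j$ can push at most one unit total, and only ``through'' some $v_k$ with $i_j\in S_k$). Crucially, the node set from which we are allowed to choose $C$ is restricted to $\{v_1,\dots,v_n\}$ — this is the natural model since the $v_k$'s represent the sets, and one can enforce it either by stating the group-selection is over these candidate nodes, or by a gadget (e.g.\ subdividing edges) that makes every non-$v_k$ node useless as a middlepoint. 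Then the key steps are: (1) the unconstrained (no middlepoint) maximum multi-commodity flow here equals $m$ minus the number of items covered by no set, i.e.\ the number of ``coverable'' items; (2) a group $C=\{v_k:S_k\in S'\}$ of size $\le N$ admits $\nu_{max}^C(\bm s,\bm t)$ equal to exactly the number of items covered by $S'$, because commodity $j$ can route its unit through $C$ iff some chosen $S_k$ contains $i_j$, and conversely any $C$-routable flow decomposes into unit $s_j$–$v_k$–$t_j$ paths with $v_k\in C$; (3) therefore $\mathcal{GF}^N=\max_{|S'|\le N}|\bigcup S'|$, so solving $N$-group maximum multi-commodity flow solves MCP.

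I would prove step (2) carefully in both directions. Forward: given an optimal MCP solution $S'$, for each covered item $i_j$ pick one $S_k\in S'$ with $i_j\in S_k$ and route one unit $s_j\to v_k\to t_j$; capacities on the unit-capacity edges $(s_j,v_k)$ are respected since each such edge carries at most the single unit of commodity $j$, and all flow passes through a node of $C$. Reverse: any feasible $\bm s$–$C$–$\bm t$ flow can be decomposed per commodity into $s_j$–$\cdots$–$t_j$ paths each passing through some $v_k\in C$; since the only edges out of $s_j$ go to sets containing $i_j$, each such path certifies $i_j\in S_k$ for that $v_k\in C$, and since commodity $j$ has demand $1$ its contribution is at most $1$; summing, the flow value is at most the number of items covered by $\{S_k:v_k\in C\}$. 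Combining gives equality after optimizing the routing, and then optimizing over $C$ of size $\le N$ matches the MCP optimum.

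The main obstacle I anticipate is making the restriction ``$C$ must be chosen among the $v_k$ nodes'' rigorous within the definition of $N$-group maximum flow, which as stated allows $C\subseteq V$ to be \emph{any} subset of vertices. I would handle this by a gadget that renders sources, sinks, and item-related vertices worthless as middlepoints — e.g.\ replace each commodity's source/sink by a long chain or by splitting so that every $s_j,t_j$ only touches $v_k$'s, and argue that an optimal $C$ never benefits from including such a node (any flow through $s_j$ or $t_j$ is flow of commodity $j$ that must anyway traverse some $v_k$, so swapping in that $v_k$ never decreases the value). A secondary subtlety is ruling out fractional or ``cross-commodity'' flows inflating $\nu_{max}^C$; the unit capacities on $(s_j,v_k)$ together with per-commodity demand $1$ cap each commodity at one unit regardless of fractionality, so integrality of the final answer and the clean equality with MCP both go through. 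Finally I would note NP-hardness of MCP \cite{nwf1978} to conclude.
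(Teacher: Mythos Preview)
Your reduction from MCP is correct and follows the paper's high-level strategy, but with a different (and somewhat cleaner) construction. The paper also builds set-nodes $v_k$ as the intended middlepoints, but introduces item-nodes $z_j,u_j$ with edges $z_j\to u_j\to v_k$ and declares one commodity $(z_j,v_k)$ \emph{per item--set incidence}, so the $v_k$'s are destinations rather than intermediate nodes; its reverse direction then has to analyze an optimal group $\mathcal{G}=U_N\cup V_N$ together with an auxiliary set $U'\subseteq U\setminus U_N$ of item-nodes feeding $V_N$, and extract a coverage using at most $|U_N|+|V_N|\le N$ sets. Your one-commodity-per-item layout with $v_k$'s strictly in the middle makes the swap argument immediate: any $s_j$ or $t_j$ in $C$ carries only commodity $j$'s flow, which already traverses some covering $v_k$, so replacing it by that $v_k$ can only help. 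Both routes yield the exact equality $\mathcal{GF}^N=C_{\max}$ and hence the same inapproximability corollary.

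One small slip to fix: drop the ``(or $\infty$)'' option for the demands. With $D_j=\infty$ and unit capacities only on the edges $(s_j,v_k)$, commodity $j$ can push one unit through \emph{each} set containing $i_j$, so its total flow can exceed $1$ and the correspondence with MCP breaks. Your later commitment to $D_j=1$ is what makes the parenthetical ``commodity $j$ can push at most one unit total'' true, and the rest of your argument relies on it.
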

\begin{proof}
We prove NP-hardness by reduction from the maximum coverage problem (MCP) \cite{nwf1978}, which is NP-hard. 
Assume a set of $m$ items $I=\{i_1,\dots,i_m\}$ and a collection of $n$ sets $S=\{S_1,\dots,S_n\}$, where each $S_i$ contains elements from $I$. Given a positive integer $N\leq n$, the MCP tries to select a subset of $S'\subseteq S$ of cardinality $|S'|\leq N$ such that the maximum number of elements are covered, i.e. the union of the selected sets has maximal size.
We can reduce the MCP to the $N$-group maximum single-commodity flow by constructing a directed graph $G=(\mathcal{M},E)$ as follows. 
$\mathcal{M}$ contains a pair of nodes $z_j$ and $u_j$ for each item $i_j$ that appears in $I$, and one node $v_k$ for each set $S_k$. We denote the set of nodes $z_j$ as $Z$, the set of nodes $u_j$ as $U$, and the set of nodes $v_k$ as $V$, so that $\mathcal{M}=Z\cup U\cup V$. We introduce $m$ edges of the form $(z_j,u_j)$, $j=\{1,\dots,m\}$. We add edge $(u_j,v_k)$, if and only if set $S_k$ contains item $i_j$. All edges have a capacity of 1. Finally, for each pair $(i_j,v_k)$ with the property that set $S_k$ contains item $i_j$, we consider a commodity $(z_j,v_k)$ with source $z_j$ and destination $v_k$. Thus, we add commodity $(z_j,v_k)$, if and only if the graph contains edge $(u_j,v_k)$. All commodities have infinite demand.
The polynomial construction is depicted in Figure \ref{fig:3}.

\begin{figure}[h]
	\begin{centering}
		\textsf{\includegraphics[width=0.9\linewidth]{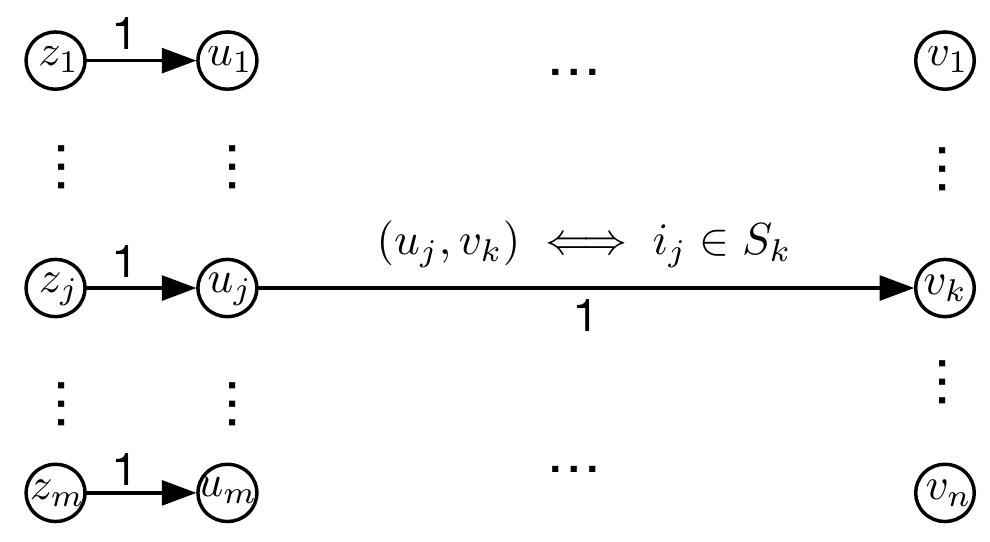}}
		\par\end{centering}
	\caption{Illustration of Proposition \ref{group-NPhard}.}
	\label{fig:3}
\end{figure}

We now prove that the maximum coverage problem has value $C_{max}$, if and only if the $N$-group maximum flow has a value of $C_{max}$. Assume first that the MCP has a value of $C_{max}$. We can then construct a corresponding flow in $G$ as follows. 
For each item $i_j$ that is covered in the optimal MCP solution, we randomly pick one set $S_k$ that covers $i_j$ in the solution  (there may be more than one sets covering $i_j$). We then send one unit of flow for commodity $(z_j,v_k)$ from $z_j$ to node $u_j$, and subsequently from $u_j$ to $v_k$.
The constructed multi-commodity flow is valid since it is easy to verify that it respects all capacity and conservation constraints. Based on that flow, we can also form a $N$-group flow. Indeed, by definition the MCP contains at most $N$ sets $S_i$, so there can be at most $N$ nodes of the form $v_k$ participating in the flow. Since the entire flow has to pass through these nodes, we claim that the above flow is a $N$-group flow passing through (at most) $N$ nodes of the form $v_k$. So, the $N$-group multi-commodity flow is at least $C_{max}$.

For the reverse direction, we argue that the $N$-group maximum flow cannot be greater than $C_{max}$. To show this, assume a group $\mathcal{G}$ of at most $N$ nodes that can accept a flow of value $C_N>C_{max}$. 
The nodes in $\mathcal{G}$ can belong to either $Z$ or $U$ or $V$. However, note that the flow through node $z_j$ (for any commodity of the form $(z_j,v_k)$) is equal to the flow through $u_j$. Hence, we can replace any node $z_j$ appearing in $\mathcal{G}$ by node $u_j$, without affecting the value of the group flow.
Let's thus write $\mathcal{G}=U_N\cup V_N$, where $U_N\subseteq U$ and $V_N\subseteq V$. We also define $U'\subseteq U-U_N$ to be the subset of nodes in $U-U_N$ that transmit some positive (non-zero) flow to any node in $V_N$.
We next argue that for any node $u_j\in U_N$, the total flow through $u_j$ over all commodities  with source $z_j$ must be one. If that were not the case, we could form a $N$-group flow of higher value by increasing the flow for any commodity $(z_j,v_k)$ until the total flow on  edge $(z_j,u_j)$ becomes one. This is possible since any commodity of the form $(z_j,v_k)$ can only route its flow through $u_j$.
This process respects all capacity constraints and achieves a higher group flow, which is a contradiction since we assumed a maximum $N$-group flow.
Similarly, we argue that the total flow over all commodities from any node $u_j\in U'$ to the set of nodes in $V_N$ must be one; otherwise, we could get a maximum $N$-group flow of higher value by first removing any flow (if any) through $u_j$ to nodes in $V-V_N$, and subsequently
increasing the flow for any commodity $(z_j,v_k)$ where $v_k\in V_N$ until the total flow through node $u_j$ becomes 1. This is always possible and respects all capacity constraints.
The two observations imply that in the maximum $N$-group flow, the total flow over all commodities to nodes in $U_N$ or $U'$ must be one, and as a result $C_N$ is an integer, equal to the cardinality $|U_N|+|U'|$.
Based on the $N$-group maximum flow, we can form a coverage for the original problem as follows. 
For each $u_j\in U_N$, we pick any node $v_k\in V$ where $u_j$ transmits some positive flow (at least one such node must exist); similarly, for each $u_j\in U'$, we pick any node $v_k\in V_N$ where $u_j$ transmits some positive flow.
Based on that, each item $i_j$, where $u_j\in U_N$ or $u_j\in U'$, is assigned to set $S_k$ corresponding to the $v_k$ above. 
The constructed coverage has value $C_N$ since it uses all $C_N$ items in  $U_N\cup U'$. Moreover, it uses at most $N$ sets $S_k$, since the number of sets cannot exceed the sum $|U_N|+|V_N|\leq N$.
This is a contradiction, since we assumed that the maximum coverage has value $C_{max}<C_N$.
\end{proof}


An interesting corollary of Proposition \ref{group-NPhard} is the following.

\begin{corollary}
\label{cor:approximation}
The $N$-group maximum flow is not possible to approximate in directed graphs within $1-\frac{1}{e}+o(1)$, unless $P=NP$.
\end{corollary}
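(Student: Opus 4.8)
The plan is to observe that the reduction built in the proof of Proposition~\ref{group-NPhard} is \emph{approximation-preserving}, and then to invoke Feige's classical inapproximability result for maximum coverage. Recall that from an instance of the maximum coverage problem (MCP) with items $I$, sets $S$, and budget $N$, the proof of Proposition~\ref{group-NPhard} constructs in polynomial time a directed multi-commodity instance $G=(\mathcal{M},E)$ with the same budget $N$ on the number of middlepoints, such that the optimal MCP value and the optimal $N$-group maximum flow value coincide \emph{exactly}. The two directions of that proof in fact give more than equality of optima: they give a polynomial-time, value-preserving correspondence between feasible solutions of the two problems, so the reduction is an L-reduction (equivalently, a strict gap-preserving reduction) from MCP to $N$-group maximum flow.

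First I would make the solution transfer explicit in both directions. The forward map is already present in the proof: an MCP solution of size $\le N$ covering $C$ items yields a group of at most $N$ nodes of the form $v_k$ carrying a multi-commodity flow of value $C$. For the reverse map, starting from an arbitrary group $\mathcal{G}$ with $|\mathcal{G}|\le N$ and an accompanying feasible flow of value $v$ through $\mathcal{G}$, I would (i) replace each $z_j\in\mathcal{G}$ by $u_j$, which does not decrease the carried flow, and (ii) decompose the flow into paths of the form $z_j\to u_j\to v_k$; since the edge $(z_j,u_j)$ has unit capacity, the total flow attributable to item $i_j$ is at most $1$, so at least $\lceil v\rceil$ items carry positive flow, and for each such item picking any set $S_k$ appearing on one of its carrying paths yields a cover of size at most $N$ and value at least $v$. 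Both maps run in polynomial time, since flow decomposition is polynomial.

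Next I would argue by contradiction. Suppose there were a polynomial-time algorithm approximating $N$-group maximum flow in directed graphs within a factor $1-\frac{1}{e}+o(1)$. Composing it with the reduction of Proposition~\ref{group-NPhard} and the two solution-transfer maps above would give a polynomial-time algorithm that approximates MCP within the same factor: on an MCP instance we build $G$, run the assumed approximation to obtain a group and a flow of value at least $(1-\frac{1}{e}+o(1))$ times the optimum flow, and pull this back to a cover of size $\le N$ and value at least that flow value, which equals $(1-\frac{1}{e}+o(1))$ times the optimal cover. This contradicts Feige's theorem, which states that MCP cannot be approximated within $1-\frac{1}{e}+\epsilon$ for any constant $\epsilon>0$ unless $\mathrm{P}=\mathrm{NP}$.

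The main obstacle is the bookkeeping in the reverse solution-transfer step: one must check that an \emph{approximate}, and possibly fractional, group-flow solution can still be decomposed and rounded into an integral cover of at least the same value without exceeding the budget $N$, and that after the substitution $z_j\mapsto u_j$ the retained nodes still number at most $N$. This is essentially the content of the reverse direction of Proposition~\ref{group-NPhard}, so it is routine, but it is precisely the point where the argument would fail if the reduction were only optimum-preserving rather than solution-preserving; making the L-reduction property explicit is therefore the crux of a careful write-up.
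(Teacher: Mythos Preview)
Your proposal is correct and follows essentially the same approach as the paper: leverage the value-preserving reduction from MCP to $N$-group maximum flow established in Proposition~\ref{group-NPhard}, and then invoke Feige's inapproximability bound for MCP. The paper's own proof is considerably terser---it simply observes that the optima coincide and concludes that an approximation for $N$-group maximum flow would yield one for MCP---whereas you rightly flag that this step requires the \emph{reverse} solution-transfer to work for arbitrary (approximate, possibly fractional) feasible solutions, not only for optima; your explicit treatment of this point, and your identification of the budget-preservation bookkeeping as the crux, is a genuine improvement in rigor over the paper's write-up. One small imprecision: when you write ``for each such item picking any set $S_k$ \ldots\ yields a cover of size at most $N$,'' you should be explicit that the set chosen for an item $i_j$ with $u_j\notin U_N$ must lie in $V_N$ (otherwise the count of sets need not be bounded by $|U_N|+|V_N|\le N$); this is exactly the case split used in the reverse direction of Proposition~\ref{group-NPhard}, so it is indeed routine as you say.
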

\begin{proof}
From Proposition \ref{group-NPhard} we can compute the MCP by solving an $N$-group multi-commodity maximum flow problem. In particular, the MCP has a value equal to $C_{max}$, if and only if the $N$-group maximum flow has the same value $C_{max}$. If we were able to approximate the latter problem within $1-\frac{1}{e}+o(1)$, that would imply that we can use the above construction to approximate the MCP within a factor of $1-\frac{1}{e}+o(1)$, which is however not possible unless P $=$ NP \cite{f1998}. 
\end{proof}


\begin{remark}
\label{remark-4}
In fact, Proposition \ref{group-NPhard} and Corollary \ref{cor:approximation} are also true for undirected graphs. We omit the full details here, but the main idea is as follows. We use the same construction as in Proposition \ref{group-NPhard}, albeit with undirected edges.
We can first show similar to the directed case that the $N$-group multi-commodity flow in the undirected graph has value at least as large as that of the MCP. For the reverse direction, we note that the group flow in the undirected graph can generally be at least as large as in the directed graph, since it can use any path in the directed graph (and potentially more). However,
we observe that it is not possible for the undirected graph that we construct to achieve a higher group flow than the directed one, since the directed graph already has edges of the form $(u_j,v_k)$ for all commodities $(z_j,v_k)$, which can directly route the traffic from $z_j$ to $v_k$. Thus, the maximum group flows in the undirected or directed networks are the same, and this implies that the $N$-group multi-commodity flow in the undirected graph has value at most as large as that of the MCP.
\end{remark}


A natural question is whether group multi-commodity flow falls under the paradigm of submodular function maximization. First, the MCP that we used in the reduction of Proposition \ref{prop:directed-flow} falls under this paradigm. Second, similar results already exist in graph centrality theory. For instance, group closeness centrality is shown to be NP-hard, monotone, and submodular \cite{chen2016}. This is also true for group betweenness centrality \cite{depz2009}.


\begin{definition}
Consider a finite set of elements $U$ and a function $g:2^U\rightarrow\mathbb{R}_{\geq 0}$. We call $g$ monotone if adding an element to any set $S$ 
 cannot cause the function to decrease, i.e., $g(S\cup\{v\})\geq g(S)$ for all $v\in U$ and $S\in 2^U$. Furthermore, we call $g$ submodular if the marginal gain from adding an element to any set $S$ is at least as high as the marginal gain from adding the same element to a superset of $S$, i.e., $g(S\cup\{v\})-g(S)\geq g(T\cup\{v\})-g(T)$ for all $v\in U-T$ and pairs of sets $S\subseteq T$.
\end{definition}
Interestingly, our next result shows that multi-commodity group flow is not submodular even though it is monotone.
\begin{lemma}
\label{monot_submod}
The function $\mathcal{GF}:2^{V}\rightarrow\mathbb{R}_{\geq 0}$ is monotone but not submodular in directed or undirected graphs.
\end{lemma}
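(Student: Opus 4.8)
The plan is to treat the two assertions separately, with monotonicity being the easy part. For \emph{monotonicity}, work with the path formulation of $\nu^{C}_{max}$: if $C\subseteq C'$, then every $s_i$--$t_i$ path meeting $C$ also meets $C'$, so any flow that is feasible for $\mathcal{GF}(C)$ is also feasible for $\mathcal{GF}(C')$ (simply regard it as putting zero flow on the extra paths that become available), with the same value. Hence $\mathcal{GF}(C')\geq\mathcal{GF}(C)$, and in particular $\mathcal{GF}(C\cup\{v\})\geq\mathcal{GF}(C)$.

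For \emph{non-submodularity} I would exhibit a small instance in which $\mathcal{GF}$ inherits the ``fractional matching on an odd cycle'' phenomenon, so that the marginal value of one node is strictly larger over a two-element set than over a one-element set. Concretely: take three commodities $(s_1,t_1),(s_2,t_2),(s_3,t_3)$ and three distinguished unit-capacity ``bottleneck'' edges $e_{12},e_{13},e_{23}$, and build a layered (hence acyclic) directed graph in which the only $s_i$--$t_i$ routing options are: for commodity $1$, routes through $e_{12}$ and $e_{13}$; for commodity $2$, routes through $e_{12}$ and $e_{23}$; for commodity $3$, routes through $e_{13}$ and $e_{23}$. This is done by letting commodities $i$ and $j$ share the head and tail of $e_{ij}$ and ordering the three bottleneck edges consistently, so that at the tail of each bottleneck edge the only way for a commodity to make progress toward its own sink is to proceed to its next prescribed bottleneck. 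In addition, give each commodity $i$ a private node $n_i$ placed right after $s_i$ (with $s_i$ of out-degree one), so that $n_i$ lies on no other commodity's routes and is unreachable by the other commodities; the unit-capacity edge $\{s_i,n_i\}$ also caps commodity $i$'s throughput at $1$.

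On this instance one checks the three facts $\mathcal{GF}(\{n_i\})=1$ (only commodity $i$ can use $n_i$, and it pushes at most one unit), $\mathcal{GF}(\{n_i,n_j\})=1$ for $i\neq j$ (commodities $i$ and $j$ are both forced across the single unit-capacity edge $e_{ij}$, so their flows sum to at most $1$), and $\mathcal{GF}(\{n_1,n_2,n_3\})=\tfrac{3}{2}$ (writing $f_i$ for the flow pushed for commodity $i$, the three bottleneck constraints give $f_i+f_j\leq 1$ for every pair, hence $\sum_i f_i\leq\tfrac{3}{2}$, attained at $f_1=f_2=f_3=\tfrac12$). Then with $S=\{n_1\}\subsetneq T=\{n_1,n_2\}$ and $v=n_3\notin T$,
\[
\mathcal{GF}(S\cup\{v\})-\mathcal{GF}(S)=1-1=0<\tfrac{1}{2}=\tfrac{3}{2}-1=\mathcal{GF}(T\cup\{v\})-\mathcal{GF}(T),
\]
which contradicts submodularity. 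For undirected graphs the same triangle-of-bottlenecks idea works, but undirected edges may open up detours around the $e_{ij}$; I would arrange each $e_{ij}$ to be a genuine edge-cut separating $s_i$ from $t_i$ and $s_j$ from $t_j$, so every route of commodities $i$ and $j$ is still forced through it, after which the three $\mathcal{GF}$ values above are unchanged and the same violation holds. (The same construction, incidentally, is where the $(1-\tfrac{1}{e})$ greedy guarantee breaks, consistent with \cref{cor:approximation}.)

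The main obstacle is purely the combinatorial design of the instance: one must ensure the graph really forces commodity $i$'s traffic through exactly the two bottleneck edges $e_{ij},e_{ik}$, with no ``spurious'' $s_i$--$t_i$ route that bypasses one of them — in the directed case this comes from the layered layout and degree-one sources, and in the undirected case from the edge-cut property just described. Everything else, namely the monotonicity argument and the arithmetic on the three-element instance, is routine once the graph is pinned down.
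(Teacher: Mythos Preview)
Your monotonicity argument is the same as the paper's: enlarging $C$ only enlarges the feasible path set, so the optimum cannot drop.

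For non-submodularity, your proposal is correct but takes a genuinely different route. The paper gives a fully explicit nine-node instance with a \emph{path} of two capacity-$2$ bottlenecks $(v_1,v_2)$ and $(v_2,v_3)$: commodity~1 (demand~2) traverses both, commodity~2 (demand~1) only the first, commodity~3 (demand~1) only the second. Using the \emph{sources} $s_i$ themselves as group members, one reads off $\mathcal{GF}(\{s_1\})=\mathcal{GF}(\{s_1,s_2\})=\mathcal{GF}(\{s_1,s_3\})=2$ while $\mathcal{GF}(\{s_1,s_2,s_3\})=3$, and the same graph with undirected edges works verbatim. Your construction instead uses a \emph{triangle} of three unit-capacity bottlenecks $e_{12},e_{13},e_{23}$, forces each commodity through two of them, and reads off $1,1,1,\tfrac{3}{2}$ at private nodes $n_i$. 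This buys you a clean ``fractional matching on an odd cycle'' picture and the half-integral optimum, but it costs you one extra bottleneck and, more importantly, leaves the actual layered graph (and especially the undirected edge-cut version) unspecified---you yourself flag this as the main obstacle. The paper's instance is smaller, completely concrete, and requires no further design work in either the directed or undirected case. Both arguments are valid; the paper's is shorter to write down, yours is perhaps more memorable structurally.
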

\begin{proof}
For monotonicity, note that adding a node can never decrease the maximum group flow, since an additional node does not decrease the number of available paths for the flow (either it increases them or it leaves their number unchanged).
\begin{figure}[h]
	\begin{centering}	\textsf{\includegraphics[width=0.6\linewidth]{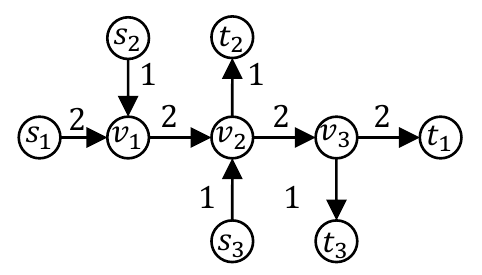}}
		\par\end{centering}
	\caption{Illustration of Lemma \ref{monot_submod}.}
	\label{fig:8}
\end{figure}
To show that $\mathcal{GF}$ is not submodular in directed graphs, it is sufficient to construct a proper counterexample. To this goal, consider the directed multi-commodity flow network of Figure \ref{fig:8} with three commodities: commodity $(s_1,t_1)$ with a demand of 2 units, and commodities $(s_2,t_2)$ and $(s_3,t_3)$, each with a unit demand. Assume the following two sets $S=\{s_1\}$ and $T=\{s_1,s_2\}$, with $S\subset T$. For the corresponding group multi-commodity flows it holds that $\mathcal{GF}(S)=\mathcal{GF}(T)=2$. The reason is that edge $(v_1,v_2)$ with a capacity of 2 acts as the bottleneck that limits the total network flow to a maximum of 2. Consider now node $s_3$. We can again argue that $\mathcal{GF}(S\cup\{s_3\})=2$, since edge $(v_2,v_3)$ with capacity 2 acts as the bottleneck that limits the total group flow to 2 units. However, note that $\mathcal{GF}(T\cup\{s_3\})=3$: commodity $(s_1,t_1)$ sends one unit of flow along $s_1\to v_1\to v_2\to v_3\to t_1$, commodity $(s_2,t_2)$ sends one unit along $s_2\to v_1\to v_2\to t_2$, and commodity $(s_3,t_3)$ sends one unit along $s_3\to v_2\to v_3\to t_3$. This suggests that $\mathcal{GF}(T\cup\{s_3\})-\mathcal{GF}(T)=3-2=1>\mathcal{GF}(S\cup\{s_3\})-\mathcal{GF}(S)=1-1=0$, which implies that the group multi-commodity flow $\mathcal{GF}$ is not a submodular function.
Finally, to show that $\mathcal{GF}$ is not submodular in undirected graphs, we can use exactly the same counterexample as above, except that the edges are undirected this time.
\end{proof}

One implication of Lemma \ref{monot_submod} is that we cannot use the standard greedy algorithm by Nemhauser et al. \cite{nwf1978} to find a $(1-\frac{1}{e})$-approximate solution to the group flow. We note nevertheless that the greedy algorithm would in any case not be practical for directed graphs, given that computing the maximum $w$-flow is already NP-hard by Proposition \ref{prop:NPhardness}.

\section{Related Work}
\label{sec:related}

\subsection{Segment Routing and TE}
\label{sec:related-te}
Segment routing \cite{segment1, segment2, segment3} is a recent paradigm that facilitates packet forwarding via a series of segments. 
Segment routing has been explored with TE. Bhatia et al.
\cite{bhatia2015optimized} apply 2-segment routing to TE, where any logical
path contains only one middlepoint and thus two segments. Hartert et al.
\cite{hartert2015solving,HVSB15} propose heuristics to solve various TE
problems with segment routing. 
Contrary to these works, our goal is 
to study the fundamentals of several variants of node-constrained TE for both directed and undirected graphs. 
Aubry et al. \cite{aubry2016scmon} propose to use
segment routing for continuous monitoring of the data plane of the network
with a single box. Segment routing is used to force probe packets to traverse
specific paths. Giorgetti et al. \cite{giorgetti2015path} propose algorithms
for segment routing label stack computation that guarantee minimum label stack
depth. Such use cases are beyond our work.

TE has been extensively studied in carrier networks 
\cite{EJLW01,KKDC05,WXQY06,FT00,HBCR07,HVSB15}, and recently in data center backbone WANs 
\cite{JKMO13,HKMZ13,LKMZ14,GHCR13} with software defined networking 
\cite{FRZ13}. In general it is assumed that TE can use any valid path in the network, or any path from a predetermined set of paths. Node-constrained TE is clearly different.

\subsection{Graph Centrality}
\label{rel:centrality}
The centrality concept from graph theory and network analysis
\cite{Newman2010} identifies the most important vertices in a graph.
Centrality was first developed in social network analysis \cite{F1977,b87} to
determine the most influential nodes.
We review two relevant centrality metrics.
\textit{Betweenness} centrality characterizes the power of a node in terms of the number of shortest paths that go through that node for a randomly picked source-destination pair. 
Brande's algorithm can compute this centrality in polynomial space and time
\cite{B2001}.
\textit{Closeness} centrality \cite{freeman1978} of a node is calculated as the sum of the length of the shortest paths between that node and all other nodes in the graph. 
As opposed to the aforementioned \textit{individual} centrality, the \textit{group} centrality of a group of nodes $C\subseteq V$ refers to the combined centrality of the group \cite{eb1999}. 
Group betweenness centrality can be approximated within a factor $1-\frac{1}{e}$ to the optimal 
\cite{depz2009,pyeb2009}. 
Group closeness centrality can also be approximated within a factor of $1-\frac{1}{e}$ using the standard greedy algorithm 
\cite{chen2016}.


Note that graph centralities have been applied to routing in some SDN problems, such as service chain embedding \cite{Lukovszki2015}
 and incremental SDN deployment 
\cite{Lukovszki2016,Levin2014}. 
Solutions to these problems are based on degree centralities, and use greedy approximation algorithms exploiting submodularity \cite{Lukovszki2016}. 

\section{Conclusion}
\label{sec:conclusion}

In this work we study the fundamentals of node-constrained TE, where the traffic is
constrained to go through specific middlepoints. We show that the general
node-constrained TE problem is NP-hard for directed graphs, but strongly
polynomial for undirected graphs. Furthermore, node-constrained TE with
shortest paths is weakly polynomial, but its acyclic variant is generally NP-hard. 
An application of node-constrained TE concerns flow centrality, whose computational complexity we derive for both directed and undirected graphs. Lastly we introduce and study group multi-commodity flow centrality.
 
Our work is important because node-constrained TE has wide applicability in
emerging networking technologies such as segment routing but also because of its direct connection to the flow centrality concept.
Our hardness results
hint at the practical limitations of many variants of node-constrained TE. For
this reason, an important direction for future research is the development of practical algorithms for the various computationally hard TE variants as well as the group multi-commodity flow. 



\section*{Acknowledgment}
We thank the anonymous reviewers whose invaluable comments and suggestions helped improve and clarify this manuscript.


\bibliographystyle{IEEEtranS}
\balance
\bibliography{IEEEabrv,bib/reference}

\begin{thebibliography}{10}
\providecommand{\url}[1]{#1}
\csname url@samestyle\endcsname
\providecommand{\newblock}{\relax}
\providecommand{\bibinfo}[2]{#2}
\providecommand{\BIBentrySTDinterwordspacing}{\spaceskip=0pt\relax}
\providecommand{\BIBentryALTinterwordstretchfactor}{4}
\providecommand{\BIBentryALTinterwordspacing}{\spaceskip=\fontdimen2\font plus
\BIBentryALTinterwordstretchfactor\fontdimen3\font minus
  \fontdimen4\font\relax}
\providecommand{\BIBforeignlanguage}[2]{{%
\expandafter\ifx\csname l@#1\endcsname\relax
\typeout{** WARNING: IEEEtranS.bst: No hyphenation pattern has been}%
\typeout{** loaded for the language `#1'. Using the pattern for}%
\typeout{** the default language instead.}%
\else
\language=\csname l@#1\endcsname
\fi
#2}}
\providecommand{\BIBdecl}{\relax}
\BIBdecl

\bibitem{amo1993}
R.~K. Ahuja, T.~L. Magnanti, and J.~B. Orlin, \emph{Network Flows: Theory,
  Algorithms, and Applications}.\hskip 1em plus 0.5em minus 0.4em\relax
  Prentice-Hall, Inc., 1993.

\bibitem{aubry2016scmon}
F.~Aubry, D.~Lebrun, S.~Vissicchio, M.~T. Khong, Y.~Deville, and
  O.~Bonaventure, ``{SCMon: Leveraging Segment Routing to Improve Network
  Monitoring},'' in \emph{Proc.~IEEE INFOCOM}, 2016.

\bibitem{berczi}
K.~B\'erczi and Y.~Kobayashi, ``The directed disjoint shortest paths problem,''
  in \emph{EGRES Technical Reports series}, 2016, no. TR-2016-13.

\bibitem{bhatia2015optimized}
R.~Bhatia, F.~Hao, M.~Kodialam, and T.~V. Lakshman, ``{Optimized Network
  Traffic Engineering using Segment Routing},'' in \emph{Proc.~IEEE INFOCOM},
  2015.

\bibitem{b87}
P.~Bonacich, ``{Power and Centrality: A Family of Measures},'' \emph{American
  Journal of Sociology}, vol.~92, no.~5, pp. 1170--1182, 1987.

\bibitem{B2001}
U.~Brandes, ``A faster algorithm for betweenness centrality,'' \emph{Journal of
  Mathematical Sociology}, vol.~25, pp. 163--177, 2001.

\bibitem{chen2016}
C.~Chen, W.~Wang, and X.~Wang, \emph{Efficient Maximum Closeness Centrality
  Group Identification}, 2016, pp. 43--55.

\bibitem{depz2009}
S.~Dolev, Y.~Elovici, R.~Puzis, and P.~Zilberman, ``Incremental deployment of
  network monitors based on group betweenness centrality,'' \emph{Inf. Process.
  Lett.}, vol. 109, no.~20, pp. 1172--1176, 2009.

\bibitem{t1998}
T.~Eilam-Tzoreff, ``The disjoint shortest paths problem,'' \emph{Discrete Appl.
  Math.}, vol.~85, no.~2, pp. 113--138, 1998.

\bibitem{EJLW01}
A.~Elwalid, C.~Jin, S.~Low, and I.~Widjaja, ``{MATE: MPLS Adaptive Traffic
  Engineering},'' in \emph{Proc.~IEEE INFOCOM}, 2001.

\bibitem{eis1975}
S.~Even, A.~Itai, and A.~Shamir, ``On the complexity of time table and
  multi-commodity flow problems,'' in \emph{Proceedings of the 16th Annual
  Symposium on Foundations of Computer Science}, 1975, pp. 184--193.

\bibitem{Even1975}
------, ``On the complexity of time table and multi-commodity flow problems,''
  in \emph{Proceedings of the 16th Annual Symposium on Foundations of Computer
  Science}, ser. SFCS '75, 1975, pp. 184--193.

\bibitem{eb1999}
M.~G. Everett and S.~P. Borgatti, ``The centrality of groups and classes,''
  \emph{The Journal of Mathematical Sociology}, vol.~23, no.~3, pp. 181--201,
  1999.

\bibitem{FRZ13}
N.~Feamster, J.~Rexford, and E.~Zegura, ``The road to {SDN: An} intellectual
  history of programmable networks,'' \emph{ACM Queue}, vol.~11, no.~12, pp.
  20:20--20:40, December 2013.

\bibitem{f1998}
U.~Feige, ``{A Threshold of Ln N for Approximating Set Cover},'' \emph{Journal
  of ACM}, vol.~45, no.~4, pp. 634--652, 1998.

\bibitem{segment3}
C.~Filsfils, S.~Previdi, A.~Bashandy, and Decraene, ``{Segment routing with
  MPLS data plane},'' \emph{Internet Engineering Task Force, Internet Draft
  (Work in Progress) draft-ietf-spring-segment-routing-mpls-00}, 2014.

\bibitem{segment2}
C.~Filsfils, P.~Francois, and Previdi, ``Segment routing use cases,'' 2013.

\bibitem{segment1}
C.~Filsfils, N.~K. Nainar, and Pignataro, ``{The Segment Routing
  Architecture},'' in \emph{Proc.~IEEE Globecom}, 2015.

\bibitem{fhw1980}
S.~Fortune, J.~Hopcroft, and J.~Wyllie, ``The directed subgraph homeomorphism
  problem,'' \emph{Theoretical Computer Science}, vol.~10, no.~2, pp. 111--121,
  1980.

\bibitem{FT00}
B.~Fortz and M.~Thorup, ``Internet traffic engineering by optimizing {OSPF}
  weights,'' in \emph{Proc.~IEEE INFOCOM}, 2000.

\bibitem{fbw1991}
L.~C. Freeman, S.~P. Borgatti, and D.~R. White, ``{Centrality in valued graphs:
  A measure of betweenness based on network flow},'' \emph{Social Networks},
  vol.~13, no.~2, pp. 141--154, 1991.

\bibitem{F1977}
L.~C. Freeman, ``{A Set of Measures of Centrality Based on Betweenness},''
  \emph{Sociometry}, vol.~40, no.~1, pp. 35--41, 1977.

\bibitem{freeman1978}
------, ``Centrality in social networks: conceptual clarification,''
  \emph{Social Networks}, pp. 215--239, 1978.

\bibitem{GHCR13}
A.~Ghosh, S.~Ha, E.~Crabbe, and J.~Rexford, ``{Scalable Multi-Class Traffic
  Management in Data Center Backbone Networks},'' \emph{{IEEE} J. Sel. Areas
  Commun.}, vol.~31, no.~12, pp. 2673--2684, 2013.

\bibitem{giorgetti2015path}
A.~Giorgetti, P.~Castoldi, F.~Cugini, J.~Nijhof, F.~Lazzeri, and G.~Bruno,
  ``Path encoding in segment routing,'' in \emph{Proc.~IEEE Globecom}, 2015.

\bibitem{hartert2015solving}
R.~Hartert, P.~Schaus, S.~Vissicchio, and O.~Bonaventure, ``{Solving Segment
  Routing Problems with Hybrid Constraint Programming Techniques},'' in
  \emph{International Conference on Principles and Practice of Constraint
  Programming}, 2015.

\bibitem{HVSB15}
R.~Hartert, S.~Vissicchio, P.~Schaus, O.~Bonaventure, C.~Filsfils, T.~Telkamp,
  and P.~Francois, ``{A Declarative and Expressive Approach to Control
  Forwarding Paths in Carrier-Grade Networks},'' in \emph{Proc.~ACM SIGCOMM},
  2015.

\bibitem{HBCR07}
J.~He, M.~Bresler, M.~Chiang, and J.~Rexford, ``Towards robust multi-layer
  traffic engineering: Optimization of congestion control and routing,''
  \emph{{IEEE} J. Sel. Areas Commun.}, vol.~25, no.~5, pp. 868--880, June 2007.

\bibitem{HKMZ13}
C.-Y. Hong, S.~Kandula, R.~Mahajan, M.~Zhang, V.~Gill, M.~Nanduri, and
  R.~Wattenhofer, ``Achieving high utilization with software-driven {WAN},'' in
  \emph{Proc.~ACM SIGCOMM}, 2013.

\bibitem{JKMO13}
S.~Jain, A.~Kumar, S.~Mandal, J.~Ong, L.~Poutievski, A.~Singh, S.~Venkata,
  J.~Wanderer, J.~Zhou, M.~Zhu, J.~Zolla, U.~H\"{o}lzle, S.~Stuart, and
  A.~Vahdat, ``{B4}: Experience with a globally-deployed software defined
  {WAN},'' in \emph{Proc.~ACM SIGCOMM}, 2013.

\bibitem{KKDC05}
S.~Kandula, D.~Katabi, B.~Davie, and A.~Charny, ``{Walking the Tightrope:
  Responsive Yet Stable Traffic Engineering},'' in \emph{Proc.~ACM SIGCOMM},
  2005.

\bibitem{K1984}
N.~Karmarkar, ``A new polynomial-time algorithm for linear programming,'' in
  \emph{Proc.~ACM STOC}, 1984.

\bibitem{Karp1975}
R.~M. Karp, ``On the computational complexity of combinatorial problems,''
  \emph{Netw.}, vol.~5, no.~1, pp. 45--68, Jan. 1975.

\bibitem{K1980}
L.~Khachiyan, ``Polynomial algorithms in linear programming,'' \emph{USSR
  Computational Mathematics and Mathematical Physics}, vol.~20, no.~1, pp.
  53--72, 1980.

\bibitem{KR16}
J.~Kurose and K.~Ross, \emph{{Computer Networking: A Top-Down Approach}},
  7th~ed.\hskip 1em plus 0.5em minus 0.4em\relax Pearson, 2016.

\bibitem{Lapaugh1980}
A.~S. Lapaugh and R.~L. Rivest, ``The subgraph homeomorphism problem,''
  \emph{Journal of Computer and System Sciences}, vol.~20, no.~2, pp. 133--149,
  1980.

\bibitem{Levin2014}
D.~Levin, M.~Canini, S.~Schmid, F.~Schaffert, and A.~Feldmann, ``{Panopticon:
  Reaping the Benefits of Incremental SDN Deployment in Enterprise Networks},''
  in \emph{Proc.~USENIX ATC}, 2014.

\bibitem{LKMZ14}
H.~Liu, S.~Kandula, R.~Mahajan, M.~Zhang, and D.~Gelernter, ``Traffic
  engineering with forward fault correction,'' in \emph{Proc.~ACM SIGCOMM},
  2014.

\bibitem{Lukovszki2016}
T.~Lukovszki, M.~Rost, and S.~Schmid, ``It's a match!: Near-optimal and
  incremental middlebox deployment,'' \emph{SIGCOMM Comput. Commun. Rev.},
  vol.~46, no.~1, pp. 30--36, 2016.

\bibitem{Lukovszki2015}
T.~Lukovszki and S.~Schmid, ``Online admission control and embedding of service
  chains,'' in \emph{Post-Proceedings of the 22Nd International Colloquium on
  Structural Information and Communication Complexity - Volume 9439}, ser.
  SIROCCO 2015, 2015, pp. 104--118.

\bibitem{nwf1978}
G.~L. Nemhauser, L.~A. Wolsey, and M.~L. Fisher, ``An analysis of
  approximations for maximizing submodular set functions---i,''
  \emph{Mathematical Programming}, vol.~14, no.~1, pp. 265--294, 1978.

\bibitem{Newman2010}
M.~Newman, \emph{Networks: An Introduction}.\hskip 1em plus 0.5em minus
  0.4em\relax Oxford University Press, Inc., 2010.

\bibitem{ps1982}
C.~H. Papadimitriou and K.~Steiglitz, \emph{Combinatorial Optimization:
  Algorithms and Complexity}.\hskip 1em plus 0.5em minus 0.4em\relax
  Prentice-Hall, Inc., 1982.

\bibitem{pyeb2009}
R.~Puzis, D.~Yagil, Y.~Elovici, and D.~Braha, ``Collaborative attack on
  internet users' anonymity,'' \emph{Internet Research}, vol.~19, pp. 60--77,
  2009.

\bibitem{Robertson1995}
N.~Robertson and P.~D. Seymour, ``Graph minors. xiii: The disjoint paths
  problem,'' \emph{J. Comb. Theory Ser. B}, vol.~63, no.~1, pp. 65--110, Jan.
  1995.

\bibitem{schrijver-book}
A.~Schrijver, \emph{Combinatorial Optimization - Polyhedra and
  Efficiency}.\hskip 1em plus 0.5em minus 0.4em\relax Springer, 2003.

\bibitem{Tardos1986}
E.~Tardos, ``A strongly polynomial algorithm to solve combinatorial linear
  programs,'' \emph{Oper. Res.}, vol.~34, no.~2, pp. 250--256, 1986.

\bibitem{WXQY06}
H.~Wang, H.~Xie, L.~Qiu, Y.~R. Yang, Y.~Zhang, and A.~Greenberg, ``{{COPE:}
  Traffic Engineering in Dynamic Networks},'' in \emph{Proc.~ACM SIGCOMM},
  2006.

\end{thebibliography}

\end{document}